\documentclass[nojss]{jss}

%%%%%%%%%%%%%%%%%%%%%%%%%%%%%%
%% declarations for jss.cls %%%%%%%%%%%%%%%%%%%%%%%%%%%%%%%%%%%%%%%%%%
%%%%%%%%%%%%%%%%%%%%%%%%%%%%%%

%% almost as usual
\author{Saptarshi Chakraborty\\ Memorial Sloan Kettering Cancer Center \And
        Samuel W. K. Wong\\ University of Waterloo}
\title{\pkg{BAMBI}: An \proglang{R} package for Fitting Bivariate Angular Mixture Models}

%% for pretty printing and a nice hypersummary also set:
\Plainauthor{Saptarshi Chakraborty, Samuel W. K. Wong} %% comma-separated
\Plaintitle{BAMBI: An R package for Fitting Bivariate Angular Mixture Models} %% without formatting
\Shorttitle{\pkg{BAMBI}: Bivariate Angular Mixture Models} %% a short title (if necessary)

%% an abstract and keywords
\Abstract{
  Statistical analyses of directional or angular data have applications in a variety of fields, such as geology, meteorology and bioinformatics. There is substantial literature on descriptive and inferential techniques for univariate angular data, with the bivariate (or more generally, multivariate) cases receiving more attention in recent years. More specifically, the bivariate wrapped normal, von Mises sine and von Mises cosine distributions, and mixtures thereof, have been proposed for practical use. However, there is a lack of software implementing these distributions and the associated inferential techniques. 
  In this article, we introduce \pkg{BAMBI}, an \proglang{R} package for analyzing bivariate (and univariate) angular data. We implement random data generation, density evaluation, and computation of theoretical summary measures (variances and correlation coefficients) for the three aforementioned bivariate angular distributions,  as well as two univariate angular distributions: the univariate wrapped normal and the univariate von Mises distribution.  The major contribution of \pkg{BAMBI} to statistical computing is in providing Bayesian methods for modeling angular data using finite mixtures of these distributions. We also provide functions for visual and numerical diagnostics and Bayesian inference for the fitted models. In this article, we first provide a brief review of the distributions and techniques used in \pkg{BAMBI}, then describe the capabilities of the package, and finally conclude with demonstrations of mixture model fitting using \pkg{BAMBI} on the two real datasets included in the package, one univariate and one bivariate.
}

\Keywords{angular data, mixture models, bivariate data, von Mises distribution, wrapped normal distribution, \proglang{R},
Hamiltonian Monte Carlo, MCMC, Gibbs Sampler}
\Plainkeywords{angular data, mixture models, bivariate data, von Mises distribution, wrapped normal distribution, R, Hamiltonian Monte Carlo, MCMC, Gibbs Sampler} %% without formatting
%% at least one keyword must be supplied

%% publication information
%% NOTE: Typically, this can be left commented and will be filled out by the technical editor
%% \Volume{50}
%% \Issue{9}
%% \Month{June}
%% \Year{2012}
%% \Submitdate{2012-06-04}
%% \Acceptdate{2012-06-04}

%% The address of (at least) one author should be given
%% in the following format:
\Address{
  Saptarshi Chakraborty\\
  Department of Epidemiology and Biostatistics\\
  Memorial Sloan Kettering Cancer Center\\
  E-mail: \email{chakrabs@mskcc.org}\\ \\ 
  Samuel W.K. Wong \\ 
  Department of Statistics and Actuarial Science \\
  University of Waterloo \\
  E-mail: \email{samuel.wong@uwaterloo.ca}
}
%% It is also possible to add a telephone and fax number
%% before the e-mail in the following format:
%% Telephone: +43/512/507-7103
%% Fax: +43/512/507-2851

%% for those who use Sweave please include the following line (with % symbols):
%% need no \usepackage{Sweave.sty}

%-------------------------------------------------------------------------------------------------------
% additional packages
%-------------------------------------------------------------------------------------------------------
%\usepackage[numbers]{natbib}
%\usepackage[colorlinks,citecolor=blue,urlcolor=blue]{hyperref}
\usepackage{amsthm,amsmath}
\usepackage[utf8]{inputenc}
\usepackage{makeidx}
\usepackage{amsfonts}
\usepackage{amssymb}
\usepackage{gensymb}
\usepackage{caption}
\usepackage[labelformat=simple]{subcaption}
\usepackage{bbm}  % for a better indicator (1) function
\usepackage{lipsum}
\usepackage{enumitem}  % for better enumerate styles
\usepackage[english]{babel} % Document language - required for customizing Section titles
\usepackage{bm} % Required for bold math symbols
\usepackage[margin = 1in]{geometry}
\usepackage[T1]{fontenc}	 % For correct hyphenation and T1 encoding
\usepackage{lmodern}
\usepackage{amsthm} % Required for theorem environments
\usepackage{graphicx} % Required for including images in figures
\usepackage{ltablex}
\usepackage{booktabs} % Required for horizontal rules in tables
\usepackage{multicol} % Required for creating multiple columns in tables
\usepackage{multirow} % Required for creating multiple rows in tables
\usepackage{tocstyle} % Required for customizing the table of contents
\usepackage{filecontents} % create files for ltxtable
\usepackage[toc,page]{appendix}
\usepackage[section]{placeins} % makes sure floats (plots) don't go outside section

%\usepackage{titlesec} % Required for making sub-subsubsection
%\setcounter{secnumdepth}{4}
%
%\titleformat{\paragraph}
%{\normalfont\normalsize\bfseries}{\theparagraph}{1em}{}
%\titlespacing*{\paragraph}

%-------------------------------------------------------------------------------------------------------

%-------------------------------------------------------------------------------------------------------
% short hand and other macros
%-------------------------------------------------------------------------------------------------------

\newcommand\numbereqn{\addtocounter{equation}{1}\tag{\theequation}}
\newcommand{\N}{\text{N}}
\newcommand{\wn}{\text{WN}}
\newcommand{\bwn}{\text{WN}_2}
\newcommand{\vm}{\text{vM}}
\newcommand{\vms}{\text{vM}_2^s}
\newcommand{\vmc}{\text{vM}_2^c}

\newcommand{\pb}{\bm{p}}
\newcommand{\qb}{\bm{q}}
\newcommand{\rb}{\bm{r}}
\newcommand{\yb}{\bm{y}}
\newcommand{\tru}{\text{true}}
\newcommand{\psib}{\bm{\psi}}
\newcommand{\alphab}{\bm{\alpha}}
\newcommand{\etab}{\bm{\eta}}
\newcommand{\zetab}{\bm{\zeta}}
\newcommand{\Psib}{\bm{\Psi}}
\newcommand{\thetab}{\bm{\theta}}
\newcommand{\mub}{\bm{\mu}}
\newcommand{\omegab}{\bm{\omega}}
\newcommand{\Z}{\mathbb{Z}}
\newcommand{\R}{\mathbb{R}}
\newcommand{\ft}{\tilde{f}}
\newcommand{\pt}{\tilde{p}}

\newcommand{\qt}{\tilde{q}}
\newcommand{\lt}{\tilde{l}}
\newcommand{\dat}{\text{data}}
\newcommand{\bambi}{\pkg{BAMBI }}
\newcommand{\one}{\mathbbm{1}}
\DeclareMathOperator{\var}{var}
\newcommand{\js}{\text{JS}}
\newcommand{\fl}{\text{FL}}

\newtheorem{prop}{Proposition}[section]

\allowdisplaybreaks[1]
%% end of declarations %%%%%%%%%%%%%%%%%%%%%%%%%%%%%%%%%%%%%%%%%%%%%%%

\begin{document}
%% include your article here, just as usual
%% Note that you should use the \pkg{}, \proglang{} and \code{} commands.

\section{Introduction} \label{intro}

Statistical analyses of angular or directional data have found applications in a variety of fields, such as geology (earth's magnetic poles), meteorology (wind directions) and bioinformatics (backbone structures of proteins). Directional data can be univariate or multivariate, and one way of representing such data is via angles measured on a circle $[0, 2\pi)$ (element-wise when multivariate), and hence the name \textit{angular}. Angular methods are also applicable to any interval that wraps around (e.g., $[0, L)$ or $[-L/2, L/2)$ for some $L > 0$) when transformed to the circle $[0, 2\pi)$.  The wraparound condition on the support invalidates direct applicability of many standard statistical methods.  There is substantial literature devoted to the development of descriptive and inferential techniques for directional data (see, e.g.,  \cite{mardia:jupp:2009, mardia:1972, fisher:1995}), with the traditional univariate case as the primary focus, although the bivariate case is gaining increasing interest \cite{singh:2002, mardia:2007} along with the emergence of new applications. Bivariate angular data can now be found in  a variety of modern scientific problems, with many notable applications arising from the field of computational biology \citep{mardia:2007, boomsma:2008, lennox:2009, bhattacharya:2015}. A major area of research in protein bioinformatics involves modeling and predicting protein 3-D structures, which requires proper handling of the paired backbone torsion angles. Formal analyses of these bivariate angle pairs thus require rigorous statistical techniques and models.

A unique feature in the modeling of  directional data is the use of angular probability distributions, or  mixtures thereof (see Section~\ref{mixmodels}), which are inherently different from their linear (Euclidean) counterparts because of the wraparound nature of their supports.  Bayesian methods provide flexible tools for analyzing and modeling  such data. First, one may incorporate prior information, if available, into modeling. Second, one may use powerful computational methods, i.e., Markov chain Monte Carlo (MCMC, see Section~\ref{bayesintro}) for sampling from the posterior, to fit such models and assess the fitted models.  Third, one may readily compute posterior quantities of interest while coherently accounting for uncertainty in the model parameters.  Within this context, this package was developed for fitting \textbf{B}ivariate \textbf{A}ngular \textbf{M}ixtures using \textbf{B}ayesian \textbf{I}nference (hence, the package name \pkg{BAMBI}).  In \bambi we implement the two most popular angular distributions, namely the wrapped normal (or Gaussian) and the von Mises distributions, and consider both univariate and bivariate versions of these.  \bambi  provides functionality for modeling univariate and bivariate angular data using these distributions, and for fitting finite mixture models of these distributions.   We first introduce the basics of these distributions and mixture models. It should be noted that the bivariate distributions considered in this paper have support $[0, 2\pi)^2$ (i.e., on a \textit{torus}), which are distinct from those defined on the surface of the \textit{unit sphere}, such as the von Mises - Fisher distribution.

\subsection{Wrapped Normal Distributions} \label{wnmodels}
For univariate continuous data, the angular analogue of the normal distribution on the real line is the wrapped normal distribution obtained by \textit{wrapping} a normal random variable around the unit circle  (see, e.g., \cite{jona-lasinio:2012}). Formally, let $X$ be a normal random variable with mean $\mu$ and variance $\sigma^2 > 0$. Then the distribution of $\psi = X \mod 2\pi$ is called the \emph{wrapped normal distribution} with mean $\mu$ and variance $\sigma^2$ and is denoted by $\wn(\mu, \sigma^2)$. The density of $\psi \sim \wn(\mu, \sigma^2)$ is given by:
\begin{align}\label{wn_var_rep}
f_\wn(\psi|\mu, \sigma) = \frac{1}{\sigma \sqrt{2\pi}} \sum_{\omega \in \Z} \exp \left[-\frac{1}{2 \sigma^2}(\psi - \mu - 2\pi \omega)^2 \right]; \quad \psi \in [0, 2\pi)
\end{align}
where $\Z$ denotes the set of all integers. Since the density contains a summation over entire $\Z$, without loss of generality, we let $\mu \in [0, 2\pi)$ to ensure identifiability. Figure~\ref{fig:wnormdenmodel} displays the univariate wrapped density with $\mu = \pi$ and $\kappa = 0.01, 1$ and 10, which shows that the density is symmetric around $\mu$ and becomes more concentrated as $\kappa$ increases.

The multivariate generalization of the above distribution is straightforward \citep{jona-lasinio:2012}.  The distribution of a random vector $\psib = (\psi_1, \cdots, \psi_p)^\top$ with probability density
\begin{align} \label{mvwn_var_rep}
\frac{1}{ \sqrt{|\Sigma|(2\pi)^p}} \sum_{\omegab \in \Z^p} \exp \left[-\frac{1}{2}\left(\psib - \mub - 2\pi \omegab \right)^\top \Sigma^{-1} \left(\psib - \mub - 2\pi \omegab \right) \right]; \quad \psib \in [0, 2\pi)^p
\end{align}
with $\mub \in [0, 2\pi)^p$ and $\Sigma$  positive definite, is called the $p$-variate wrapped normal distribution with mean vector $\mub$ and variance matrix $\Sigma$, denoted by $\wn_p(\mub, \Sigma)$.  Although (\ref{wn_var_rep}) and (\ref{mvwn_var_rep}) are the most common parameterizations of the wrapped normal distributions found in the literature, to facilitate comparability with the von Mises distribution (defined in Section~\ref{vmmodels}), we shall use the equivalent representation(s) obtained through the re-parameterization(s) $\kappa = 1/\sigma^2$ and $\Delta = \Sigma^{-1}$.  \pkg{BAMBI} handles the univariate and bivariate cases, namely $p = 1$ and $p = 2$. Thus, the form of the univariate  wrapped normal density we use is
\begin{align} \label{wn}
f_\wn(\psi|\mu, \kappa) = \sqrt{\frac{\kappa}{2\pi}} \sum_{\omega \in \Z} \exp \left[-\frac{\kappa}{2}(\psi - \mu - 2\pi \omega)^2 \right]; \quad \psi \in [0, 2\pi)
\end{align}
with $\mu \in [0, 2\pi)$ and $\kappa > 0$; and that of the bivariate density is
\begin{align*} \label{bvwn}
&\quad f_{\wn_2}(\psi_1, \psi_2 | \mu_1, \mu_2, \kappa_1, \kappa_2, \kappa_3) \\
&= \frac{\sqrt{\kappa_1 \kappa_2 - \kappa_3^2}}{2\pi} \sum_{(\omega_1, \omega_2) \in \Z^2}  \exp  \left[-\frac{1}{2} \left\lbrace  \kappa_1 (\psi_1 - \mu_1 - 2\pi \omega_1)^2 + \kappa_2 (\psi_2 - \mu_2 - 2\pi \omega_2)^2  \right. \right.\\
& \qquad \qquad \qquad \qquad \qquad \qquad \qquad \quad \left. \left. + 2 \kappa_3 (\psi_1 - \mu_1 - 2\pi \omega_1) (\psi_2 - \mu_2 - 2\pi \omega_2) \right\rbrace \right] \numbereqn
\end{align*}
where $\psi_1, \psi_2, \mu_1, \mu_2 \in [0, 2\pi)$, $\kappa_1, \kappa_2 > 0$ and $\kappa_3^2 \leq \kappa_1\kappa_2$, obtained by letting $\mub = (\mu_1, \mu_2)^\top$ and
\[
\Delta = \begin{pmatrix}
\kappa_1 & \kappa_3 \\
\kappa_3 & \kappa_2
\end{pmatrix}.
\]
Figure~\ref{fig:wnormdenmodel} plots the univariate wrapped normal densities with $\mu = \pi$ and $\kappa = 0.01, 1$ and 10, which shows that the density is symmetric around $\mu$ and becomes more concentrated as $\kappa$ increases. 
\begin{figure}[h]
	\centering
	\subcaptionbox{Wrapped normal density \label{fig:wnormdenmodel}}%
	[.485\linewidth]{\includegraphics[height=3.2in, width = 3.2in]{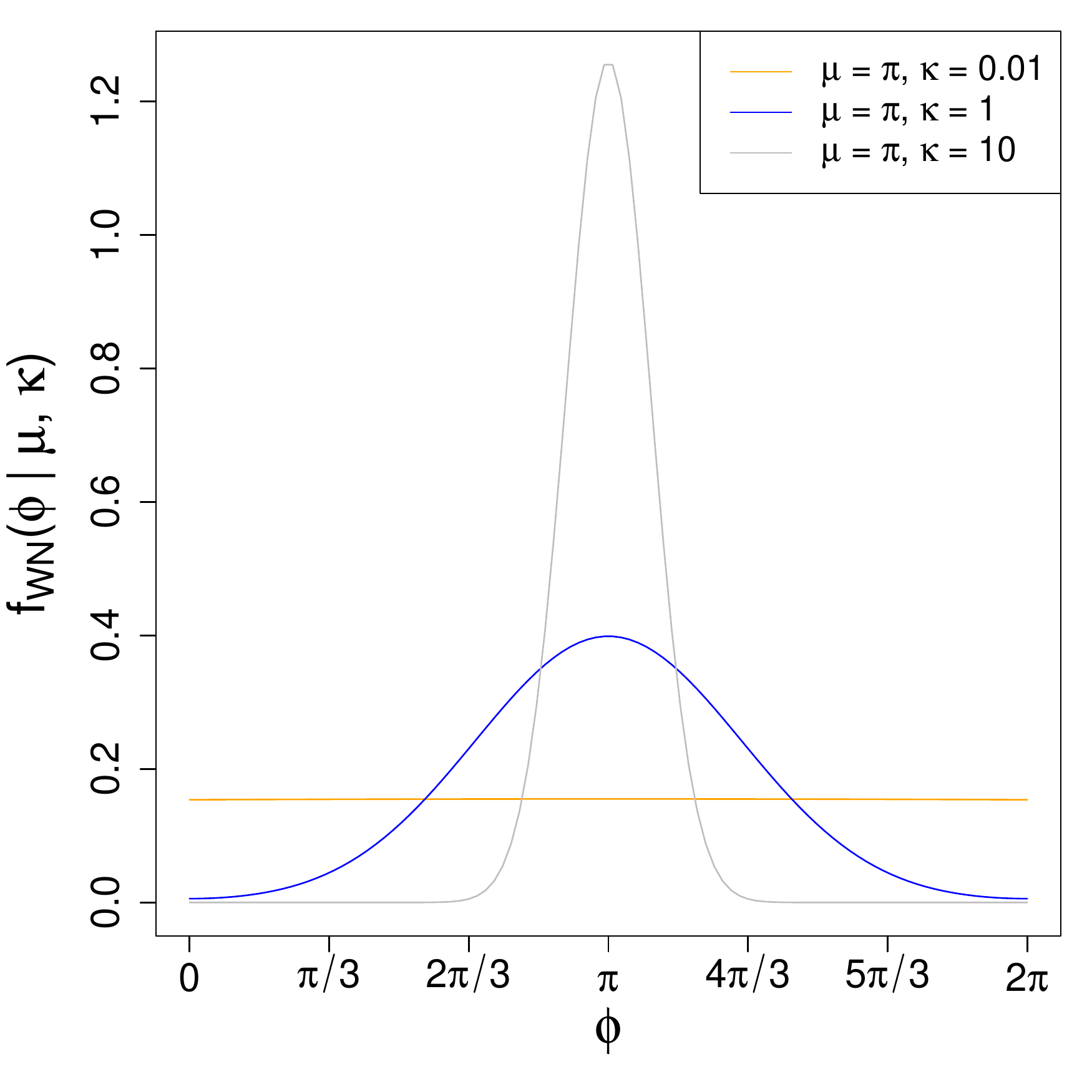}}
	\hfill
	\subcaptionbox{von Mises density \label{fig:vmdenmodel}}%
	[.485\linewidth]{\includegraphics[height=3.2in, width = 3.2in]{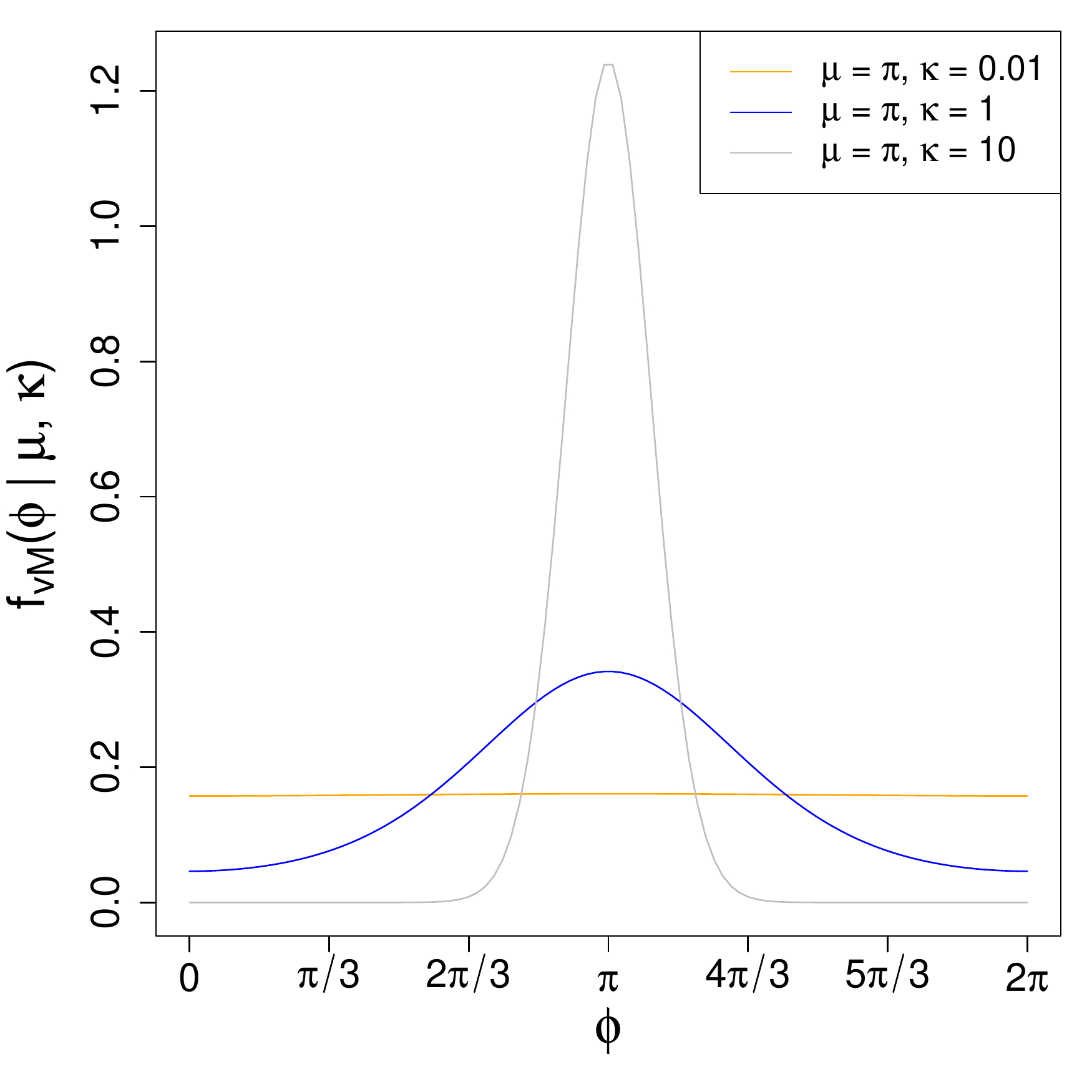}}
	\caption{Univariate wrapped normal density $f_\wn(\phi|\mu, \kappa)$ and univariate von Mises density $f_\vm(\phi|\mu, \kappa)$ with $\mu = \pi$ and different $\kappa$'s.}
	\label{fig:unvariant_den_plots}
\end{figure}

Similarly, the bivariate wrapped normal  density is also symmetric around $(\mu_1, \mu_2)$ and becomes more concentrated as $\kappa_1$ and/or $\kappa_2$ increases, while the parameter $\kappa_3$ regulates the association between the random coordinates.
This can be visualized from Figure~\ref{fig:wnorm2denmodel} displaying the surfaces of the density created via \pkg{BAMBI} function \code{surface_model},  for different parameter combinations. (Codes for generating these plots can be found in the replication \proglang{R} script for this paper.) The upper panels of Figure~\ref{fig:wnorm2denmodel} show how the density becomes more concentrated when $\kappa_1$ and $\kappa_2$ are increased (while keeping $\kappa_3$ fixed). In contrast,  the lower panels of Figure~\ref{fig:wnorm2denmodel} display density surfaces showing how the association between the random coordinates changes (from positive to negative), when $\kappa_3$ is changed (from negative to positive, since $\kappa_3$ is the diagonal element of the \emph{inverse} covariance matrix) while keeping $\kappa_1$ and $\kappa_2$ are fixed. 

\begin{figure}[h]
	\centering
	\includegraphics[width=\linewidth]{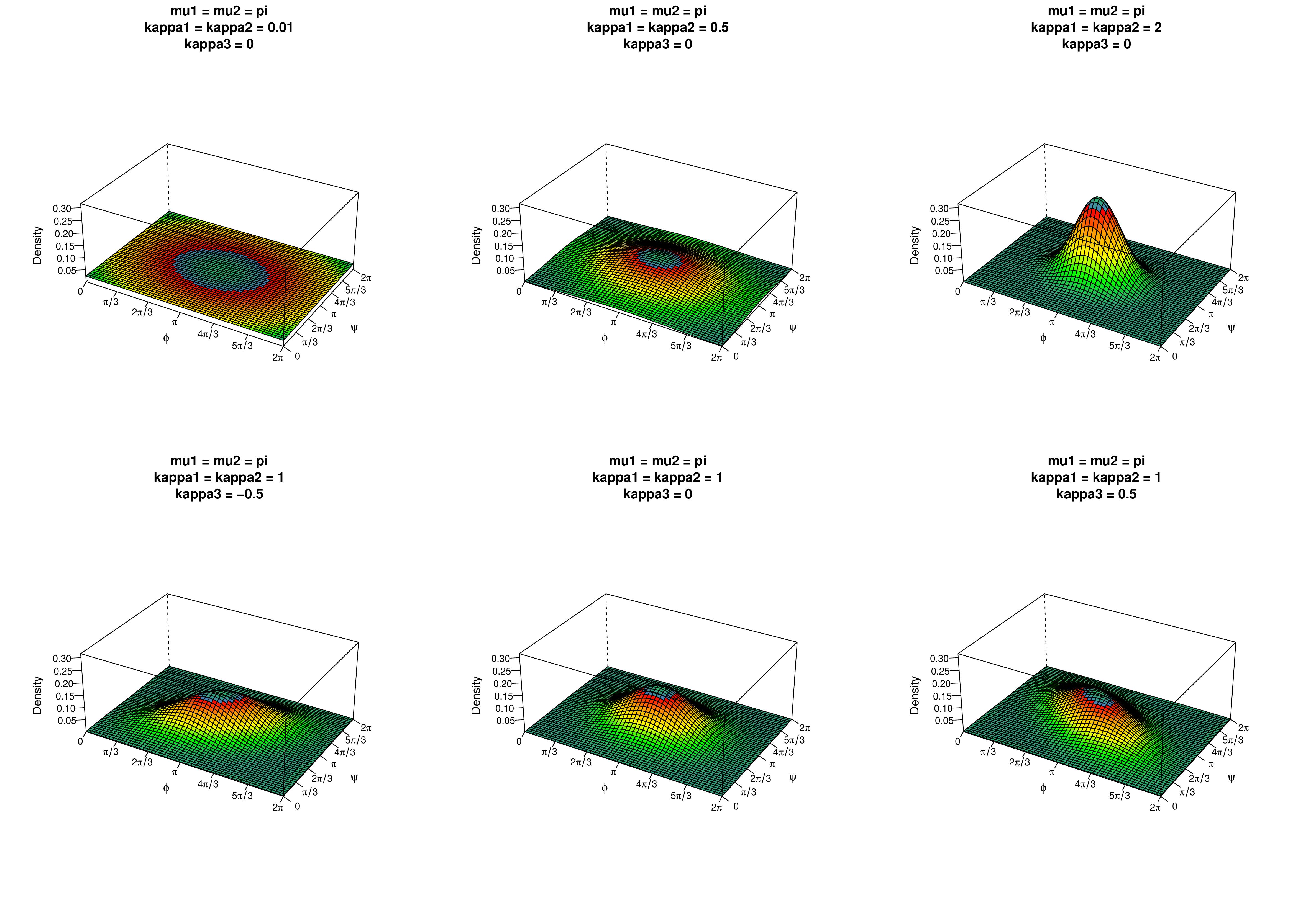}
	\caption{Bivariate wrapped normal density for $\mu_1 = \mu_2 =  \pi$ and various $\kappa_1$, $\kappa_2$, $\kappa_3$.}
	\label{fig:wnorm2denmodel}
\end{figure}

Note that when $\kappa \rightarrow 0$ (or $\Delta \rightarrow 0_{2 \times 2}$) then the distribution of $\psi = X \mod 2\pi$ converges to the uniform distribution over $[0, 2\pi)$ (or $[0, 2\pi)^2$). Hence, we shall include the cases $\kappa = 0$ and $\kappa_1 = \kappa_2 = \kappa_3 = 0$ in the support of these parameters, and define the associated densities by their limits.

The precision parameter $\kappa$ ($\kappa_1, \kappa_2$ in the bivariate case) is (are) conceptually similar to the concentration parameters in the von Mises distribution (see Section~\ref{vmmodels}). Therefore to aid comparability, we shall call $\kappa$ ($\kappa_1$ and $\kappa_2$) the concentration parameter(s) of the univariate (bivariate) wrapped normal model. In \pkg{BAMBI}, evaluation of univariate and bivariate wrapped normal densities are implemented through the function \code{dwnorm} and \code{dwnorm2} respectively. Random data  from these models can be generated using \code{rwnorm} and \code{rwnorm2} respectively.

\subsection{von Mises distributions} \label{vmmodels}
Wrapped normal models have a high computational cost in practice. Although the sum over $\Z$ in the expression for the density can be well-approximated by a sum over the set $A = \{-3,-2,-1,0,1,2,3\}$ (i.e., 3 integer displacements, covering $\pm$ 3 standard deviations from the mean), it can be seen that the number of terms in the sum grows exponentially as the dimension increases. For instance, in the bivariate case, even if $\Z$ is approximated  by set $A$, the (double) sum in the density  consists of 49 terms.

Because of this difficulty, the von Mises distribution is an alternative that is widely used; it is able to approximate the wrapped normal while being less computationally intensive \citep[p. 36]{mardia:jupp:2009}. A random variable $\psi$ is said to follow the von Mises distribution (also called  the circular normal distribution, \citet{jammalamadaka:2001})  with mean parameter $\mu$ and concentration parameter $\kappa$, denoted $\psi \sim \vm(\mu, \kappa)$, if $\psi$ has the density
\begin{align} \label{vm}
f_\vm(\psi \mid \mu, \sigma) = \frac{1}{2\pi I_0(\kappa)} \exp(\kappa \cos(\psi-\mu)); \quad \psi \in [0, 2\pi)
\end{align}
where $\mu \in [0, 2\pi)$, $\kappa \geq 0$ and $I_r(\cdot)$ denotes the modified Bessel function of the first kind and order $r$.  Letting $\kappa = 0$ makes (\ref{vm}) the uniform density over $[0, 2\pi)$, and when $\kappa \to \infty$, (\ref{vm}) converges to a normal density. An intuitive explanation of the latter result follows from the fact that when  the concentration parameter $\kappa$ is large, $\psi - \mu \approx 0$, so that $\cos (\psi - \mu) \approx 1-(\psi-\mu)^2/2$, which makes the exponent in the density (\ref{vm}) approximately proportional to the  $\N(\mu, (1/\sqrt{\kappa})^2)$ density. A formal proof can be found in  \citet[Proposition 2.2]{jammalamadaka:2001}. 

Figure~\ref{fig:vmdenmodel} plots the von Mises densities with $\mu = \pi$ and $\kappa = 0.01, 1$ and 10, which shows that the density is symmetric around $\mu$ and becomes more concentrated as $\kappa$ increases, and that the density is broadly similar to the associated univariate wrapped normal density.

A multivariate generalization for the univariate von Mises distribution is however not as straightforward as the wrapped normal distribution, as there is not a unique way of defining a  multivariate distribution with univariate von Mises-like marginals.  In the bivariate case, two versions of the bivariate von Mises distribution have been suggested for practical use, namely the sine model \citep{singh:2002} and the cosine model \citep{mardia:2007}. They are comparable to the bivariate normal model both in terms of number of parameters (five), and the interpretability of those parameters.  Other generalizations with more parameters have been studied theoretically \citep{mardia:1975, rivest:1988}.

Let $\psib = (\psi_1, \psi_2)^\top$ be a random vector on $\R^2$ with support $[0, 2\pi)^2$. Then $\psib$ is said to follow the (bivariate) von Mises sine distribution with mean parameters $\mu_1, \mu_2$, concentration parameters $\kappa_1, \kappa_2$, and association parameter $\kappa_3$, denoted $\psib \sim \vms(\mu_1, \mu_2, \kappa_1, \kappa_2, \kappa_3)$), if $\psib$ has the probability density
\begin{align*} \label{vms}
&\quad f_{\vms}(\psi_1, \psi_2 \mid \mu_1, \mu_2, \kappa_1, \kappa_2, \kappa_3) \\
&= {C_s(\kappa_1, \kappa_2, \kappa_3)} \exp [ \kappa_1 \cos(\psi_1 - \mu_1) + \kappa_2 \cos(\psi_2 - \mu_2) + \kappa_3 \sin(\psi_1 - \mu_1) \sin(\psi_2 - \mu_2) ] \numbereqn
\end{align*}
where $\kappa_1, \kappa_2 \geq 0$, $-\infty < \kappa_3 < \infty$, $\mu_1, \mu_2 \in [0, 2\pi)$ and the normalizing constant is given by
\begin{align} \label{c_vms}
C_s(\kappa_1, \kappa_2, \kappa_3)^{-1} = 4 \pi^2 \sum_{m = 0} ^\infty  {{2m}\choose{m}} \left(\frac{\kappa_3^2}{4 \kappa_1 \kappa_2}\right)^m I_m(\kappa_1) I_m(\kappa_2).
\end{align}

In contrast, $\psib$ is said to follow the (bivariate) von Mises cosine distribution with mean parameters $\mu_1, \mu_2$, concentration parameters $\kappa_1, \kappa_2$, and association parameter $\kappa_3$, denoted $\psib \sim \vmc(\mu_1, \mu_2, \kappa_1, \kappa_2, \kappa_3)$, if $\psib$ has the probability density~\footnote{\cite{mardia:2007} define the density with $-\kappa_3$ instead of $\kappa_3$ in the exponent. However, that makes the normalizing constant equal to $C_c(\kappa_1, \kappa_2, -\kappa_3)$ in our current notation (i.e., in the form shown in (\ref{c_vmc})) and not $C_c(\kappa_1, \kappa_2, \kappa_3)$ as given in the paper. See Appendix~\ref{append_c_vmc_proof} for a proof.}
\begin{align*} \label{vmc}
&\quad f_{\vmc}(\psi_1, \psi_2 \mid \mu_1, \mu_2, \kappa_1, \kappa_2, \kappa_3) \\
&=  C_c(\kappa_1, \kappa_2, \kappa_3) \exp [ \kappa_1 \cos(\psi_1 - \mu_1) + \kappa_2 \cos(\psi_2 - \mu_2) + \kappa_3 \cos(\psi_1 - \mu_1 - \psi_2 + \mu_2) ]. \numbereqn
\end{align*}
Here, similar to the sine model, $\kappa_1, \kappa_2 \geq 0$, $-\infty < \kappa_3 < \infty$, $\mu_1, \mu_2 \in [0, 2\pi)$ and the normalizing constant is given by
\begin{align} \label{c_vmc}
C_c(\kappa_1, \kappa_2, \kappa_3)^{-1} = 4 \pi^2 \left\lbrace I_0(\kappa_1) I_0(\kappa_2) I_0(\kappa_3) + 2 \sum_{m = 0} ^\infty I_m(\kappa_1) I_m(\kappa_2) I_m(\kappa_3)  \right \rbrace.
\end{align}

From (\ref{vms}) and (\ref{vmc}) it is easy to see that when $\kappa_3 = 0$, both the von Mises sine and cosine densities become products of univariate von Mises densities, implying independence between the two random coordinates. In addition, when $\kappa_1$ and $\kappa_2$ are also zero, both densities become uniform over $[0, 2\pi)^2$. \cite{singh:2002} and \cite{mardia:2007} provide explicit forms for the marginal and conditional distributions in the sine and cosine models; the conditional distributions in both sine and cosine models are univariate von Mises, whereas the marginal distributions, although not von Mises, are symmetric around $\mu_1$ and $\mu_2$. 

One key difference between the bivariate wrapped normal model and the bivariate von Mises models is that  $\kappa_3^2$ is not required to be bounded above by $\kappa_1 \kappa_2$ in the latter, and thus can take any value in $(-\infty, \infty)$. Consequently, the densities can be bimodal; \cite{mardia:2007} show that the sine (cosine) joint density is unimodal if  $\kappa_3^2 < \kappa_1 \kappa_2$ ($\kappa_3 \geq - \kappa_1\kappa_2/(\kappa_1+\kappa_2)$), and bimodal otherwise. This flexibility gives the two bivariate von Mises distributions  richer sets of possible contour plots and the ability to model a larger class of angular data.

\begin{figure}[h]
	\centering
	\includegraphics[width=\linewidth]{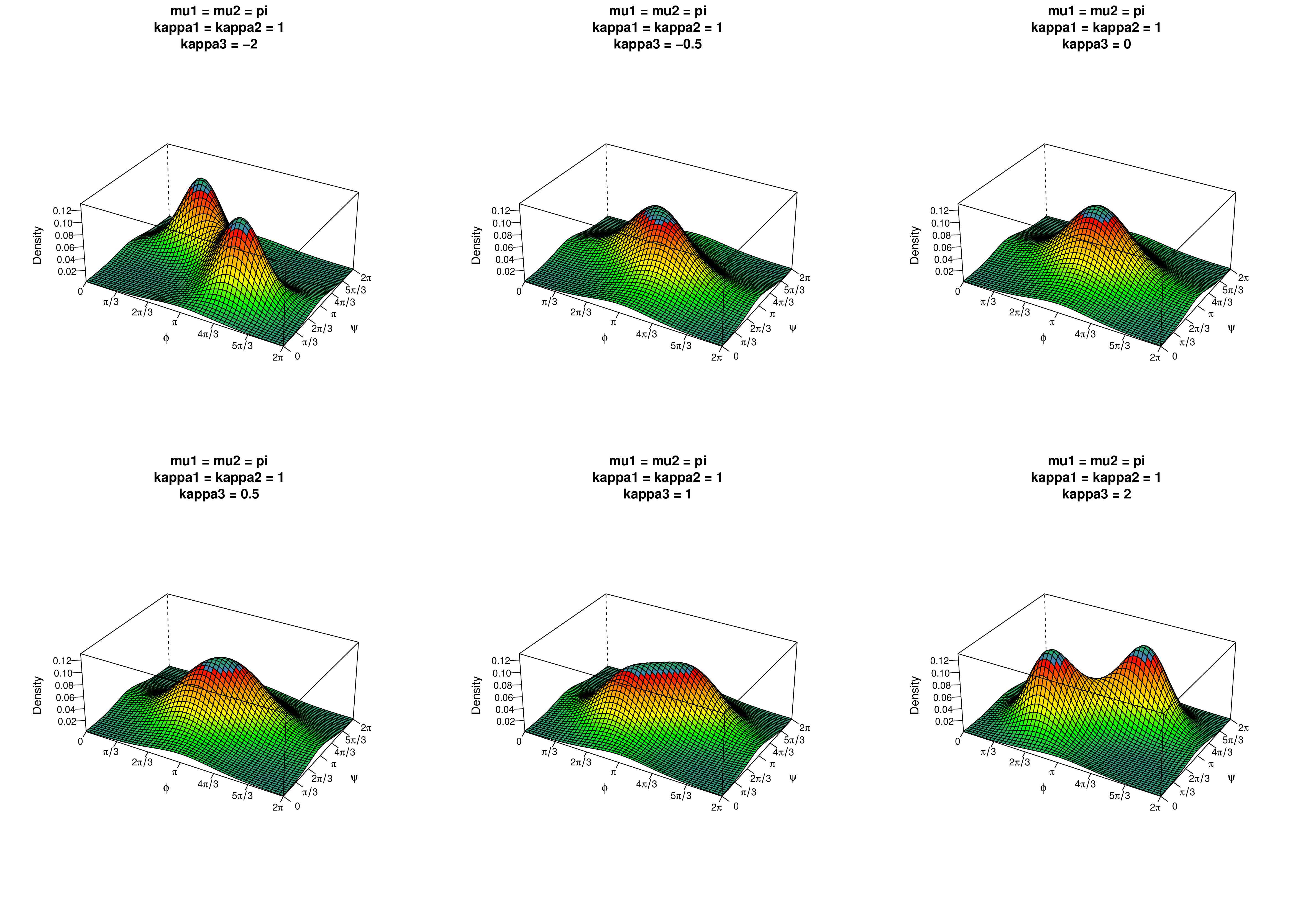}
	\caption{Von Mises sine density for $\mu_1 = \mu_2 =  \pi$, $\kappa_1 = \kappa_2 = 1$ and various $\kappa_3$.}
	\label{fig:vmsindenmodel}
\end{figure}

\begin{figure}[h]
	\centering
	\includegraphics[width=\linewidth]{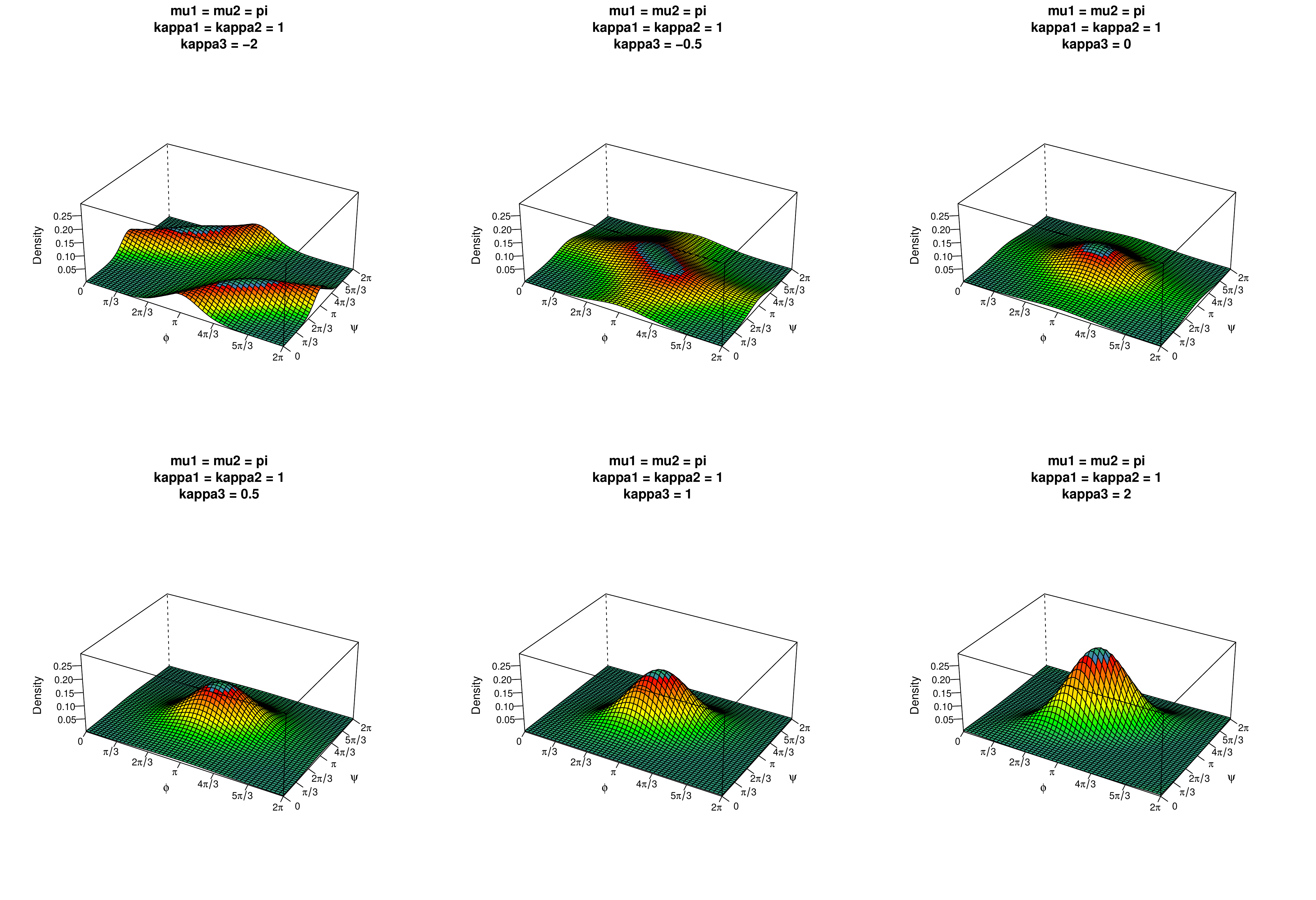}
	\caption{Von Mises cosine density for $\mu_1 = \mu_2 =  \pi$, $\kappa_1 = \kappa_2 = 1$ and various $\kappa_3$.}
	\label{fig:vmcosdenmodel}
\end{figure}

Figures~\ref{fig:vmsindenmodel} and \ref{fig:vmcosdenmodel} display the surfaces of the von Mises sine and von Mises cosine densities respectively with $\mu_1 = \mu_2 = \pi$, $\kappa_1 = \kappa_2 = 1$ and various $\kappa_3$'s. From Figure~\ref{fig:vmsindenmodel}, it can be seen that the density is bimodal when $\kappa_3 = \pm 2$ (or more generally for $|\kappa_3| \geq 1$ when $\kappa_1 = \kappa_2 = 1$), and unimodal when $|\kappa_3| < 1$. It can also be seen that the density surface (or the contours) of a sine model with $\kappa_3 = \xi$ is essentially a mirror image of that with $\kappa_3 = -\xi$, for any $\xi \in (-\infty, \infty)$; see, e.g., the upper-left and the lower-right panels of Figure~\ref{fig:vmsindenmodel}. Such is however, not the case for the cosine density, as depicted in Figure~\ref{fig:vmcosdenmodel}. The cosine density is bimodal when $\kappa_3$ is very negative ($\kappa_3 \leq -0.5$ when  $\kappa_1 = \kappa_2 = 1$, see, e.g., the upper-left and upper-middle panels of Figure~\ref{fig:vmcosdenmodel}), and is unimodal otherwise. Moreover, flipping the sign of $\kappa_3$ does not yield density surfaces (or contours) that are mirror images of each other. 

An interesting feature of both sine and cosine densities is that they both approximate the regular bivariate normal density (on $\R^2$) when the concentration parameters $\kappa_1$ and $\kappa_2$ are large, and the densities are unimodal (\citet[Section 2]{singh:2002}, \citet[Theorem 1]{mardia:2007}). This property is analogous to the univariate von Mises distribution. A heuristic explanation of this result again follows from the fact that when the distributions are unimodal and $\kappa_1, \kappa_2$ are large, then $\phi_1$ and $\phi_2$ are highly concentrated around $\mu_1$ and $\mu_2$. This means $\phi_i - \mu_i \approx 0$ so that $\sin(\phi_i - \mu_i) \approx (\phi_i - \mu_i)$ and $\cos(\phi_i - \mu_i) \approx 1 - (\phi_i - \mu_i)^2/2$ for $i=1,2$.

\subsection{Summary Measures for Univariate and Bivariate Angular Distributions} \label{summ_meas}
Circular summary measures are useful for describing various aspects of angular distributions. The circular mean or mean direction (see \cite{jammalamadaka:2001}) of an angular random variable $\psi$ is defined as
\[
E_c (\psi) = \arctan \left[ \frac{E(\sin \psi)}{E(\cos \psi)}\right]
\]
and the circular variance of $\psi$ is given by
\[
\var_c(\psi) = 1 - E[\cos (\psi - E_c(\psi))].
\]  
Note that $0 \leq \var_c(\psi) \leq 1$.

When considering the joint distribution of paired angular random variables $(\phi, \psi)$, their association can be measured using circular correlation. Multiple parametric circular correlation coefficients have been proposed in the literature, and here we consider two of them.  Let $\mu_1$ and $\mu_2$ be the circular means of $\psi_1$ and $\psi_2$ respectively. Then the Jammalamadaka-Sarma (JS) circular correlation coefficient \citep{jammalamadaka:1988} is defined as
\begin{equation} \label{rho_js_defn}
\rho_{\js}(\psi_1, \psi_2) = \frac{E\left[\sin (\psi_1 - \mu_1) \sin (\psi_2 - \mu_2) \right]}{\sqrt{E\left[\sin^2 (\psi_1 - \mu_1) \right] E\left[\sin^2  (\psi_2 - \mu_2)\right]}}.
\end{equation} 
Now let $(\psi_1^{(1)}, \psi_2^{(1)})$ and $(\psi_1^{(2)}, \psi_2^{(2)})$ be independent and identically distributed (IID) copies of $(\psi_1, \psi_2)$. Then the Fisher-Lee (FL) circular correlation coefficient \citep{fisher:1983} is defined by
\begin{equation} \label{rho_fl_defn}
\rho_{\fl}(\psi_1, \psi_2) = \frac{E\left[\sin \left(\psi_1^{(1)} - \psi_1^{(2)}\right) \sin \left(\psi_2^{(1)} - \psi_2^{(2)}\right) \right]}{\sqrt{E\left[\sin^2 \left(\psi_1^{(1)} - \psi_1^{(2)}\right) \right] E\left[\sin^2  \left(\psi_2^{(1)} - \psi_2^{(2)}\right) \right]}}.
\end{equation} 
Both $\rho_{\js}$ and $\rho_{\fl}$ have properties  similar to the ordinary correlation coefficient. In particular, $\rho_{\js}, \rho_{\fl} \in [-1, 1]$ and they are equal to $1 (-1)$ under perfect positive (negative) toroidal-linear (\emph{T-linear}) relationship \citep{fisher:1983,jammalamadaka:1988}.

Note that all distributions considered in \bambi have circular mean(s) equal to the respective mean parameter(s). For the univariate models, the circular variances are just functions of the associated concentration parameter (see \cite{mardia:jupp:2009}). In particular, if $\psi \sim \wn(\mu, \kappa)$ then $\var_c(\psi) = 1 - \exp(-\sigma^2/2)$ with $\sigma^2 = 1/\kappa$, and for $\psi \sim \vm(\mu, \kappa)$, $\var_c(\psi) = 1 - I_1(\kappa)/I_0(\kappa)$. For a bivariate wrapped normal model with  $\Sigma = (\Sigma_{ij}) = \Delta^{-1}$, the marginal circular variance of the first coordinate is $1 - \exp(-\Sigma_{11}/2)$, $\rho_{\fl} = \sinh(2\Sigma_{12}) /\sqrt{\sinh(2\Sigma_{11}) \sinh(2\Sigma_{22})}$ and $\rho_{\js} = \sinh(\Sigma_{12}) /
\sqrt{\sinh(\Sigma_{11}) \sinh(\Sigma_{22})}$ \citep{fisher:1983, jammalamadaka:1988}, where $\sinh$ denotes the hyperbolic sine function. For bivariate von Mises models (both sine and cosine forms), these expressions, provided in Appendix~\ref{appen_vmsin_vmocs_varcor}, are much more complicated, and involve infinite series of product of modified Bessel functions (see \cite{singh:2002, chakraborty:wong:2018:circular}). In \bambi we implement circular variances and correlation coefficients  for all the three bivariate models considered in this article. In addition, a function for calculating sample circular correlation coefficients is also provided, where the sample analogs of $\rho_{\js}$ and $\rho_{\fl}$, along with two other non-parametric circular correlation coefficients are considered (see Section~\ref{summ_model_fn}). 

\subsection{Mixture Models} \label{mixmodels}

Mixture models are convex combinations (\textit{mixtures}) of two or more probability distributions, and provide a semi-parametric approach to modeling complex datasets with multiple noticeably distinct clusters. Mixture models of both univariate and multivariate (non-wrapped) normal distributions are well studied in the literature (e.g., see \cite{lindsay:1995}), and implemented in many statistical packages, such as the \proglang{R} \citep{r} packages \pkg{mixtools} \citep{mixtools}, \pkg{mclust} \citep{mclust:2012, mclust:2002}, and \pkg{Rmixmod} \citep{pkg_Rmixmod}.  However, these are not applicable to mixture models for angular data.  This is a key motivation for our creation of \pkg{BAMBI}, which considers finite mixture models of univariate and bivariate angular distributions (the single function \code{fit_angmix} handles the fitting of all such models; see Section~\ref{sec_bambi_fn} Category~\ref{fit_fn}). 
% Mixture models are useful in applications where a single density is not adequate for modeling a complex dataset, for example when there are multiple noticeably distinct clusters present, which is typically the case in real multivariate data.

Let $K$ denote the number of components (where $K$ is finite), $\{ f(\cdot \mid \thetab_j): j = 1, \cdots, K \}$ denote the component densities ($f$ can be univariate or bivariate) with $\thetab_j$ denoting the parameter vector associated with the $j$-th component, and let $ \pb = (p_1, \cdots, p_K)^T$  denote the vector of mixing proportions (or weights) with $p_j \geq 0$ and  $\sum_{j=1}^K p_j = 1$. Then the mixture density is defined as
\begin{align} \label{mix_generic}
\ft(\cdot \mid \pb; \thetab_1, \cdots, \thetab_K) = \sum_{j=1}^K p_j f(\cdot \mid \thetab_j)
\end{align}
In practice, the number of components $K$ necessary to fit the data is usually unknown,  and thus should be estimated on the basis of the data itself.  (See Section~\ref{compsizedet} for a discussion on number of components estimation.) 
%This motivates which is one of our main motivation behind the creation of \pkg{BAMBI}, where we consider finite mixture models of angular distributions, and provide a function called

An important special case of the general mixture model (\ref{mix_generic}) is the mixture of product components, also called a conditional independence model. Here, one assumes each multivariate component density $f(\cdot \mid \thetab_j)$ to be a product of univariate densities; specifically for the bivariate angular models considered in \pkg{BAMBI}, this is achieved by letting $\kappa_3=0$ in each component. Note that a mixture of product components \emph{does not} imply independence in the final mixture density. In fact, such a model can reasonably approximate a wide class of more complicated models, while being computationally less involved (see \cite{grim:2017});  however, one often needs a larger $K$ compared to a general (non-product) mixture model to achieve similar results, thus offsetting some of the potential computational gains. In \bambi a  product component mixture can be fitted via \code{fit_angmix} by setting the argument \code{cov.restrict = "ZERO"} (see Section~\ref{sec_bambi_fn} under Category~\ref{fit_fn}).

It is also noteworthy to mention the aspect of bimodality of bivariate von Mises distributions in the context of mixture modeling. In practice, often each  component of a mixture model is used to represent one single (unimodal) cluster in data. However, as discussed in Section~\ref{vmmodels}, both von Mises sine and cosine models can be bimodal depending on the values of the concentration and association parameters. When bimodality is present in some of the component specific densities, the final mixture model can be harder to interpret. To avoid this issue, it is possible to restrict the parameter spaces associated with the concentration and association parameters (by letting $\kappa_3^2 < \kappa_1\kappa_2$ in the sine model, and $\kappa_3 \geq -\kappa_1\kappa_2 / (\kappa_1+\kappa_2)$ in the cosine model) in these angular models to force unimodality in each component specific density.  Consequently, a larger $K$ may be needed to achieve similar results, which increases model complexity. In  \bambi we provide an option of having only unimodal von Mises component densities. This is achieved by setting the logical argument \code{unimodal.component = TRUE} in \code{fit_angmix} (defaults to \code{FALSE}). See the  discussion in Section~\ref{sec_bambi_fn}, Category~\ref{fit_fn}.   

\subsection{Related Work and Motivation for BAMBI}

\subsubsection{Literature}
%We provide a brief review of the literature addressing inferential problems in bivariate angular models.  
Several papers have addressed inferential problems relating to mixtures of bivariate angular distributions.  \cite{mardia:2007} consider the mixture of bivariate von Mises cosine distributions, and suggest an EM algorithm for frequentist estimation of the associated parameters. Their approach is used in \cite{boomsma:2008} in the context of modeling protein backbone angles. In other work,  \cite{lennox:2009} consider a Bayesian non-parametric model involving an infinite mixture of von Mises sine distributions. In \bambi we focus on classical finite mixtures, providing a unifying framework for Bayesian estimation of all three bivariate angular models presented earlier.
%Bayesian inference?

%Their method is interesting, and somewhat similar to our approach, except for the fact that they consider a Dirichlet mixture,  The method described in that paper is approximate, due to restrictions on the priors that eliminate bimodality, in order to facilitate computation.
%There are a few papers that describe inferential techniques for estimating the parameters of a single multivariate wrapped normal distribution. \cite{jona-lasinio:2012} propose a novel and elegant Bayesian data augmentation technique which is more computationally efficient than a direct approach, and  \cite{kurz:hanebeck:2015} suggest a number of additional techniques (mostly classical or non-Bayesian except for one) for this problem.  Some of the methods mentioned in \cite{kurz:hanebeck:2015} are approximate (e.g., the proposed  approximate MLE) or inefficient (e.g., method of moments). However, these papers do not directly address the fitting of mixture models. 

\subsubsection{Software}

% 1. related software
To the best of our knowledge, no previous packages or libraries handle finite mixture modeling for univariate or bivariate angular data, whether in \proglang{R} or otherwise. In fact, the only available software (as of the time of writing this manuscript) that  has functionality for bivariate von Mises models  is the C++ library \proglang{mocapy++} in the context of Dynamic Bayesian Networks  \citep[][mentioned in \citealp{mardia:2007}]{paluszewski:2010}. However, \proglang{mocapy++} does not implement bivariate wrapped normal models.  

%The \proglang{R} packages \pkg{Directional} \citep{directional_Rpack}, \pkg{CircStats} \citep{CircStats_Rpack} and \pkg{circular} \citep{circular_Rpack}  provide functions for JS circular correlation coefficients, and/or circular regression, but do not implement the toroidal bivariate angular models considered in \pkg{BAMBI}. 

% 2. salient features
The overarching goal of \pkg{BAMBI} is to create a unified platform that implements descriptive and  inferential statistical tools required to analyze bivariate and univariate angular data. First, \bambi provides functions for  density evaluation, computation of various summary measures (such as circular mean, variance and correlation coefficient), and random data generation from bivariate and univariate angular models and their mixtures. Second, it has functions for fitting these models to real angular data using Bayesian methods. Third, it implements a number of post-processing steps required in any Bayesian statistical analysis. For example, visual and numerical assessment of the goodness of fits can be done using a number of native \bambi functions, as well as \pkg{coda} package functions, which are applicable on \bambi outputs (\code{angmcmc} objects) through a convenient  \code{as.mcmc.list} method.  Furthermore, \bambi has functions for model selection as well as random data generation and density evaluation from fitted models, which are useful in posterior predictive analyses.

%\bambi is thus an attempt to fill this void and be readily accessible to a wide  group of practitioners, especially from the field of computational biology, who use \proglang{R}.  % does mixture modeling.  we also have other functions that facilitate analysis of bivariate angular data

%As is necessary in any Bayesian statistical modeling, \bambi provides functionality for visual and numerical diagnostics (both via native \bambi functions, and functions from other extremely useful packages such as \pkg{coda}; the latter requires a conversion of \bambi outputs into objects of class \code{mcmc.list} from \pkg{coda}, through a method provided in \pkg{BAMBI}), along with model comparison. Furthermore, \bambi allows random data generation and density evaluation from fitted models, or from mixture models with known parameters. Hence, we expect \bambi to help facilitate the modeling of real bivariate angular data for practitioners, especially those working in \proglang{R}. The package is available from CRAN.

% 3. why not stan
%It is to be noted that while it is possible to use general-purpose MCMC samplers such as \proglang{stan} \citep{stan_rpack} for fitting the angular mixture models considered in \pkg{BAMBI}, there are important motivations for developing specialized implementations for these models.  
It is to be noted that while it is possible to use general-purpose MCMC samplers such as \proglang{stan} \citep{stan_rpack}, \proglang{JAGS} \citep{plummer:2003} and \proglang{WinBUGS} \citep{lunn:2000} for fitting the angular mixture models considered in \pkg{BAMBI}, there are important motivations for developing specialized implementations for these models.
First, special care needs to be taken while handling the normalizing constants in the von Mises sine and cosine densities, which contain infinite series of product of Bessel functions that can be numerically unstable for some ranges of parameter values; such cases are handled in \pkg{BAMBI} via (quasi) Monte Carlo approximations. Second, computations for Bayesian mixture modeling benefit from using a latent allocation structure, as done in \pkg{BAMBI} (see Section~\ref{gs}), which allows independent sampling of  the component specific parameters.  Such an approach cannot be used in \proglang{stan} due to the discreteness of the allocation (p. 79, Section~6.2 of the reference manual v2.18.0); consequently, the dimensionality of the parameter vector can hinder convergence of MCMC sampling for mixtures with many components. 
In contrast, \proglang{JAGS}/\proglang{WinBUGS} allows incorporation of discrete latent allocation; however, 
%the complicated angular densities all need to be manually added to the \proglang{JAGS}/\proglang{WinBUGS} library via custom \proglang{C++} modules \citep[see, e.g,][for a general tutorial on custom \proglang{JAGS} modules]{wabersich:2014} to facilitate sampling. This requires substantial time and careful programming (the comments made on the first motivation also apply here to a somewhat greater extent) that is typically cost-prohibitive for any single application. Furthermore, \proglang{JAGS}/\proglang{WinBUGS} 
%Hamiltonian Monte Carlo (see Section~\ref{hmc}) is not available for sampling from intractable distributions.
their sampling techniques %used in JAGS/(Win)BUGS,unlike Hamiltonian Monte Carlo,
do not make use of the gradient of the target (log) posterior density. As discussed in Section~\ref{hmc}, Hamiltonian Monte Carlo uses the gradient  and hence is typically more efficient for sampling from intractable distributions.
%, and hence are typically less efficient.
%it uses a combination  Metropolis-Hastings (see Section~\ref{rwmh})  as opposed to the typically much superiort Hamiltonian Monte Carlo (see Section~\ref{hmc}) to sample from intractable distributions, resulting in poorer MCMC draws.
Finally, the  analytic gradients necessary for efficient MCMC sampling in these models are built into \pkg{BAMBI}.

\subsection{Organization of the Paper}

The remainder of this article is organized as follows. In Section~\ref{meth}, we  review Bayesian methods for fitting angular mixture models to data. In Section~\ref{bambipack} we describe the capabilities of \pkg{BAMBI}, by describing all functions and datasets available in \pkg{BAMBI}, and providing brief overviews on their usage. Following, in Section~\ref{illus} we illustrate angular mixture modeling on datasets included in \pkg{BAMBI}.  The paper concludes with a brief summary and possible directions for future development in Section 5. A derivation for the von Mises cosine model normalizing constant, formulas for circular variances and correlation coefficients in the von Mises sine and cosine models, analytic forms of gradients needed for efficient MCMC sampling (discussed in Section~\ref{hmc}), and MCMC parameter traceplots associated with one of the examples considered in Section~\ref{illus} are  provided in the Appendices.

\section{Methods} \label{meth}

\subsection{Overview}
We adopt a Bayesian approach for fitting angular mixture models to data. Let $\Psib^\top = (\psib_1, \cdots, \psib_n)$ be the  data matrix (or  data vector in the univariate case)  with each $\psib_i$ being a bivariate vector of angles (or a univariate angle) $[0, 2\pi)^2$ (or in  $[0, 2\pi)$).  We are interested in fitting a mixture density of the form (\ref{mix_generic}) for a given number of components $K$. For example, in bivariate wrapped normal mixtures, the density for the $j$-th component is given by $f_j  \equiv f_{\bwn} (\cdot \mid \thetab_j) =: f_{\bwn, j}$, where $\thetab_j^\top = (\kappa_{1j}, \kappa_{2j}, \kappa_{3j}, \mu_{1j} ,\mu_{2j})$ denotes the vector of (model) parameters for the $j$th component, $j = 1, \cdots, K$, and the mixture density is given by $\ft_{\bwn} = \sum_{j=1}^K p_j f_{\bwn, j}$. For a specified $K$, our objective is to estimate the parameter vector $\etab^\top = (\thetab^\top, \pb^\top)$,  which consists of the model parameters $\thetab^\top = (\thetab^\top_1, \cdots, \thetab^\top_K)$ and the mixing proportions $\pb^\top = (p_1, \cdots, p_K)$, based on $\Psib$. Often, $K$ itself will also need to be estimated. In the following, we review some commonly used techniques in Bayesian mixture model fitting.

\subsection{Bayesian mixture modeling} \label{bayesintro}

Under a Bayesian framework a prior distribution must be specified for the parameter vector, which can be non-informative (or diffuse) if \textit{a priori} information is unavailable. Let $\pi(\thetab, \pb)$ denote the joint prior density for $\etab$. Often the prior distributions of $\thetab$ and $\pb$ are assumed to be independent so that (with a slight abuse of notation; here $\pi(y)$ stands for the appropriate prior density of the random variable $y$) $\pi(\thetab, \pb) = \pi(\thetab) \pi(\pb)$. Moreover, parameters from different components are often assumed to be independent, so that $\pi(\thetab) = \prod_{j=1}^{K} \pi(\thetab_j)$. Let $L(\Psib \mid \thetab, \pb) = \prod_{i = 1}^n \ft(\psib_i \mid \thetab, \pb)$ denote the likelihood function of the data. Then the posterior density of $\etab$ given the data is
\begin{align} \label{post_den_gen}
\pi(\thetab, \pb \mid  \Psib) \propto L(\Psib \mid \thetab, \pb) \: \pi(\pb) \prod_{j=1}^{K} \pi(\thetab_j),
\end{align}
which is the basis for Bayesian inference on $\etab$. It is to be noted that the prior densities $\pi(\thetab_j)$'s all need to be proper in order to ensure that the posterior density $\pi(\thetab, \pb \mid  \Psib)$ is proper \citep[see, e.g.,][Section~2.2]{diebolt:1994}.   Specific comments about the choice of priors used in the current setting are provided in Section~\ref{prior}. Note that the associated posterior mean, median or mode, commonly used as point estimates of the parameters, are not available  in closed form for our distributions of interest.  Additionally, $\pi(\thetab, \pb \mid \Psib)$ is intractable for directly simulating IID samples, and thus some kind of Markov Chain Monte Carlo (MCMC) technique is used in practice as an alternative.  Starting from some initial point, an MCMC algorithm generates a Markov chain which has the target posterior density $\pi(\thetab, \pb \mid \Psib)$ as the invariant distribution. Various summary measures of the posterior distributions -- such as mean, mode (known as the \emph{maximum a posteriori} or MAP parameter value), and quantiles --  can then be approximated based on the MCMC realizations. In practice, the MCMC algorithm must be run long enough for the Markov chain to converge, so that the realizations approximately follow the target posterior distribution.  For this purpose the chain is given a \emph{burn-in} period, where the initial iterations are discarded.

%The MCMC realizations are used to summarize the posterior distribution

%Once these MCMC realizations or samples are generated, the parameters can be estimated in two ways. The first, and most common way is to approximate the true posterior averages via sample averages (called the \textit{ergodic averages}) based on the Markov chain realizations. Two most common examples are sample mean and sample median. Precisions of such estimators can be easily obtained through credible sets computed from the sample quantiles. The second method involves finding the (approximate) maximum aposteriori (MAP) estimate, which is the MCMC realization corresponding to maximum (log) posterior density. Note that MCMC is not the most efficient way of finding the MAP estimator (a stochastic search is a much better alternative), and precisions of such estimators cannot be easily computed from the MCMC samples.

 %from low posterior density regions are then discarded to ensure that all remaining iterations are approximately from (a high density region of) the target distribution. 

In \bambi the function \code{fit_angmix} fits a Bayesian angular mixture model with a specified number of components, and the function \code{fit_incremental_angmix} fits angular mixtures with incremental number of components to determine an optimum number of components. In the following we briefly review the MCMC generation techniques \textit{Gibbs sampler (GS)}, \textit{Metropolis Hastings} and \textit{Hamiltonian Monte Carlo (HMC)}, and describe how they are used for sampling from the posterior distributions of model parameters and mixing proportions in these two \bambi functions.

\subsection{Gibbs sampler (GS)} \label{gs}

The Gibbs sampler (GS) \citep{geman:1984, gelfand:1990} breaks the Markov chain updates for the parameter vector into blocks.  For example, when $\etab = (\etab_1, \etab_2)$ the GS generates the $N$-th state of the Markov Chain $(\etab_1^{(N)}, \etab_2^{(N)})$ from the previous state $(\etab_1^{(N-1)}, \etab_2^{(N-1)})$ with the steps

\begin{enumerate}
\item Generate $\etab_1^{(N)}$ from $\pi(\etab_1 \mid \etab_2^{(N-1)},\dat)$.
\item Generate $\etab_2^{(N)}$ from $\pi(\etab_2 \mid \etab_1^{(N)},\dat)$.
\end{enumerate}
The GS is most effective when it is easy to sample from the (full) conditional posterior densities $\pi(\etab_1 \mid \etab_2,\dat)$ and $\pi(\etab_2 \mid \etab_1,\dat)$.  Note that when $\etab_1$ and $\etab_2$ are vectors, this is sometimes called the blocked Gibbs sampler.

For mixture models, an efficient Gibbs sampling step for the mixing proportions $\pb$ (when $K > 1$) can be obtained by adopting a so-called Data Augmentation scheme, where one introduces (``augments'') unobserved data to make the conditional distributions simpler \citep{diebolt:1994}. Here, we introduce (hidden) component indicators $\zetab_i^\top = (\zeta_{i1}, \cdots, \zeta_{iK})$ corresponding to each observation $\psib_i$ where $\zeta_{ij}$ is 1 if the $i$th observation comes from the $j$th component, and 0 otherwise, for $i = 1, \cdots,n$ and $j=1,\cdots,K$. Thus, given $\zeta_{ij} = 1$, the density of $\psib_i$ is simply $f\left(\psib_i \mid \thetab_j\right)$, i.e., the density in the $j$-th component evaluated at $\psib_i$. Moreover,  $n_j := \sum_{i=1}^n \zeta_{ij}$ is the total number of observations coming  from this density. It is customary to assume a Dirichlet$(\alphab)$ prior for $\pb$, where $\alphab^\top = (\alpha_1, \cdots, \alpha_K)$ with $\alpha_j > 0$ for all $j$, so that $\pi(\pb) \propto \prod_{j=1}^K p_j^{\alpha_j-1}$. Note that $\alpha_j = 1$ for all $j$ represents the uniform prior. Let $Z^\top = (\zetab_1^\top, \cdots, \zetab_n^\top)$ and let $\thetab^{(N-1)}$, $\pb^{(N-1)}$ and $Z^{(N-1)}$ be the $(N-1)$th MCMC realizations of $\thetab$, $\pb$ and $Z$ respectively. Then the $N$th realization of $\pb$ (and $Z$) are obtained as follows:
\begin{enumerate}
\item For $i = 1, \cdots,n$, generate $\zetab_i^{(N)}$ from Multinomial$\left(1; \pt_{i1}^{(N-1)}, \cdots, \pt_{iK}^{(N-1)}\right)$ independently, and define $n_j^{(N)} := \sum_{i=1}^n \zeta_{ij} ^{(N)}$, where
\begin{equation} \label{mem_probs}
\pt_{ij}^{(N-1)} = {\frac{p_j^{(N-1)} f\left(\psib_i \mid \thetab_j^{(N-1)}\right)}{\sum_{h=1}^K p_h^{(N-1)} f\left(\psib_i \mid \thetab_h^{(N-1)}\right)} }
\end{equation}
are the posterior membership probabilities.

\item Generate $\pb^{(N)}$ from Dirichlet$\left(\alpha_1+n_1^{(N)}, \cdots, \alpha_K+n_K^{(N)}\right)$.
\end{enumerate}

%It is clear that this step is not required in a single component model ($K = 1$), as $\pb$ is trivially 1 in such cases.
Thus when $K > 1$, the latent allocation $\zetab_i$'s generated during the Gibbs sampling step for $\pb$ leads to simplifications that reduce the computational burden substantially.  Note that, conditional on $\zetab_i$'s, all $\psib_i$'s have independent single component densities $f(\cdot  \mid \thetab_{j_i})$, with $j_i$ being the non-zero position of $\zetab_i$. Thus, given $\zetab_i$'s, all $\thetab_j$'s are independent with only data points coming from component $j$ contributing to the respective likelihoods. Consequently $\thetab_j$'s can be sampled independently from their (component-specific) conditional posterior densities.

To complete the GS scheme for the mixture model, it remains to sample $\thetab_j$'s from $\pi(\thetab_j \mid Z, \Psib, \pb)$. As these distributions are still intractable for direct IID simulation, we use a Markov chain simulation technique for sampling, and then combine this step with the GS updates for $\pb$ and $Z$. In the following we describe two such Markov chain simulation techniques, and how they are used in \pkg{BAMBI}.

\subsection{Metropolis-Hastings Algorithm} \label{rwmh}

The Metropolis-Hastings algorithm  \citep{metropolis:1953, hastings:1970}  is simple and widely-used for Markov chain simulation. %Here, on first introduces a proposal density and draws random samples from it.  The proposal samples are then accepted or rejected with specific probabilities that make the accepted proposals realizations of a Markov chain whose stationary density is the target. 
Formally, let $x$ be the current state of a Markov chain $\Phi$ with stationary density $q$. Let $\qt(\cdot \mid x)$ be a proposal density defined on the state space of $\Phi$ that is easy to sample from.  Then the next state $x'$ of the Markov chain $\Phi$ is obtained as follows:
\begin{enumerate}
	\item Generate $x^*$ from $\qt_x$.
	\item Define $r(x^*, x) = \min\left\{1, \frac{q(x^*)}{q(x)} \: \frac{\qt(x \mid x^*)}{\qt(x^* \mid x)} \right\}$, and define the next state $x'$ equal to $x^*$ with probability $r(x^*, x)$ and equal to $x$ with probability $1-r(x^*, x)$.
\end{enumerate}

The random walk variant of Metropolis-Hastings (RWMH) uses a proposal density $\qt(\cdot \mid x)$ that is symmetric about $x$; e.g., by taking $\qt(\cdot \mid x)$ to be the density of $Y_x = x + Y_0$,  where $Y_0$ is a normal random variable with mean zero. Under RWMH, $\qt(x \mid x^*) = \qt(x^* \mid x)$, and hence $r(x^*, x) = \min\left\{1, \frac{q(x^*)}{q(x)} \right\}$, thus simplifying computations. In \pkg{BAMBI}, RWMH is implemented with independent normal proposals.

Note that the variance of the density $\qt(\cdot \mid x)$ strongly affects the acceptance probabilities $r(x^*, x)$. Convergence of the Markov chain will be slow if the variance of $\qt(\cdot \mid x)$ is too large or too small. \cite{roberts:2001} suggest maintaining an acceptance rate of 20-30\% as a general rule-of-thumb. In \bambi we provide an auto-tuning feature that implements adaptive tuning during the burn-in period.  Briefly, the acceptance rate and scale of the sampled parameters  are monitored at regular intervals, and the proposal variances are adjusted accordingly (see the documentation of \code{fit_angmix} for details).  We limit adaptation to the burn-in period, so that the desired properties of the final MCMC samples are retained.

%Note that this adaptation destroys  Markov property of the chain during burn-in; however, the final (burn-in discarded) chain remains Markov, thus avoiding any complication. 

\subsection{Hamiltonian/Hybrid Monte Carlo (HMC)} \label{hmc}
Simple RWMH can become quite inefficient in multi-dimensional problems.  A powerful alternative to RWMH when the gradient of the posterior density has an analytical form is  Hamiltonian (also called \textit{Hybrid}) Monte Carlo (HMC) \citep{duane:1987, neal:1996}.  HMC makes use of the gradient of the log posterior density and an auxiliary random variable, and incorporates tools from molecular dynamics to furnish proposal states coming from high posterior density regions. This allows a much faster exploration of the state space than a RWMH scheme. A gentle and detailed introduction to HMC with applications to statistical problems can be found in \cite{neal:2011}.  Briefly, in HMC first an auxiliary random variable $\rb$ called \textit{momentum} is considered along with the variable of interest (vector of model parameters $\thetab$ in our case), which is classically called the \textit{position} in physical problems, denoted by $\qb$.\footnote{In classical HMC literature, the auxiliary variable is denoted by $\pb$; however, we will keep that notation for mixing proportions.} Furthermore, two energy functions $U(\qb)$  and $K(\rb)$ are introduced, followed by a Hamiltonian function $H(\qb, \rb)$ which is usually the sum of those two energies, i.e., $H(\qb, \rb) = U(\qb) + K(\rb)$. $U(\qb)$, called the  \textit{potential energy}, is defined as the negative log posterior density of $\qb$ (plus any fixed constant) in MCMC applications, and  $K(\rb)$, called the \textit{kinetic energy}, is usually defined as $K(\rb) = \rb^\top M^{-1} \rb$ for some fixed positive definite matrix $M$.  This form for $K(\rb)$ corresponds to the negative log  density (plus a constant) of the zero-mean normal distribution with variance matrix $M$. In practice, $M$ is typically taken to be diagonal, often the identity matrix (as used in \pkg{BAMBI}), or a scalar multiple of the identity matrix. Let $\nabla U(\qb)$ denote the gradient vector of $U(\qb)$ with respect to $\qb$. Further, let $\epsilon > 0$ be a small real number, called the \textit{step-size}, and $L\geq 2$, a positive integer, called the number of  \textit{leapfrog steps}.  Then one step of HMC that updates (via \textit{leapfrog} method) the current state $\qb$ to the next state $\qb'$ can be described as follows:
\begin{enumerate}
	\item Generate $\rb$ from N$(\bm{0}, M)$ and let $\qb^{(0)} = \qb$ and $\rb^{(0)} = \rb - (\epsilon/2) \nabla U(\qb^{(0)})$.
	\item  For $t = 1, \cdots, L$ define $\qb^{(t)} = \qb^{(t-1)} + \epsilon \: \rb^{(t-1)}$ and $\rb^{(t)} = \rb^{(t-1)} - (\epsilon/\gamma_l) \nabla U(\qb^{(t)})$, where $\gamma_l = 1$ for $l = 1, \cdots, L-1$ and $\gamma_L = 2$.
	\item Let $\qb^* = \qb^{(L)}$ and $\rb^* = -\rb^{(L)}$, and define $\beta(\qb^*,\rb^*; \qb,\rb) = \min \left \lbrace 1,  \exp \left[ H(\qb^*, \rb^*) - H(\qb, \rb) \right] \right \rbrace$.
	\item Finally, define the new state $\qb'$ equal to $\qb^*$ with probability $\beta(\qb^*,\rb^*; \qb,\rb)$, and equal to $\qb$ with probability $1-\beta(\qb^*,\rb^*; \qb,\rb)$.
\end{enumerate}
Special care needs to be taken for the cases where the variables being sampled are constrained: for our angular models, $\mu_i$'s are angles in $[0,2\pi)$, and the (raw) concentration parameters are positive. See  section 5.5.1.5 of \cite {neal:2011}  for more details.

Since HMC approximates the dynamics by discretization, the step-size $\epsilon$ needs to be sufficiently small for the proposals to have a high acceptance rate. However,  if $\epsilon$ is too small, convergence of the Markov chain will be slow. Thus, $\epsilon$ requires tuning to obtain a reasonable acceptance rate ($\sim$40-90\%, with 65\% being optimal, as suggested by \cite{neal:2011}).   In \bambi we provide an auto-tune feature for $\epsilon$ similar to the one for the proposal standard deviation in RWMH (see Section~\ref{rwmh}), which adaptively tunes $\epsilon$ during burn-in to ensure a reasonable acceptance rate (60-90\% by default).

Care is required for choosing the number of leapfrog steps $L$, since a $L$ that is too large or too small can lead to poor convergence.  While setting an appropriate $L$ can be challenging for high dimensional parameter vectors, here the independence of components $\pi(\thetab_j \mid Z, \Psib, \pb)$ means that only two (for univariate models) or five (for bivariate models) parameters need to be sampled at a time.  Thus, the default $L=10$ used in \pkg{BAMBI}, which works well empirically, suffices for mixtures with any number of components. As suggested in \cite{neal:2011, Mackenzie:1989}, randomly choosing $\epsilon$ and $L$ from some fairly small interval at the beginning of every HMC step may improve convergence of the chain. In \bambi $\epsilon$ is by default randomly chosen at each iteration from an interval of the form $(\epsilon_0(1-\delta), \epsilon_0(1+\delta))$ for a fixed $\epsilon_0 > 0$ (can be auto-tuned in \pkg{BAMBI}) and a given $\delta \in (0, 1)$, while $L$ is kept fixed. However, these settings can be changed; in particular, $L$ can also be randomly chosen  from the set of integers contained in an interval $(L_0/\exp(A), L_0 \exp(A))$ for some given $L_0 > 0$ and $A > 0$, or both $\epsilon$ and $L$ can be specified to be non-random. See the documentation of \code{fit_angmix} for more details. 

When properly tuned, HMC can achieve faster convergence and better exploration of the target density than RWMH, for a similar computational cost. Note that the computational cost for each HMC iteration is higher due to $L$ gradient evaluations, however, HMC usually requires fewer iterations to reach stationarity and successive samples have lower autocorrelation.  Hence, HMC is our recommended sampling approach in \pkg{BAMBI}.  HMC, while powerful, does not solve all the challenges associated with MCMC sampling algorithms; in particular, both RWMH and HMC can get trapped in local modes. One possible remedy is to use multiple independent chains, see Section~\ref{initial_vals}.

By default, \bambi uses HMC to sample $\thetab$.  All angular densities considered here,  both univariate and bivariate, admit analytic gradients for efficient programming implementation. Expressions for the conditional log posterior density and its gradients are provided in the following section.

\subsection{Using RMWH or HMC for angular mixture models}
Consider the mixture model (\ref{mix_generic}) with density $f(\cdot \mid \thetab_j)$ for the $j$-th component, $j=1,\cdots,K$. It follows that given the component indicators $Z$, information on $\pb$ is superfluous, and the complete-data (i.e., given $\Psib$ and $Z$) likelihood for $\thetab = (\thetab_1, \cdots, \thetab_K)$ is given by:
\[
\mathrm{likelihood}(\thetab \mid Z, \Psib) \propto \prod_{i = 1}^n \prod_{j=1}^K  f(\psib_i  \mid \thetab_j)^{\zeta_{ij}}.
\]
Recall that the joint prior density of $\thetab$ is $\prod_{j=1}^K \pi(\thetab_j)$. Hence, the complete-data posterior density of $\thetab$ is given by:
\[
\pi(\thetab \mid Z, \Psib) \propto \left\{\prod_{i = 1}^n \prod_{j=1}^K  f(\psib_i  \mid \thetab_j)^{\zeta_{ij}}\right\} \prod_{j=1}^K \pi(\thetab_j).
\]
Therefore, by taking the logarithm, the complete-data log posterior density (LPD) for $\thetab = (\thetab_1, \cdots, \thetab_K)$ given the component indicators $Z$ is obtained as
\begin{align*} \label{lpd_gen_full}
\lt_{\text{complete-data}}(\thetab) := \log \pi(\thetab \mid Z, \Psib) &= C + \sum_{i=1}^n \sum_{j=1}^{K} \zeta_{ij} \log f(\psib_i  \mid \thetab_j) + \sum_{j=1}^{K} \log \pi(\thetab_j) \\
&= C + \sum_{j=1}^{K} \left\{  \sum_{i=1}^n \zeta_{ij}  \log f(\psib_i  \mid \thetab_j) + \log \pi(\thetab_j) \right\} \\
&= C + \sum_{j=1}^{K} \left\{  \sum_{i \colon \zeta_{ij} = 1}  \log f(\psib_i  \mid \thetab_j) + \log \pi(\thetab_j) \right\} \numbereqn
\end{align*}
where $C$ is a constant free of $\thetab$. The above expression shows that conditional on $Z$, $\thetab_j$'s are independent, and that the complete-data log posterior density of $\thetab_j$ is of the form
\begin{align} \label{lpd_gen}
\lt_j(\thetab_j) = C_j + \sum_{i \colon \zeta_{ij} = 1}  \log f(\psib_i  \mid \thetab_j) + \log \pi(\thetab_j)
\end{align}  
where $C_j$'s are  constants (free of $\thetab$).  Given the current GS draw of $Z$, samples from the conditional posterior density $\lt_j$ in (\ref{lpd_gen}) can therefore be drawn independently for all $j=1,\cdots, K$. For each $j \geq 1$,  we let $\lt_j$ play the role of the target density $q$ (see Section~\ref{rwmh}) in RWMH, or let $-\lt_j$ play the role of  the potential energy $U$ (see Section~\ref{hmc}) in HMC; the gradient of $U$ with respect to $\thetab_j$, $\nabla U$, is therefore the negative of the gradient $\nabla \lt_j$. From (\ref{lpd_gen}), it follows that
\begin{subequations}
\begin{align}
\nabla \lt_j(\thetab) &=  \left( \sum_{i \colon \zeta_{ij} = 1}  \frac{\partial \log f(\psib_i  \mid \thetab_j)} {\partial \thetab_j}\right) + \frac{\partial \log \pi(\thetab_j)} {\partial \thetab_j} \label{grad_lpd_gen_logform} \\
&=  \left(\sum_{i \colon \zeta_{ij} = 1} \frac{1}{f(\psib_i  \mid \thetab_j)} \cdot \frac{\partial f(\psib_i  \mid \thetab_j)} {\partial \thetab_j}\right) + \frac{\partial \log \pi(\thetab_j)} {\partial \thetab_j} \label{grad_lpd_gen_origform}
\end{align}	 
\end{subequations}
For the von Mises distributions (both univariate and bivariate), form (\ref{grad_lpd_gen_logform}) is easier to work with, whereas form (\ref{grad_lpd_gen_origform}) is more useful for the wrapped normal distributions. Full analytic expressions for all model specific gradients are provided in Appendix~\ref{grads}.

Note that parameters with a non-negative support are often sampled more efficiently on the log scale; we use this strategy for sampling the concentration parameters $\kappa$ (in univariate models) and $\kappa_1, \kappa_2$ (in bivariate models). 

%Now observe that,
%\begin{align*}
%\frac{\partial l_i}{\partial \thetab_j} &= \frac{1}{\sum_{h=1}^K p_h f(\psib_i| \thetab_h)} p_j \left( \frac{\partial f(\psib_i| \thetab_j)}{\partial \thetab_j} \right) \\
% &= \frac{1}{\sum_{h=1}^K p_h f(\psib_i| \thetab_h)} p_j f(\psib_i| \thetab_j) \left( \frac{\partial \log f(\psib_i| \thetab_j)}{\partial \thetab_j} \right) \numbereqn \label{grad_fij}
%\end{align*}

\subsection{Choice of priors} \label{prior}
Selection of prior constitutes an important step in Bayesian analyses, as they play a key role in the final inference. This is comparatively more standard for the component-specific model parameters $\thetab$. As discussed, proper prior distributions for the model parameters are required to ensure posterior propriety.  For the mean parameters $\mu$ (in univariate models) and $\mu_1, \mu_2$ (in bivariate models), their prior distributions can be taken to be a member of the same family of the distribution which are being used in the mixture model (e.g., von Mises sine prior for $(\mu_1, \mu_2)$ in a von Mises sine mixture model) to aid conjugacy. \cite{lennox:2009} use this conjugate prior for the mean parameters in their von Mises sine (infinite) mixture model. Note that conjugacy for the mean parameter is not achievable except in trivial cases in the wrapped normal distributions (both univariate and bivariate).  In \bambi we set a uniform prior over $[0, 2\pi)$ (if univariate) or $[0, 2\pi)^2$ (if bivariate)  for the mean parameter(s), which can be viewed as a special case of the von Mises and wrapped normal distributions (see Sections~\ref{wnmodels} and \ref{vmmodels}). Conjugacy is also possible for the concentration and association parameters, e.g., \cite{lennox:2009} consider such a family for von Mises sine model. However, that approach does not aid sampling, as the resulting unnormalized densities involve infinite sums of products of modified Bessel functions.  %As an alternative, we suggest the use of simple prior distributions for these parameters by analogy between these circular models and their non-circular Gaussian counterparts: gamma priors for $\kappa$ (in univariate models) and $\kappa_1, \kappa_2$ (in bivariate models), and a normal prior with zero mean for $\kappa_3$. 
As a simple alternative, we suggest using independent normal distributions with zero mean as the prior for the association parameter $\kappa_3$, as well as for the log of the concentration parameters $\kappa$, $\kappa_1$, and  $\kappa_2$ (i.e.,  the prior for concentration parameters are log normal). These prior distributions can be made informative or diffuse through appropriate choices of the variance hyper-parameter. Priors are assigned independently to each parameter, and truncation is performed to reflect any specified constraints in the model (such as $\kappa_3^2 < \kappa_1 \kappa_2$ in a bivariate wrapped normal model, and a von Mises sine model with unimodal density). 

Care is required in the selection of prior for the mixing proportions $\pb$, as an ill-chosen prior may result in very poor fits. This is particularly true when $K$ is too large (i.e., the mixture is overfitted). Note that overfitting is a necessary step when the true number of components is unknown and needs to be estimated, see Section~\ref{compsizedet} for more details.  %This is due to the fact that the prior distribution for $\pb$ play a crucial role in determining how overfitted mixture components are handled \citep{fruhwirth:2011}.  
It is customary to assume a Dirichlet$(\alphab)$ prior for $\pb$, where $\alphab^\top = (\alpha_1, \cdots, \alpha_K)$ with $\alpha_j > 0$, often with the special case  $\alpha_j = \alpha_0$ for all $j$. When the mixture is overfitted, the asymptotic results in \cite{rousseau:2011} show that $\alpha_j$'s strongly influence how the spurious mixture components are handled by the limiting posterior density. In particular, if $\max_{j=1,\cdots, K} \alpha_j < d /2$, where $d = \dim \thetab_j$, then the spurious components vanish asymptotically. On the other hand, if $\min_{j=1,\cdots, K} \alpha_j > d /2$, then the spurious components asymptotically get  superimposed on some of the existing components with positive mixing proportions \citep[Section~10.3.1]{fruhwirth:2011}. The subsequent estimation of $K$ depends on which way the overfitting is handled by the posterior density (see Section~\ref{compsizedet}); thus $\alpha_j$'s all need to be appropriately either small or large \citep[Section~10.3.2]{fruhwirth:2011}.  A uniform prior with $\alpha_j=\alpha_0 = 1$ is a rather poor choice in this regard. In \bambi estimation of $K$ is done assuming the use of $\alpha_j > d/2$ for all $j$ in conjunction with a model selection criterion;  our default is $\alpha_j = \alpha_0 =  (r +r(r+1)/2)/2 + 3$ as used in \citet[Section~10.3.4]{fruhwirth:2011}, where $r$ denotes the dimension of the data, i.e., $r$ is 1 or 2 according as whether the model is univariate or bivariate (and consequently, all $\alpha_j$'s are either $4$ or $5.5$).

\subsection{Estimating the number of components $K$ from data} \label{compsizedet}

%To fit a finite mixture model, the number of mixture components $K$ must be specified. 
Suppose the data were generated from a mixture of $K_\tru$ (non-empty, non-identical) components. In practice, $K_\tru$ will not be known, and therefore mixture modeling requires estimating the appropriate number of components from the data.

In the Bayesian setting, the estimation of $K_\tru$ requires an overfitted mixture model, i.e., one that has spurious or superfluous components. There are two ways of introducing superfluous components to overfit a mixture model, and the subsequent estimation of $K_\tru$ should reflect which way is taken. First, the superfluous components can be arbitrarily introduced at regions with no data points (``leave some groups empty''), and assigned zero mixing proportions. Then, the number of non-empty components in the fitted mixture provides a good estimate of $K_\tru$. Second, the spurious components can be superimposed on some of the existing components (``let two component-specific parameters be identical''), and assigned positive mixing proportions. Here, the total number of components in the fitted mixture, after accounting for model complexity (via some model selection criterion), provides a reasonable estimate for $K_\tru$.  Note that the prior distribution of the mixing proportion $\pb$ affects the way overfitting is handled by the posterior, and hence the associated prior hyper-parameters need to be wisely chosen (see Section~\ref{prior}). A detailed discussion on the estimation of the number of components can be found in \citet[Section~10.3.1]{fruhwirth:2011}. 

In \bambi we assume that the superfluous components are introduced in the second (``let  two component-specific parameters be identical'') way. Consequently, $K_\tru$ is estimated by first incrementally fitting the data with one additional component (starting from $K=1$), until a model with $K+1$ component fails to improve upon the previous fit with $K$ component (as determined through a model selection criterion); that value of $K$ is then used as an estimate of $K_\tru$. There exist multiple model selection criteria in the literature; we review six such criteria implemented in \pkg{BAMBI} and comment on their applicability in MCMC simulations. In the following, $\etab = (\thetab, \pb)$ denotes the entire parameter vector, and $\yb =(y_1, \ldots, y_n)$ is the vector/matrix of $n$ independent  observations.

\begin{enumerate}
    	\item \textit{Watanabe-Akaike Information Criterion (WAIC)} (\cite{watanabe:2013, gelman:2014:paper}). Given the dataset $y_1, \cdots, y_n$, the Markov chain realizations $\{\etab_1, \cdots, \etab_N \}$ of the parameter vector, and the pointwise densities $\{p(y_i \mid \etab_s): i = 1, \cdots, n; \: s = 1, \cdots, N \}$, define the computed log pointwise posterior predictive density
    	\[
    	\text{LPPD} = \sum_{i=1}^n \log \left(\frac1N \sum_{s=1}^N p(y_i \mid \etab_s) \right).
    	\]
    	Then WAIC is defined as
    	\[
    	\text{WAIC} = \text{LPPD} - p_W
    	\]
    	where $p_W$ is a correction term to adjust for effective number of parameters. Two forms for the adjustment terms are proposed in the literature, both being approximations based on Bayesian cross validation. In the first approach, (computed) $p_W$ is defined as
    	\[
    	p_W = 2 \sum_{i=1}^n \left[ \log \left(\frac1N \sum_{s=1}^N p(y_i \mid \etab_s) \right) -\frac1N \sum_{s=1}^N \log \:p(y_i \mid \etab_s)  \right]
    	\]
    	whereas, in the second approach, (computed) $p_W$ is defined by $p_W = \sum_{i=1}^n \widehat{\text{var}}\: \log \: p(y_i \mid \etab)$, where for $i = 1, \cdots, n$, $\widehat{\text{var}}\: \log \: p(y_i \mid \etab)$ denotes the estimated variance of $p(y_i \mid \etab)$ based on the realizations $\etab_1, \cdots, \etab_N$.
    	
    	\item \textit{Leave One Out Cross Validation Information Criterion (LOOIC) \citep{vehtari:2017}}. Under the same set-up as WAIC, the LOOIC is defined as
    	\[
    	\text{LOOIC} = \sum_{i=1}^n \log \left( \frac{ \sum_{s=1}^N w_i^s p(y_i \mid \etab_s)}{\sum_{s=1}^N w_i^s} \right)
    	\]
    	where for each $s = 1, \cdots, N$, $\bm{w}^s =(w_1^s, \cdots, w_n^s)$ is a vector of importance sampling weights, typically calculated via the Pareto smoothed importance sampling method \citep[PSIS;][]{vehtari:2015}. Because of the importance sampling weights,  LOOIC can be more stable in practice than WAIC. See \cite{vehtari:2017} for a gentle and thorough introduction to both WAIC and LOOIC, including applications and case studies. 
    	
    	Because both WAIC and LOOIC are based on the mixture likelihood and do not explicitly depend on the sampled model parameters (thus, remain unaffected by the presence of multiple permutation and non-permutation modes), in \pkg{BAMBI} we recommend using either of these two criteria for selecting the number of mixture components.  Both WAIC and LOOIC are made available in \bambi via their implementations in the \proglang{R} package \pkg{loo} \citep{loo_rpack}, which also provides a \code{compare()} function for comparing WAICs/LOOICs based on estimated difference in expected log predictive density (elpd). In BAMBI, during an incremental model fitting via \code{fit_incremental_angmix} with \code{crit = 'WAIC'} or \code{crit = 'LOOIC'}, a test of hypothesis $H_{0K}:$ elpd for the fitted model with  $K$ components $\geq$ elpd for the fitted model with $K + 1$ components, is performed at every $K \geq 1$. The test-statistic used for the test is an approximate z-score based on the normalized estimated elpd difference between the two models (obtained from \code{compare()}, which provides estimated elpd difference along with its standard error). The incremental fitting stops if  $H_{0K}$ cannot be rejected (at level \code{alpha}, defaults to 0.05) for some $K \geq 1$; this $K$ is then regarded as the optimum number of components. 

	\item \textit{Marginal Likelihood (ML)}. Marginal likelihood is arguably the most natural and intuitive model selection criterion that is used in the Bayesian paradigm. As the name suggests, marginal likelihood is the likelihood obtained by integrating out the parameters from the joint density of the data and parameters, and provides a natural way of measuring the ``marginal'' effect of data. In the context of Bayesian model selection, marginal likelihood provides a way of selecting an optimum model in that the model with largest marginal likelihood provides the best fit.  Given the likelihood $L(\yb \mid \etab)$ and prior density $\pi(\etab)$, marginal likelihood is the constant (dependent only on the data):
	\[
	m(\yb) = \int_{\mathcal{E}} L(\yb \mid \etab) \pi(\etab) \: d\etab
	\]
	where $\mathcal{E}$ denotes the support of the parameter vector $\etab$. Note that $m(\yb)$ is the reciprocal of the normalizing constant required to define the posterior density. Evaluation of the marginal likelihood $m(\yb)$ in practice however is typically challenging, as it tends to be a high-dimensional intractable integral (as in our case). Multiple estimation techniques based on samples from the posterior density $\pi(\etab \mid \yb)$ have been proposed in the literature; in \bambi we implement bridge sampling \citep{meng:wong:1996, meng:schilling:2002}.
%	, a general approach that works well in many applications. 
%	one of the most general and useful approach approach even when the parameter vector is of high dimension. 
	Briefly, the key idea is to first write $m(\yb)$ as
	\[
	m(\yb) = \frac{\int_\mathcal{E} h(\etab) L(\yb \mid \etab) \pi(\etab) g(\etab)\:d\etab}{\int_\mathcal{E} h(\etab) g(\etab)  \pi(\etab \mid \yb) \:d\etab}
	= \frac{E_{g} \left[h(\etab) L(\yb \mid \etab) \pi(\etab)\right] }{E_{\pi(\cdot \mid \yb)} \left[h(\etab) g(\etab) \right] }
	\]
	where $g$ is a density, called the \emph{proposal density}, and $h$ is a function, called the \emph{bridge function}. Then one approximates the above ratio by
	\[
	\hat m(\yb) = \frac{\frac{1}{n_2} \sum_{j=1}^{n_2} h(\etab_j^*) L(\yb \mid \etab_j^*) \pi(\etab_j^*)}{ \frac{1}{n_1} \sum_{j=1}^{n_1}h(\etab_j^\dagger) g(\etab_j^\dagger)}
	\]
	where $\etab_1^\dagger, \cdots, \etab_{n_1}^\dagger$ are MCMC samples from the posterior density $\pi(\cdot \mid \yb)$ and $\etab_1^*, \cdots, \etab^*_{n_2}$ are samples from the proposal density $g$. Note that $h$ and $g$ play crucial roles in the estimation of $m(\yb)$, and must be optimally chosen for accurate results. See \cite{gronau:2017} for a gentle and detailed tutorial on bridge sampling. In \pkg{BAMBI}, marginal likelihood can be used to select the optimal number of mixture component in a \code{fit_incremental_angmix} run, by specifying \code{crit = 'LOGML'}. This will ensure computation of the log marginal likelihood via bridge sampling for every mixture model during the incremental run, and the model attaining the first minimum negative log marginal likelihood will be treated as the optimum model. 
	
	It should however be noted  that for mixture models, optimal selection of $h$ and $g$ is typically difficult  due to the multi-modality of the posterior density; see \citet[Chapter 5]{fruhwirth:2006}, for a review of some of the available methods. In  \pkg{BAMBI}, marginal likelihood is computed by leveraging the function \code{bridge_sampler} from the \proglang{R} package \pkg{bridgesampling} \citep{bridgesampling_rpack}, and the authors of \pkg{bridgesampling} warn against the use of  \code{bridge_sampler} in mixture models \citep[see the Discussion section in][]{bridgesampling_rpack}. As such, use of this method in \bambi is not recommended, even though the functionality is available. 
%	Bridge sampling in \bambi leverages the function \code{bridge_sampler} from the \proglang{R} package \pkg{bridgesampling} \citep{bridgesampling_rpack}. It is to be noted however, 

	\item \textit{Akaike Information Criterion (AIC)} \citep{akaike:1974}.  Let $\hat L$ be the maximum value of the likelihood function for the model and  let $m$ be the number of estimated parameters in the model. Then AIC is defined as
	\[
	\text{AIC} = -2 \log \hat{L} + 2m.
	\]
	
	\item \textit{Bayesian Information Criterion (BIC)} \citep{schwarz:1978}. Under the same setup, if $n$ denotes the number of data points, BIC is defined by
	\[
	\text{BIC} = - 2 \log \hat{L} + m \log(n).
	\]
	Observe that both AIC and BIC depend on the maximum value $\hat L$, which, in general, is not directly available in MCMC simulations. A possibly suboptimal estimate of the global maximum is given by the maximum value of the likelihood function computed at the MCMC samples. 
	
	During model selection, the model with minimum AIC (or BIC) can be treated as the optimal model. In \bambi AIC/BIC can be used for selecting optimum number of components in an \code{fit_incremental_angmix} run by specifying \code{crit = 'AIC'} or \code{crit = 'BIC'}. This will ensure computation of AIC/BIC of every mixture model fitted during the incremental fitting; the model attaining the first minimum AIC/BIC will be treated as the optimum model.
	
	It should to be noted, however, that AIC and BIC are both based on asymptotic normality results that \emph{do not} hold for  mixture models with multiple modes, and hence their use in selecting the number of mixture components may lead to inconsistent results. Thus, though implemented, using AIC or BIC is not recommended in \pkg{BAMBI}.  
	
	\item \textit{Deviance Information Criterion (DIC)} \citep{spigelhalter:2002}. DIC is another model selection criterion, which, similar to AIC and BIC, is based on an asymptotic result for large samples. Let   $D(\etab ) = -2\log p(\dat \mid \etab )$  denote the deviance, where $\etab$ denotes the vector of all parameters in the model and $p(\dat \mid \etab)$ denotes the likelihood. Let $\{\etab_1, \cdots, \etab_N \}$ denote the MCMC realizations of the parameters. Define (estimated) effective number of parameters $p_D$ by  $p_D = \bar{D}(\etab) - D({\bar \etab})$, where $\bar{D}(\etab) = N^{-1} \sum_{s=1}^N D(\etab_s)$  and $\bar{\etab} = N^{-1} \sum_{s=1}^N \etab_s$. Another commonly used form for $p_D$ is given by $p_D = \widehat{\text{var}}\: D(\etab) / 2$, \citep{gelman:2014} where $\widehat {\text{var}}\: D(\etab)$ denotes the estimated variance of $D(\etab)$ based on $\etab_1, \cdots, \etab_N$. Then DIC is defined as
	\[
	\text{DIC} = p_D + \bar D(\etab) = D (\bar \etab) + 2 p_D.
	\]
	In \bambi AIC/BIC can be used for selecting optimum number of components in an \code{fit_incremental_angmix} run by specifying \code{crit = 'DIC'}. This will ensure computation of DIC of every mixture model fitted during the incremental fitting; the model attaining the first minimum DIC will be treated as the optimum model.
	
	It should be noted that use of DIC can be unstable in practice. For example,  if the first form of $p_D$, i.e., $p_D = \bar{D}(\etab) - D({\bar \etab})$) is used, DIC becomes heavily dependent on the plug-in estimator $\bar \etab$. However, in Bayesian mixture modeling the posterior mean is not always a suitable plug-in estimator for the parameter vector as it may lie between different modes of the posterior density \citep{plummer:2008}. The problem is exacerbated by the presence of label switching in the MCMC samples (see Section~\ref{labswitch}). Moreover, depending on how the information on latent component indicators are handled, multiple versions of DIC can be constructed here. \citet{celeux:2006} consider no less than eight variants, but are unable to recommend any of them for practical use. Likewise we caution against the use of DIC, although the functionality is available in \pkg{BAMBI}. % Thus, although implemented in \pkg{BAMBI}, we recommend against using this criterion, and if used, we suggest using it after the label-switching are fixed. 

\end{enumerate}

\subsection{Label switching} \label{labswitch}
Label switching is a fundamental aspect of Bayesian mixture modeling that requires proper care. Briefly, when exchangeable priors $\pi(\thetab_j)$'s are placed on the model parameters $\thetab_j$'s, the resulting posterior distribution becomes invariant to permutation in the component label $j$'s. As a result, the posterior density  consists of symmetric or permutation modes that are identical up to permutation of component labels. A well mixing  MCMC algorithm will explore these permutation modes,  causing the component labels to switch over the course of an MCMC simulation. This phenomenon is called label switching, and is required in MCMC-based Bayesian mixture modeling to justify convergence. A fundamental limitation of 
MCMC-based Bayesian mixture modeling is that the chains may become trapped at local modes, rather than fully exploring the symmetric modes. One possible remedy is to run multiple independent chains to improve exploration of the posterior.  Alternatively, one may embed a deliberate random relabeling step into the sampler, i.e., adopt a so-called \emph{permutation sampling} \citep{fruhwirth:2001} scheme: after each draw of the  random allocation, components are relabeled according to a random permutation of $1, \cdots, K$. This is, in fact, a specific example of a \emph{sandwich algorithm} \citep{Meng:1999, Yu:2011, hobert:2011}, where a computationally inexpensive step (of drawing a random permutation of $\{1,\dots, K\}$, and then relabeling the components according to that random permutation) is \emph{sandwiched} in between the two steps (drawing the allocation vector, and drawing the component-specific parameters) of a Data Augmentation algorithm. Sandwich algorithms often converge faster  than the original Data Augmentation algorithm; in the case of permutation sampling, the chain is forced to visit the permutation modes (and potentially the non-permutation modes) more frequently. These potential improvements in convergence  would not be achieved by simply randomly switching labels of the MCMC samples post-hoc. In permutation sampling, inclusion of the random label switching step results in a modified MCMC algorithm that  is theoretically proven to be at least as good as the original Data Augmentation algorithm in terms of convergence rates \citep{khare:hobert:2011}. %Post-hoc label switchings on the other hand, do not have any impact on a chain's mixing propoerties, and thus do not improve quality of the MCMC samples.    
%However, care must be taken if RWMH or HMC updates are tuned according to the scales of the component-specific sampled parameters, as random relabeling makes .
However, care must be taken if RWMH or HMC updates for the component specific parameters are adaptively tuned according to the scales and variabilities of the sampled model parameters (to do so properly requires keeping track of each component label). In \bambi permutation sampling can be done after burn-in, by setting \code{perm_sampling = TRUE} (defaults to \code{FALSE}) in a \code{fit_angmix} call. 

%when a full Gibbs sampler is used. When Gibbs updates are combined with Metropolis or Hamiltonian updates, as done in \pkg{BAMBI}, it  permutation sampling may hinder with the tuning process, if the tuning process monitors scales of the sampled parameters. In such cases, it may be more convenient to perform tuning first, and then do permutation sampling, as done in \pkg{BAMBI}. In \bambi Permutation sampling can be done by setting \code{perm_sampling = TRUE} (defaults to \code{FALSE}) in a \code{fit_angmix} call. 

Although label switching is required for MCMC convergence in Bayesian mixture modeling, its presence in MCMC samples makes inference on the different components via posterior means or quantiles challenging (note that MAP estimation is not affected). A number of techniques have been proposed to handle this problem; see, e.g.,~\cite{jasra:2005} and \cite{rodriguez:2014}. The available methods either need to be applied during MCMC sampling (\textit{on-line}) or after simulating the entire chain (\textit{post processing}).%, with the latter being more popular as it does not alter the original MCMC algorithm. %A general strategy used in many of these methods is to  consider, for each iteration, a particular permutation of the component labels that minimize a (given) cost function. 

Several post-processing techniques that undo label switching are implemented in the useful \proglang{R} package \pkg{label.switching} \citep{label.switching:2016}, and in \pkg{BAMBI}, we provide a wrapper called \code{fix_label} for the main \code{label.switching} function from that package. All the methods available in \code{label.switching} are appropriately implemented in \code{fix_label}, which takes an \code{angmcmc} object (see Section~\ref{sec_angmcmc}) as input, and may require additional user inputs depending on the method used. The Kullback-Leibler divergence based method by \citet{stephens:2000} (\code{method = 'STEPHENS'}) is used by default if the permutation sampling is performed during original MCMC run; otherwise, the default method is the data-based algorithm of \citet{rodriguez:2014} (\code{method = 'DATA-BASED'}); neither requires any additional input other than an \code{angmcmc} object.

\subsection{Initialization of the parameters, and the use of multiple chains for faster convergence}
\label{initial_vals}

MCMC algorithms can converge faster if the initial values are chosen well.
%This is particularly true in Bayesian mixture modeling, where  an erroneous initial (latent) allocation can lead the chain to a wrong solution for a large number of iterations.  
%If the initial value  is ``far'' from the target, the Markov chain can get stuck at low posterior density regions before it ``reaches'' stationarity.
The function \code{fit_angmix}, when called without supplying starting parameter values, will automatically initialize the latent allocation to the mixture components. The default (and recommended) option is initialization via a $k$-means algorithm:  toroidal angle pairs are first projected onto the surface of a unit sphere, and then Cartesian coordinates of the projected spherical points are clustered.  Random initial allocation is also provided as an option, but is not recommended as it may lead to slow convergence.  Once an initial allocation is obtained, component specific parameters are  estimated via method of moments (see \cite{jammalamadaka:2001, singh:2002, mardia:2007} for more details on these estimators), and the mixing proportions are estimated by the sample proportions.  

When explicit starting values of the model parameters and mixing proportions are provided to \code{fit_angmix},  no initial allocation is necessary. This is particularly useful for estimating the number of components, when mixture models are being fitted incrementally (e.g., via \code{fit_incremental_angmix}).  Under incremental model fitting the parameters of a $K+1$ component mixture can be  initialized directly from the parameter estimates from a $K$ component mixture; the  extra component is simply taken as a ``copy'' of an existing component (preferably the one with the largest mixing proportion), and the associated mixing proportion is distributed equally between the two identical components. This method is expected to work well when the posterior density handles overfitting in the ``let two component-specific parameters be identical'' way  (see Sections~\ref{prior} and \ref{compsizedet}), which is the approach taken in \pkg{BAMBI}. As such, this is the default method of initializing parameters in \code{fit_incremental_angmix}  when $K > 2$; however $k$-means allocation followed by moment estimation can also be used, by setting \code{prev_par = FALSE}. 
%%%%  Pls re-proofread carefully up to here.

Finally, we note that even with good initial values, MCMC samplers can still get trapped in local modes for a large number of iterations,  rather than fully exploring the posterior density. 
One possible remedy is to run multiple independent chains to improve exploration of the posterior, which is implemented in \pkg{BAMBI}. The argument \code{n.chains} (set to 3 by default) specifies the number of independent chains to run in \code{fit_angmix}. These chains can be run in parallel, see Section~\ref{sec_bambi_fn}~Category~IV for more details.

\section{BAMBI Package} \label{bambipack}
This section overviews the functionalities of \pkg{BAMBI}. At the core of the package is the \code{angmcmc} object, which is created when a model fitting function is used. In the following we first describe the \code{angmcmc} objects, then describe the datasets included in \pkg{BAMBI}, and finally discuss the functions available in \pkg{BAMBI} and comment on their usability. However, this is not an exhaustive manual; all functions in \pkg{BAMBI} include \proglang{R} documentations, which should serve as the definitive resources.

\subsection{"angmcmc" objects} \label{sec_angmcmc}

\code{angmcmc} objects are classified lists belonging to the S3 class \code{angmcmc} that are created when the function \code{fit_angmix} is used. An \code{angmcmc} object contains a number of elements, including the dataset and its dimension (i.e., univariate or bivariate), the model being fitted, the tuning parameters used,  MCMC samples of the parameter vector, and at each iteration the (hidden) component indicators for data points,  log-likelihood and log posterior density values (up to additive constants). When printed, an \code{angmcmc} object returns a brief summary of the function arguments used and the acceptance rate of the proposal states (in HMC and RWMH). An \code{angmcmc} object can be used as an argument for the diagnostic and post-processing functions available in \pkg{BAMBI} for making further inferences.

\subsection{Datasets}

\bambi contains two illustrative datasets, namely \code{wind} (univariate) and \code{tim8} (bivariate), each measured in the radian scale $[0, 2\pi)$.

\begin{enumerate}
	\item[wind] The \code{wind} dataset consists of 239 observations on wind direction (originally measured in 10s of degrees, and then converted into radians) measured at Saturna Island, British Columbia, Canada during October 1-10, 2016 (obtained from Environment Canada website). There was a severe storm during October 4-7 in Saturna Island, which caused significant fluctuations in wind direction.
	
    \item[tim8] The dataset \code{tim8}  consists of 490 pairs of backbone dihedral angles $(\phi, \psi)$ for \textit{Triose Phosphate Isomerase} (8TIM). The three dimensional structure of  8TIM is available from the protein data bank (PDB).  The  protein is an example of a TIM barrel, a common type of protein fold exhibiting alternating
    $\alpha$-helices and $\beta$-sheets. The backbone angles for this protein were obtained by using the DSSP software (\cite{touw:2015, kabsch:1983}) on the PDB file for 8TIM, and then converted into radians. 			
\end{enumerate}

\subsection{Functions} \label{sec_bambi_fn}
In \pkg{BAMBI}, all five models described in Section~\ref{intro}, namely the univariate von Mises (\code{vm}), univariate wrapped normal (\code{wnorm}), bivariate von Mises sine (\code{vmsin}), bivariate von Mises cosine (\code{vmcos}) and bivariate wrapped normal (\code{wnorm2}), and their (within same model) mixtures  are implemented. The functions in \bambi can be classified into six major categories:

\paragraph{Functions for evaluating density and generating random samples from an angular distribution.} \label{single_fn} 
  The functions \code{dvm}, \code{dwnorm}, \code{dvmsin}, \code{dvmcos} and \code{dwnorm2} evaluate the density and the functions \code{rvm}, \code{rwnorm}, \code{rvmsin}, \code{rvmcos} and \code{rwnorm2}  generate random samples from the models \code{vm}, \code{wnorm}, \code{vmsin}, \code{vmcos} and \code{wnorm2} respectively. The parameters of the models are specified as arguments; otherwise, default values (zero means, unit concentrations, and zero association) are used.
  
Density evaluations require computation of the normalizing constants, which for the \code{vmcos} model requires proper care, especially when $\kappa_1$, $\kappa_2$ or $|\kappa_3|$ is large. This is because the analytic expression involves infinite (alternating if $\kappa_3 < 0$) series of product of modified Bessel functions, which become numerically unstable when these parameters are large. As such, when $\kappa_3 < -5$ or $\max(\kappa_1, \kappa_2, |\kappa_3|) > 50$, the reciprocal of the integral for the normalizing constant is evaluated numerically using a (quasi) Monte Carlo method. By default, \code{n\_qrnd =} $10^4$ pairs of Sobol numbers are used for this purpose; however, \code{n\_qrnd}, or a two-column matrix \code{qrnd} containing random/quasi-random numbers between 0 and 1 can be supplied for this approximation\footnote{The user may perform a Monte Carlo approximation for the normalizing constant even when numerical evaluation of the analytic formula is stable, by changing \code{force\_approx\_const} to \code{TRUE} from its default value \code{FALSE}.}. For \code{vmsin} model, evaluation of the constant via its analytic form is much more stable, as the associated infinite series consists only of non-negative terms.  For univariate and bivariate wrapped normal models, the default absolute integer displacement for approximating the Wrapped Normal sum is 3, which can be changed to any value in $\{1, 2, 3, 4, 5\}$, through the argument \code{int.displ}. Note that \code{int.displ} regulates how many terms would be used to approximate the infinite sum present in the univariate and bivariate wrapped normal densities in \eqref{wn} and \eqref{bvwn}. For example, \code{int.displ} = $M$ implies that the infinite sum in the univariate wrapped normal density will be approximated by a finite sum of $2M +1$ values, with the summation index $\omega$ ranging over $\{0, \pm 1, \dots, \pm M \}$. For a bivariate wrapped normal density, setting \code{int.displ} = $M$ will ensure that the infinite double sum is approximated by a finite double sum, with the paired summation index $(\omega_1, \omega_2)$ ranging over $\{0, \pm 1, \dots, \pm M \}^2$.

Random data generation from the von Mises models (both univariate and bivariate) is done via   rejection sampling. In the univariate case, the von Mises random deviates are efficiently generated using a rejection sampling scheme from a wrapped Cauchy distribution \citep{best:fisher:1979, mardia:jupp:2009}. For the bivariate models, two forms of random samplings are implemented. In the first method, random deviates are generated via a naive bivariate rejection sampler with uniform proposal density (the majorization constant is numerically evaluated). In the second method (proposed in the web appendix of \cite{mardia:2007}), random deviates are first generated from the marginal distribution of one coordinate, then the other coordinate is drawn from the corresponding conditional distribution (which is von Mises in both models). The authors note that this latter scheme has a typical efficiency rate of at least 60\%. It is to be noted that while this scheme is usually more efficient than the naive rejection sampler (especially when the concentration is high), it does have an often substantial overhead due to the numerical computations required for determining appropriate proposal density parameters. These overheads often outweigh efficiency gains, especially if the sample size and/or concentration parameters are small. In \code{BAMBI}, therefore, the naive rejection sampler is used by default when the sample size is moderate or small (< 100), or when the concentration parameters are small (< 0.1).\footnote{When the concentration parameters are large, the density becomes concentrated in (a) very narrow region(s). As such, efficiency of the naive sampler, which draws proposal random deviates from a uniform density over the entire support, can be 15-20\% or less. However, even then the overall runtimes of the naive method are often still comparable to \citet{mardia:2007}'s method when the sample size is moderate or small.} %The C++ programs for computing modified Bessel functions of the first kind with integer order needed in the von Mises densities (both univariate and bivariate) are adapted from \cite{press:1996}, and are empirically faster than the built-in general \proglang{R} implementations. 
For wrapped normal distributions (both univariate and bivariate) a random deviate is easily obtained by sampling from the unwrapped normal distribution (using \code{rnorm} if univariate, and \code{rvmnorm} from package \pkg{mvtnorm} if bivariate), and then wrapping into $[0, 2\pi)$.

\paragraph{Functions for evaluating density and generating random samples from a finite mixture model with a fixed number of components.} \label{mix_fn} Analogous to the functions for single component densities, the functions \code{dvmmix}, \code{dwnormmix}, \code{dvmsinmix}, \code{dvmcosmix} and \code{dwnorm2mix} evaluate the density and the functions \code{rvmmix}, \code{rwnormmix}, \code{rvmsinmix}, \code{rvmcosmix} and \code{rwnorm2mix}  generate random samples from mixtures of \code{vm}, \code{wnorm}, \code{vmsin}, \code{vmcos} and \code{wnorm2} respectively. All model parameters and mixing proportions must be provided as input arguments.

\paragraph{Functions for visualizing and  summarizing bivariate  models.} \label{summ_model_fn} To visualize the density for any of the three bivariate angular mixture models (with specified parameters and number of components) considered in this paper, the functions \code{surface_model} and \code{contour_model} can be used, which respectively plot the surface and the contour of a mixture density. To compute summary statistics for a single bivariate  angular distribution, the function \code{circ_varcor_model} can be used, which calculates the circular variance and  correlation coefficients (both Jammalamada-Sarma and Fisher-Lee forms, see Section~\ref{summ_meas}).  However, summarizing angular mixture models via circular variances and correlations is not recommended, as interpretations of the results can be challenging when multiple clusters are present in the data.\footnote{To calculate the circular variances and correlations for a mixture density, one can simulate from the density first, and then approximate the population quantities by their sample analogs on the basis of the simulated data.} The function \code{circ_cor} implements the sample Jammalamada-Sarma and Fisher-Lee circular correlation coefficients, as well as two forms of Kendall's tau \citep{fisher:1983, zhan:2017} as non-parametric measures. The sample circular variance can be computed using the \code{var.circular} function from \proglang{R} package \pkg{circular} \citep{circular_Rpack}.

\paragraph{Functions for fitting a single component model or a finite mixture model with specified number of components to a given dataset using MCMC.} \label{fit_fn} Given a dataset, using methods discussed in Section~\ref{meth}, the function \code{fit_angmix}  generates MCMC samples for parameters in an angular  mixture model with a specified number of components. Available models for bivariate input data (which must be supplied as a two-column matrix or data frame) are \code{vmsin}, \code{vmcos} and \code{wnorm2}, and for univariate data are \code{vm} and \code{wnorm}.  The argument \code{ncomp} specifies number of components  in the mixture model, with \code{ncomp = 1} representing the single component case (i.e.,~fitting a single density). A Gibbs sampler is used to generate latent component indicators, and conditional on this allocation the model parameters are sampled either by HMC (default), or by RWMH (can be specified through the argument \code{method}). A permutation sampling step can be added after burn-in by setting the logical argument \code{perm_sampling} to \code{TRUE}. The tuning parameters \code{epsilon} and \code{L} in HMC, and \code{propscale} in RWMH have pre-specified default values, and there is an auto-tuning feature for \code{epsilon} and \code{propscale} which is used by default, but can be turned off by setting the logical argument \code{autotune = FALSE}. The burn-in proportion can be specified through the argument \code{burnin.prop}, which is set to 0.5 by default. For HMC, the option to use random \code{epsilon} and \code{L} at each iteration is specified via the logical arguments \code{epsilon.random} and \code{L.random} respectively. In practice, using multiple chains is recommended, and the argument \code{n.chains} specifies the number of chains to be used. These chains can be run in parallel, if the logical argument \code{chains_parallel} is set to \code{TRUE}. The parallelization is implemented using \code{future_lapply} from \proglang{R} package \pkg{future.apply}; an appropriate \code{future::plan()} must be set in advance to ensure that the chains run in parallel   (otherwise the chains will run sequentially), see Section~\ref{illus} for an example. To retain reproducibility while running multiple chains in parallel, the \emph{same} RNG state is passed at the beginning of each chain. This is done by specifying \code{future.seed = TRUE} in \code{future.apply::future_lapply} call. Then at the beginning of the $i$-th chain, before drawing any parameters, $i$-many Uniform(0, 1) random numbers are generated using \code{runif(i)} (and then thrown away). This ensures that the RNG states across chains prior to sampling of the parameters are different (but reproducible), and hence, no two chains can become identical, even if they have the same starting and tuning parameters. This however, creates a difference between a \code{fit_angmix} call with multiple chains which is run sequentially by setting \code{chains_parallel = FALSE}, and another call which is run sequentially because of a sequential \code{plan()} (or no \code{plan()}), with \code{chains_parallel = TRUE}. In the former, \code{base::lapply} instead of \code{future_lapply} is used, which means that \emph{different} RNG states are passed at the initiation of each chain.

There are options for choosing prior hyperparameters. The prior for the association parameter $\kappa_3$ in bivariate models, and the log of the concentration parameters $\kappa$ (in univariate models), $\kappa_1$ and $\kappa_2$ (in bivariate models) are taken to be the normal distribution (i.e., the priors for $\kappa, \kappa_1, \kappa_2$ are log normal), all with zero mean. The default variance for the normal prior is 1000, which provide diffuse priors, although they can be set by the user via the argument \code{norm.var}.  A fixed non-informative Uniform$(0, 2\pi)$ prior is used for the mean parameters. The Dirichlet prior parameters $\alpha_j$'s for the mixing proportions $p_j$'s can be supplied through the argument \code{pmix.alpha}, which can either be a positive real number (same for all $\alpha_j$), or a vector of the same length as \code{pmix}. It is recommended that $\alpha_j$'s be chosen large for proper handling of overfitted mixtures;  following \citet[Section~1.3.2]{fruhwirth:2011}, all $\alpha_j$'s default to $(r+r(r+ 1)/2)/2 + 3$, where $r$ denotes dimension of the data (i.e., $r=1$ for univariate data, and $r=2$ for bivariate data). See Sections~\ref{prior} and \ref{compsizedet} for more details.

The argument \code{cov.restrict} specifies any (additional) restriction to be imposed on the component specific association parameters while fitting the model. The available choices are \code{"POSITIVE", "NEGATIVE", "ZERO"} and \code{"NONE"}. Note that when \code{cov.restrict = "ZERO"}, \code{fit_angmix} fits a mixture with product components. By default, \code{cov.restrict = "NONE"}, which does not impose any (additional) restriction.  When \code{model} is \code{"vmsin"} or \code{"vmcos"}, the component densities can be bimodal. However, one can restrict these densities to be unimodal, by setting the logical argument \code{unimodal.component} to \code{TRUE} (defaults to \code{FALSE}). For \code{"wnorm"} and \code{"wnorm2"} models, the default absolute integer displacement for approximating the Wrapped Normal sum is 3, which can be changed to any value in $\{1, 2, 3, 4, 5\}$, through the argument \code{int.displ}. For \code{"vmcos"} model, the normalizing constant is numerically approximated using quasi Monte Carlo method when analytic evaluation suffers from numerical instability. The arguments \code{qrnd} and \code{n_qrnd} can be used to alter the default settings used for these approximations. See the documentation of \code{fit_angmix} for more details.

The function \code{fit_angmix} creates an \code{angmcmc} object, which can be used for assessing the fit, post processing, and estimating parameters. 

\paragraph{Functions for assessing the fit.} Goodness of fit for MCMC-based Bayesian modeling depends on both convergence of the Markov chain and the appropriateness of the model used.  \bambi contains a number of functions that can be used to examine these two aspects. The functions \code{paramtrace} and \code{lpdtrace} respectively plot the parameter and log posterior density traces for visual assessment of convergence. These two functions are called together in the \code{plot} method for \code{angmcmc} objects. The \code{as.mcmc.list} method for \code{angmcmc} objects provides a convenient way of converting an \code{angmcmc} object to an \code{mcmc.list} object from package \pkg{coda}, which provides several additional functions for convergence diagnostics. 
 
Once convergence is justified, the appropriateness of the fitted model can be visually assessed by the S3 functions \code{densityplot} (from \pkg{lattice})   and \code{contour}. The first function plots the density surface (for bivariate data) or density curve (for univariate data) of the \code{fitted} mixture model, and the second plots the associated contour of a bivariate model. Note that these plots provide visual assessment of the goodness of fit by assuming the Markov chains have converged and the parameters can be estimated on the basis of the MCMC samples. As such, convergence of the MCMC samples must be ensured prior to this step. Otherwise these visual diagnostics will lead to misleading conclusions. 

The comparative goodness of fit for two mixture models can be assessed on the basis of model selection criteria implemented in \pkg{BAMBI}, namely, marginal likelihood, AIC, BIC, DIC, WAIC and LOOIC via the functions \code{bridge_sampler.angmcmc}, \code{AIC}, \code{BIC}, \code{DIC}, \code{waic.angmcmc} and \code{loo.angmcmc}. %Except for DIC (which is usually not recommended in Bayesian mixture modeling), all these functions are S3 methods for \code{angmcmc} objects.  
As with the diagnostic plots, one general caveat for using any of these model selection criteria is that one should ensure convergence of the associated Markov chain first; otherwise, the results may be misleading.

 \paragraph{Functions for post-processing and estimating parameters.}
  \pkg{BAMBI} provides several post-processing functions to aid inference on the basis of the generated MCMC samples. The function \code{add_burnin_thin} adds additional burn-in and/or thinning to an \code{angmcmc} object and the function \code{select_chains} extracts a subset of chains. These two functions can be helpful if convergence diagnostics indicate that some of the chains are poor mixing and/or require additional burn-in and thinning for convergence.
  
  As described in Section~\ref{labswitch}, care should be taken to ensure that there is no label switching in the MCMC output if inference is being made on the basis of posterior mean/median. If present, label switching can be fixed by applying the wrapper function \code{fix_label} on the \code{angmcmc} object, which will output another \code{angmcmc} object with label switching fixed.  
  
   Point estimates of the parameters are obtained using the \bambi function \code{pointest} on a fitted \code{angmcmc} object.  The function \code{pointest} calculates point estimates by applying \code{fn} on the MCMC samples, where \code{fn} is either a function, or a character string specifying the name of the function. Default for \code{fn} is \code{mean}, which computes posterior mean. If \code{fn} is \code{"MODE"} or \code{"MAP"} then the (MCMC-based approximate) MAP estimate is returned. Posterior quantiles can be estimated by (sample) quantiles of the MCMC realizations using the S3 function \code{quantile.angmcmc}. These quantiles can be used to construct credible sets. For example, if $\xi_\zeta$ denotes the $\zeta$-th (sample) quantile of the MCMC observations for $0 < \zeta < 1$, the central 95\% credible interval is given by $(\xi_{0.025}, \xi_{0.975})$. Both of these functions can be applied on specific parameters and/or component labels by setting the arguments \code{par.name} and \code{comp.label} accordingly. The S3 function \code{summary.angmcmc} prints (estimated) posterior means and the central 95\% credible intervals for all the parameters.
   
   The estimated latent allocation from an \code{angmcmc} object can be obtained using the function \code{latent_allocation}, which first estimates parameters via \code{pointest}, computes posterior membership probabilities (see (\ref{mem_probs})) for each data point, and then assigns each data point the class with largest membership probability.  The (estimated) log-likelihood of an \code{angmcmc} object can be extracted as a \code{logLik} object using the S3 function \code{logLik.angmcmc}. Note that there are two methods of obtaining the log-likelihood from an \code{angmcmc} object. In the default method (\code{method = 1}), the final  log-likelihood is computed by applying a function \code{fn} (defaults to \code{max}) on the iteration wise log-likelihood values obtained during the original MCMC sampling. On the other hand, if \code{method = 2}, first the parameters are estimated (using \code{pointest}), and then the log-likelihood is computed at the estimated parameters. 

   Density evaluations and random data generation from a fitted model can be done using the functions \code{d_fitted} and \code{r_fitted} respectively. Both functions take an \code{angmcmc} object as input and apply the appropriate model specific density evaluation and random data generation functions with the estimate $\hat \etab$  of the parameter vector $\etab$ obtained via \code{pointest}. The actual MCMC samples for one or more parameters in one or more components from one or more chains can be extracted via the function \code{extractsamples} on an \code{angmcmc} object for further analysis.

   \paragraph{Functions for incremental mixture model fitting and number of components estimation.}  Using the methods and model selection criteria described in Sections~\ref{prior} and \ref{compsizedet}, the  function \code{fit_incremental_angmix} fits mixture models with incremental number of components by calling \code{fit_angmix} at each step, and uses a Bayesian model selection criterion to determine an optimal number of components.  The arguments \code{start_ncomp} and \code{max_ncomp} provide the starting and maximum number of components to be used in the incremental fitting, which are respectively set to 1 and 10 by default. The available  model selection criterion to use (specified via the argument \code{crit}) are \code{'LOGML'}, \code{'AIC'}, \code{'BIC'}, \code{'WAIC'}, \code{'LOOIC'} and \code{'DIC'}, which is computed for every intermediate fit. The initial values of the starting model (or a model with $\leq 2$ components) are obtained by default using moment estimation on $k$-means clusters (they can also be directly supplied by the user). For the subsequent models (with number of components $\geq 3$), the initial values are by default obtained from the MCMC-based MAP parameter estimates for the previous model with one fewer component (see Section~\ref{compsizedet}).  This can be overridden by setting \code{prev_par = FALSE}, to use $k$-means clustering followed by moment estimation instead.   By default, only the ``best'' chain, i.e., the one  with maximum average log posterior density, is used for computation of model selection criterion, and parameter estimation (if \code{prev_par} is set to \code{TRUE}). This default helps safeguard against situations where some of the chains may get trapped at local optima. However, samples from all chains can be used for these computations by setting \code{use_best_chain = FALSE}. The function stops when \code{crit} achieves its first minimum, or when \code{max_ncomp} is reached, and returns a list with the following elements:
   \begin{itemize}
     \item \code{fit.best} is an \code{angmcmc} object corresponding to the optimum or \textit{best} fit.

     \item \code{crit.all} provides a vector (list) of model selection criterion values for each incremental model fitted.

     \item \code{crit.best} is the value of the model selection criterion for the model with optimal number of components.
     
     \item \code{maxllik.all} is the maximum (obtained from MCMC iterations)  log-likelihood for all fitted models.
       
     \item \code{maxllik.best} is maximum log-likelihood for the optimal model.

     \item \code{ncomp.best} is the optimal number of components associated with the ``best'' model.
     
     \item \code{fit.all}  is a list consisting of \code{angmcmc} objects for all number of components fitted during the model selection process. Any element of this list can be used as an argument for any function that takes \code{angmcmc} objects as input. However, this can be very memory intensive, and as such, by default not returned (can be returned by setting \code{return_all = TRUE}).

   \end{itemize}

The \code{angmcmc} object  corresponding to the best fit and the associated value of the model selection criterion  can also be extracted from the output of \code{fit_incremental_angmix}  extracted using the convenience functions \code{bestmodel} and \code{bestcriterion} respectively.

%\end{table}

\section{Illustrations} \label{illus}

In this section we illustrate functionalities of \bambi by fitting mixture models to the angular datasets included in the package.  The following command
\begin{Sinput}
R> library(BAMBI)
\end{Sinput}
loads the package after it has been installed.  For reproducibility of the results presented, the same random seed 12321 is used for all of our examples.

\subsection{Fitting mixture models on the tim8 bivariate data}  \label{TIM8FIT}
The \code{tim8} dataset consists of 490 backbone torsion angle pairs $(\phi, \psi)$ for the protein 8TIM. The protein is an example of a TIM barrel, a common type of protein fold exhibiting alternating $\alpha$-helices and $\beta$-sheets. Its Ramachandran plot (i.e., a scatterplot of $(\phi, \psi)$ pairs) is generated using the following \proglang{R} command and shown in Figure~\ref{rama}.
\begin{Sinput}
R> plot(tim8, pch = 16,
+       xlim = c(0, 2*pi), ylim = c(0, 2*pi),
+       main = "8TIM", 
+       col = scales::alpha("black", 0.6))
\end{Sinput}

Note that this scatterplot projects the torus onto a 2D surface, which cannot show the wraparound nature of the angles.  This projection is not unique and affects the appearance of the scatterplot depending on how the angles are represented, e.g., in $[-\pi, \pi)$ instead of $[0, 2\pi)$. Moreover, one should be careful to note that the  top and bottom boundaries in these plots join together, as do the left and right boundaries. In Figure~\ref{rama}, about 3-6 visually distinct clusters can be seen; however, we need to note that the points around (4.5, 0) and (5.5, 6) in fact may form a single cluster.  Such features cannot be correctly modeled with statistical methods that ignore  circularity. Thus, this is a suitable example for illustrating the need for mixtures of (bivariate) angular distributions.

%indicating that a finite mixture of bivariate angular distributions may be suitable for modeling the data. \emph{TODO: point out why circular methods needed here (e.g. top right is same cluster as bottom right), and that the way you divide 0,2pi etc matters as well}

\begin{figure}[hptb]
	\centering
	\includegraphics[width=4in]{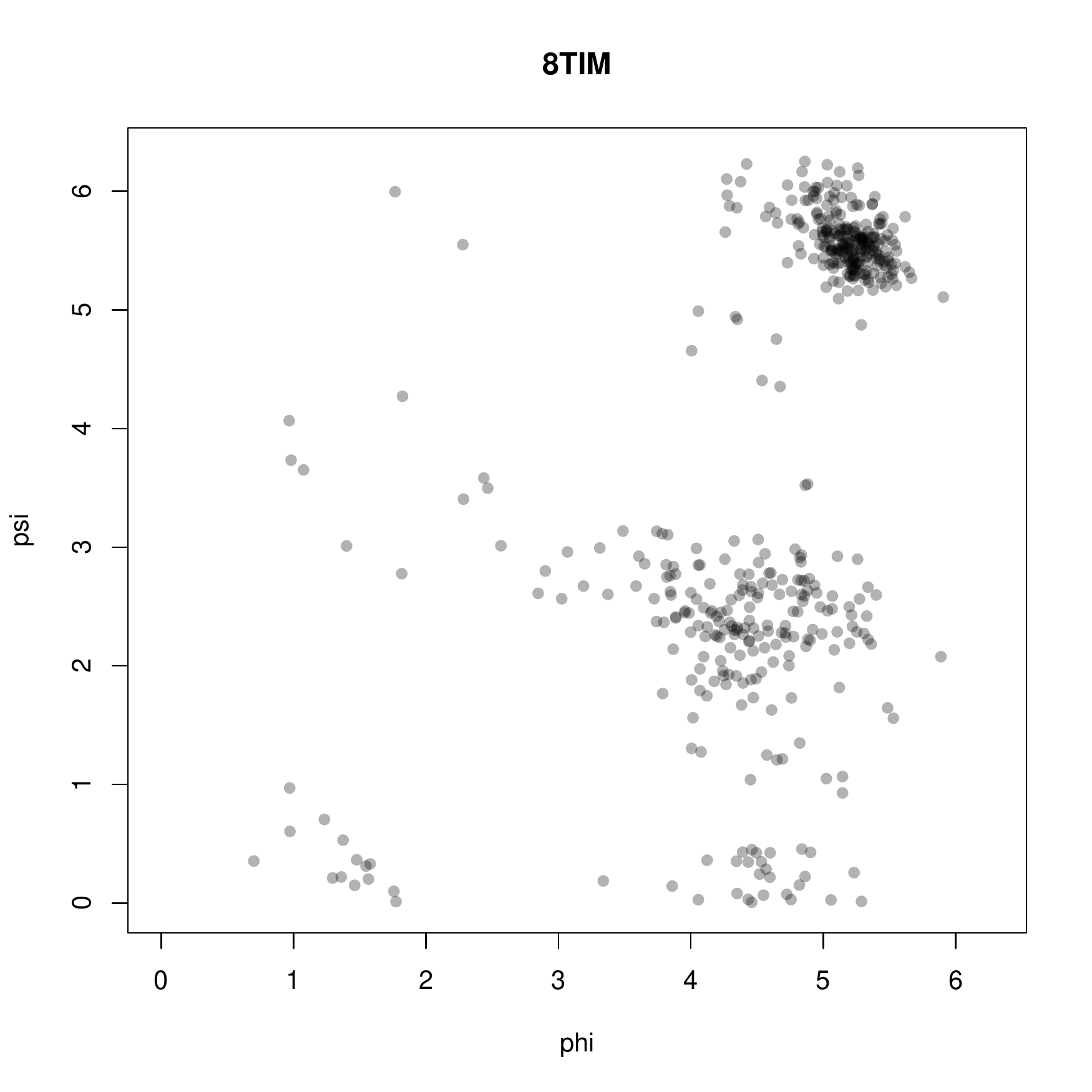}
	\caption{Ramachandran Plot for 8TIM.} \label{rama}
\end{figure}

To fit a bivariate mixture model with a specified number of components to the dataset, we can use the \pkg{BAMBI} function \code{fit_angmix} by specifying a model and the number of components to be used.  For example, the following \proglang{R} command fits a 4 component \code{vmsin} mixture by generating 3 MCMC chains with 20,000 samples each for the mixture model parameters. HMC  is used for sampling the model parameters, with tuning parameter \code{epsilon} adaptively tuned during burn-in (which is by default constituted by the first half of all iterations, i.e., first 10,000 iterations), and \code{L} taking its default value 10.  A \code{fit_angmix} call creates an \code{angmcmc} object, which can then be used for various post-processing tasks, including convergence assessment, parameter estimation and visualizing goodness of fit.

\begin{Sinput}
R> set.seed(12321)
R> fit.vmsin.4comp <- fit_angmix("vmsin", tim8, ncomp = 4, n.iter =  2e4,
+                                n.chains = 3)
\end{Sinput}

Note that in order for the independent chains to be run in parallel, an appropriate \code{plan()} from \proglang{R} package \pkg{future} needs to be set first; otherwise the chains will run sequentially. For example, running the commands 
\begin{Sinput}
R> library(future)
R> plan(multiprocess, gc = TRUE)
\end{Sinput}
before the \code{fit_angmix} call will ensure that the three chains are run in parallel, provided resources are available.  We suggest setting \code{gc = TRUE} in \code{plan}, to allow proper garbage collection from the parallel workers, even though it adds some overhead. This is because the parallel workers can end up leaving heavy memory footprints,  especially when mixture models are being fitted incrementally.

%Parallelization is done by default when more than one chain is used, but the chains can be run sequentially as well by setting \code{chains_parallel = FALSE}. To retain reproducibility while running multiple chains in parallel, the same RNG state is passed to each chain (then the state is changed across chains so that no two chains can become identical, even if they have the same starting and tuning parameters). This, however creates a difference between a  \code{fix_label} call with multiple chains, one run sequentially by setting \code{chains_parallel = FALSE}, and another run sequentially due to a sequential \code{plan()} (or no \code{plan()}) used before the \code{fix_label} call, with \code{chains_parallel = TRUE}. In the former, different RNG states are passed at the initiation of each chain.

In the previous example with \code{fit_angmix}, the number of components $K$ was specified through \code{ncomp}. However, the ``true'' $K$ generally needs to be estimated from the data.  For this purpose, we use \code{fit_incremental_angmix} with \code{start_ncomp = 2}, which fits angular mixtures with incremental number of components (starting with 2 components), and uses a Bayesian model selection criterion to determine an optimal model. We use this function to fit optimal mixtures of \code{vmsin}, \code{vmcos} and \code{wnorm2} models separately. By specifying \code{n.chains = 3} and \code{n.iter = 20,000} in \code{fit_incremental_angmix}, each incremental model will be fitted using three chains with 20,000 iterations each (with first 10,000 iterations treated as burn-in, where \code{epsilon} in HMC is tuned).

By default, the algorithm uses MCMC-based MAP estimates from the preceding fitted model with one fewer component (if available) as starting parameter values (see Section~\ref{initial_vals}). We use the default leave one out cross validation information criterion (\code{'LOOIC'}) as the model selection criterion for determining the optimal model. %The parameter estimates and log marginal likelihood are by default calculated from the ``best chain'', i.e., the one with maximum average log posterior density.
When the function stops, we extract the best fitted model using  the function \code{bestmodel}, and assess convergence of the associated chains. After justifying convergence, we provide point and interval estimates of the parameters and visually examine goodness of fit.

\paragraph{Fitting the vmsin mixture model.}

We start with the \code{vmsin} model. The \proglang{R} commands are as follows:

\begin{Sinput}
R> set.seed(12321)
R> fit.vmsin <- fit_incremental_angmix(model="vmsin", data = tim8,
+                                      crit = "LOOIC",
+                                      start_ncomp = 2,
+                                      max_ncomp = 10,
+                                      n.iter = 2e4,
+                                      n.chains = 3)
\end{Sinput}

The algorithm stops at 5 components and determines the optimal number of components to be 4 on the basis of the \code{'LOOIC'} values. The MCMC-based maximum log-likelihood estimates for the intermediate models are  -945.3684, -853.3546, -803.4193, and -793.8154, which are steadily increasing. This is expected, since a ``smaller'' mixture should be nested within a ``larger'' mixture when properly fitted. We extract the optimum fitted model from the output of \code{fit_incremental_angmix} via  \code{bestmodel}:
\begin{Sinput}
R> fit.vmsin.best <- bestmodel(fit.vmsin)	
\end{Sinput}
	
Before estimating parameters, we first need to assess convergence and stationarity of the Markov chains. For this purpose, we first look at the (non-normalized) log posterior density (LPD) trace plots, which can be obtained using the \bambi function \code{lpdtrace}:
\begin{Sinput}
R> lpdtrace(fit.vmsin.best)
\end{Sinput}

\begin{figure}[hptb]
	\centering
	\includegraphics[height=4in, width = 4in]{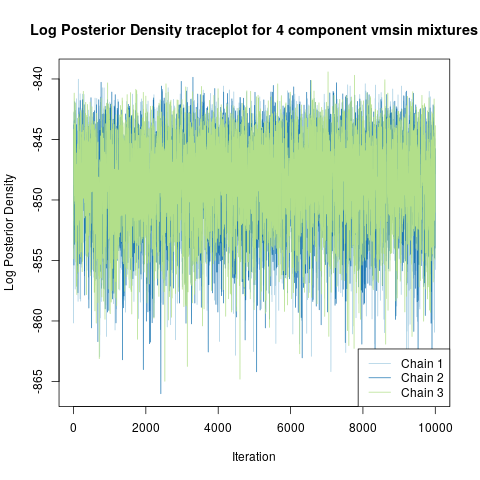}
	\caption{Log posterior trace plot for the Markov chain associated with the best fitted vmsin mixture model. \label{vmsinlpdtrace}}%
\end{figure}

%\FloatBarrier

The resulting plot displayed in Figure~\ref{vmsinlpdtrace} shows that all three chains have stabilized into similar LPD ranges  after burn-in, without noticeable trends or patterns.  Next, we look at the parameter traces, plotted using \bambi function \code{paramtrace}:
\begin{Sinput}
R> paramtrace(fit.vmsin.best)
\end{Sinput}
These trace plots are displayed in the panels of Figure~\ref{vmsin_paramtrace_plots_comp1} through Figure~\ref{vmsin_paramtrace_plots_comp4} in Appendix~\ref{vmsinparamtraces}, which show adequate signs of convergence and stationarity for samples within each chain. Stationarity of a chain can be formally tested using Geweke's convergence diagnostic  \citep{geweke:1991}, which tests equality (of means) of the first and last part of a Markov chain using   standard z-scores. The test is implemented in \proglang{R} package \pkg{coda}, and can be applied on \code{fit.vmsin.best}, first by converting it into a  \pkg{coda} \code{mcmc.list} object via S3 function \code{as.mcmc.list}:
\begin{Sinput}
R> mcmc.vmsin.best <- coda::as.mcmc.list(fit.vmsin.best)
\end{Sinput}
and then by applying the \pkg{coda} function \code{geweke.diag} on the output. We apply \code{geweke.diag} on \code{mcmc.vmsin.best} by setting both \code{frac1} and \code{frac2} equal to 0.5, to test the equality of the first and second halves. This produces a list of size 3 containing z-statistics for each chain. These values are displayed in  Figure~\ref{vmsindenplots} as barplots. The R commands are as follows:
\begin{Sinput}
R> geweke_res <- coda::geweke.diag(mcmc.vmsin.best, frac1 = 0.5, 
+                                  frac2 = 0.5)
R> par(mfrow = c(1, 3), mar = c(5, 6, 2, 1))
R> for(j in 1:3) {
+      barplot(geweke_res[[j]]$z, horiz = TRUE,
+              names = names(geweke_res[[j]]$z),
+              xlim = range(geweke_res[[j]]$z, -3, 3),
+              ylab="", xaxt='n', las = 2)
+      axis(1, las = 1)
+      title(main = paste("Chain", j), xlab = "geweke.diag")
+  }
\end{Sinput}

\begin{figure}[!htpb]
	\centering
	\includegraphics[width=\linewidth]{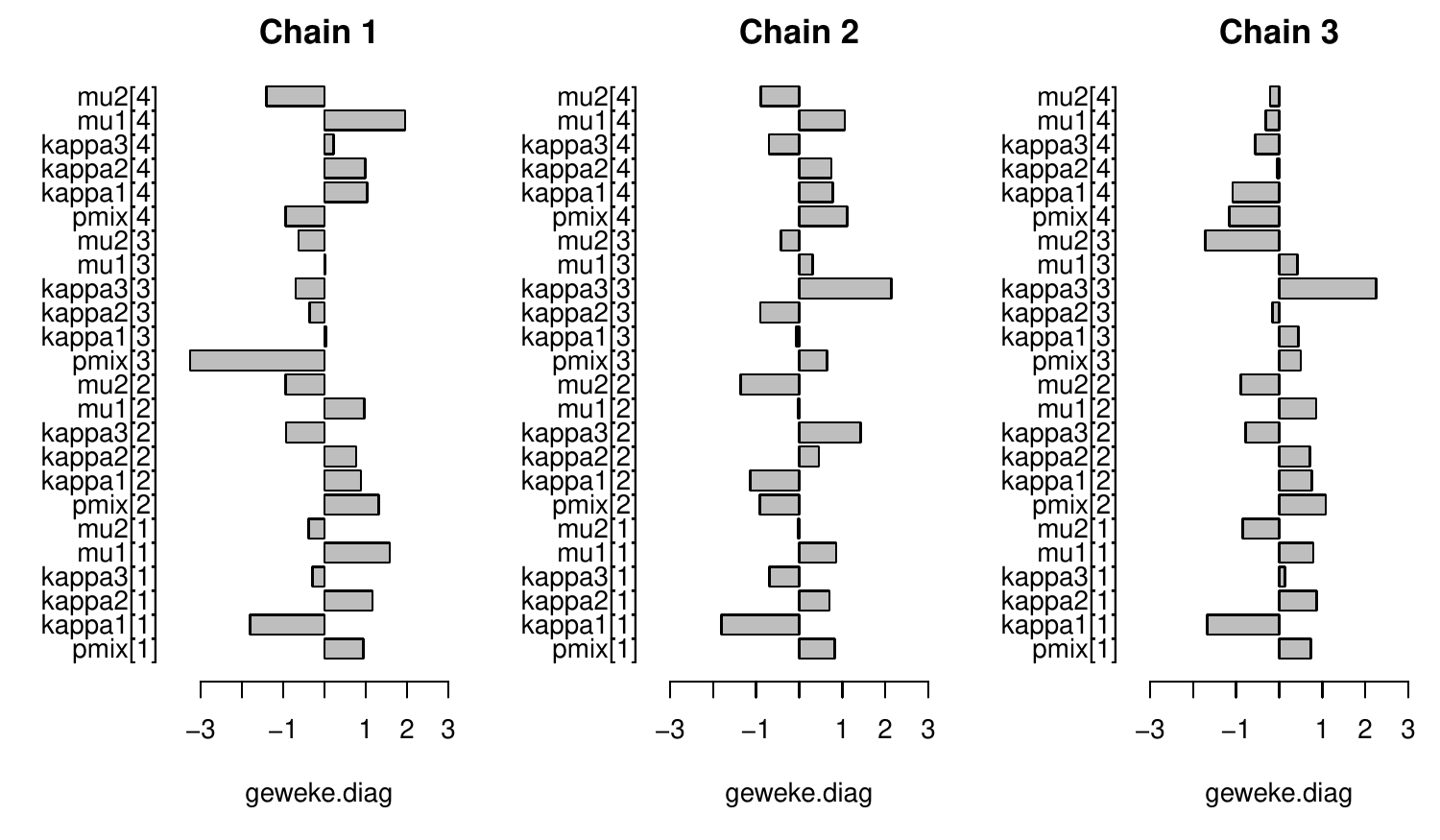}
	\caption{Geweke Diagnostic z scores for the three Markov chains associated with the best (4 component) vmsin mixture model.}
	\label{vmsingewekeplots}
\end{figure}

From these barplots, we see that all the z-scores more or less lie between $\pm 3$, thus indicating adequate similarity between the first and second halves of the chains. 

Package \pkg{coda} provides functions for a number of additional diagnostic plots and formal tests that may be used after conversion to a \code{mcmc.list} object.  Note that not all tests are applicable in every situation.  For example, the Gelman-Rubin test \citep{gelman:1992}, available via \pkg{coda} function \code{gelman.diag}, assumes normality of the posterior density; this assumption is clearly violated when the posterior density is multimodal. In this example, as well as for mixture models with several components in general, multimodality in the posterior density can be commonly observed.

Posterior multimodality is evident in the parameter traceplots for the current fit:  the sample values of the same parameter across the independent chains exhibit noticeable differences (see, e.g., the panels of Figure~\ref{vmsin_paramtrace_plots_comp2}). Note that this can be due to both permutation (i.e., label-switching) and non-permutation modes. These modes are from similar posterior density regions, as the  LPD traces show. This also indicates that various regions of the posterior density are being well explored by the three chains together. 

Next, we consider parameter estimation and assessing goodness of fit. Since  the combined MCMC samples are multimodal, the posterior mean point estimates from raw MCMC samples will not be meaningful, as they will lie in between modes. A simple alternative is to use the MCMC based approximate MAP estimate, which is unaffected by multimodality; proper care must be taken otherwise. The inferential difficulties associated with having permutation modes in the MCMC samples can be (potentially) solved by undoing label switching. The \bambi function \code{fix_label} (with the default settings) can be used for this purpose:
\begin{Sinput}
R> fit.vmsin.best <- fix_label(fit.vmsin.best)
\end{Sinput}
The parameter traceplots obtained (using \code{paramtrace} after undoing label switching) are displayed in Figures~\ref{vmsin_paramtrace_fix_plots_comp1} through \ref{vmsin_paramtrace_fix_plots_comp4} in Appendix~\ref{vmsinparamtraces_fix}. Compare these traceplots to the ones displayed in Appendix~\ref{vmsinparamtraces}. It can be seen that this procedure indeed removes some of the permutation modes, in that the parameter traces for the three independent chains now largely overlap. However, some non-unique modes are still present, as can be seen  in the traces of $\kappa_3$ and $\mu_2$ in component 4, displayed in panel (d) and (f) of Figure~\ref{vmsin_paramtrace_fix_plots_comp4}: e.g., $\kappa_3$ ``jumps'' between modes at approximately 2 and -2 over the course of the MCMC simulation. These modes might be genuine non-permutation modes, or permutation modes that \code{fix_label} is unable to resolve.

In \pkg{BAMBI},  parameter estimates are computed using the function \code{pointest}, which can find point estimates of the whole parameter vector, as well as its sub-vectors. Note that the function supports multiple methods of estimation. In particular, the argument \code{fn} in \code{pointest} specifies what function to evaluate on the MCMC samples for estimation. For example, \code{fn = mean} computes the MCMC posterior mean, while \code{fn = "MODE"} returns an MCMC based approximate MAP estimate. We use \code{pointest} to find the MAP and posterior mean estimates (after applying \code{fix_label}), and then note their differences. %Before calling \code{pointest} with \code{fn = mean},  we undo label switching via \bambi function \code{fix_label} (with the default settings), to avoid having multiple permutation modes. 
The \proglang{R} commands are as follows.
\begin{Sinput}
R> round(pointest(fit.vmsin.best, fn = "MODE"), 2)
\end{Sinput}
\begin{Soutput}
            1     2     3     4
pmix     0.43  0.16  0.35  0.06
kappa1  36.26  6.15  4.63  3.72
kappa2  27.93  2.25  7.80  0.00
kappa3 -12.03 -0.45 -0.58 -1.85
mu1      5.22  4.66  4.46  1.84
mu2      5.54  6.14  2.41  4.94
\end{Soutput}
\begin{Sinput}
R> round(pointest(fit.vmsin.best, fn = mean), 2)
\end{Sinput}
\begin{Soutput}
            1     2     3     4
pmix     0.43  0.17  0.34  0.06
kappa1  36.39  7.35  4.34  4.32
kappa2  29.19  2.06  8.12  0.07
kappa3 -12.74 -0.38 -1.08 -0.19
mu1      5.23  4.67  4.43  1.71
mu2      5.55  6.17  2.43  3.60
\end{Soutput}

We note that both the approximate MAP estimate and the posterior mean estimate reasonably agree on the first three components. However, they disagree on the remaining fourth component regarding the value of \code{mu2} and \code{kappa3}. This is not surprising, since the MCMC samples for these two parameters have (possibly non-permutation) multiple modes, as we saw earlier. In this case, their posterior mean estimates lie in between modes, and hence are not good point estimates. 
%This is not very surprising, since in both estimates, the fourth component has very low mixing proportion. Further, the associated concentration parameter \code{kappa2} is very small in each estimate, indicating high variance in the corresponding random coordinate. 
To visualize the differences between these estimates, we plot the  contours and  surfaces of the corresponding fitted model densities using the S3 functions \code{contour} (from \pkg{graphics}) and \code{densityplot} (from \pkg{lattice}) for \code{angmcmc} objects:
\begin{Sinput}
R> contour(fit.vmsin.best, fn = "MODE")
R> lattice::densityplot(fit.vmsin.best, fn = "MODE")
R> contour(fit.vmsin.best, fn = mean)
R> lattice::densityplot(fit.vmsin.best, fn = mean)
\end{Sinput}

\begin{figure}[!htpb]
	\centering
	\subcaptionbox{Approximate MAP estimate from MCMC samples. \label{vmsincon_raw}}%
	[.485\linewidth]{\includegraphics[height=3.2in, width = 3.2in]{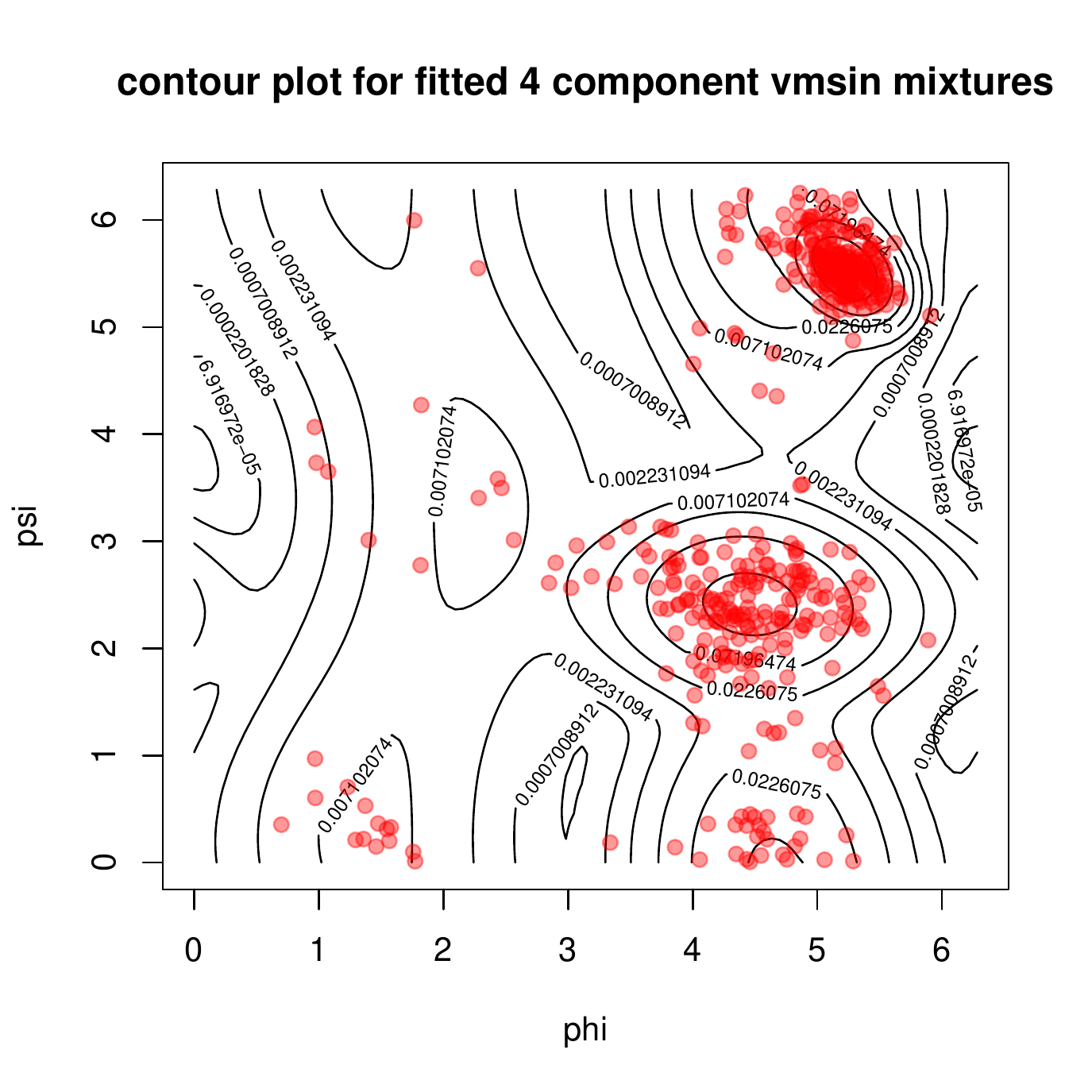}}
	\hfill
	\subcaptionbox{MCMC posterior mean after resolving label switching. \label{vmsincon}}%
	[.485\linewidth]{\includegraphics[height=3.2in, width = 3.2in]{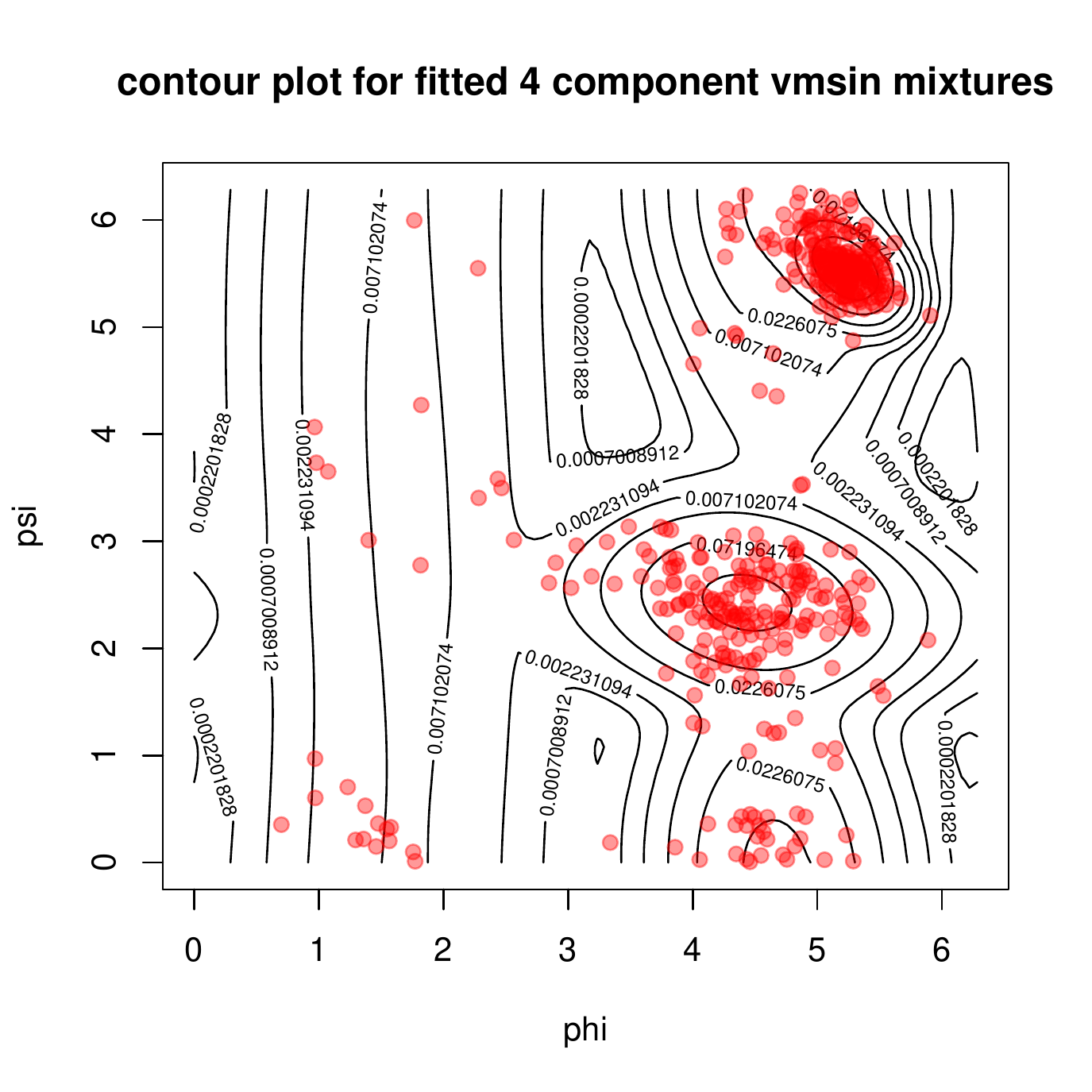}}
	\caption{Contour plots for fitted 4 component \code{vmsin} mixture model with parameters estimated via MCMC-based MAP estimation (left) and  posterior mean estimation (right).}
	\label{vmsinconplots}
\end{figure}

\begin{figure}[!htpb]
	\centering
	\subcaptionbox{Approximate MAP estimate from MCMC samples. \label{vmsinden_raw}}%
	[.485\linewidth]{\includegraphics[height=3.2in, width = 3.2in]{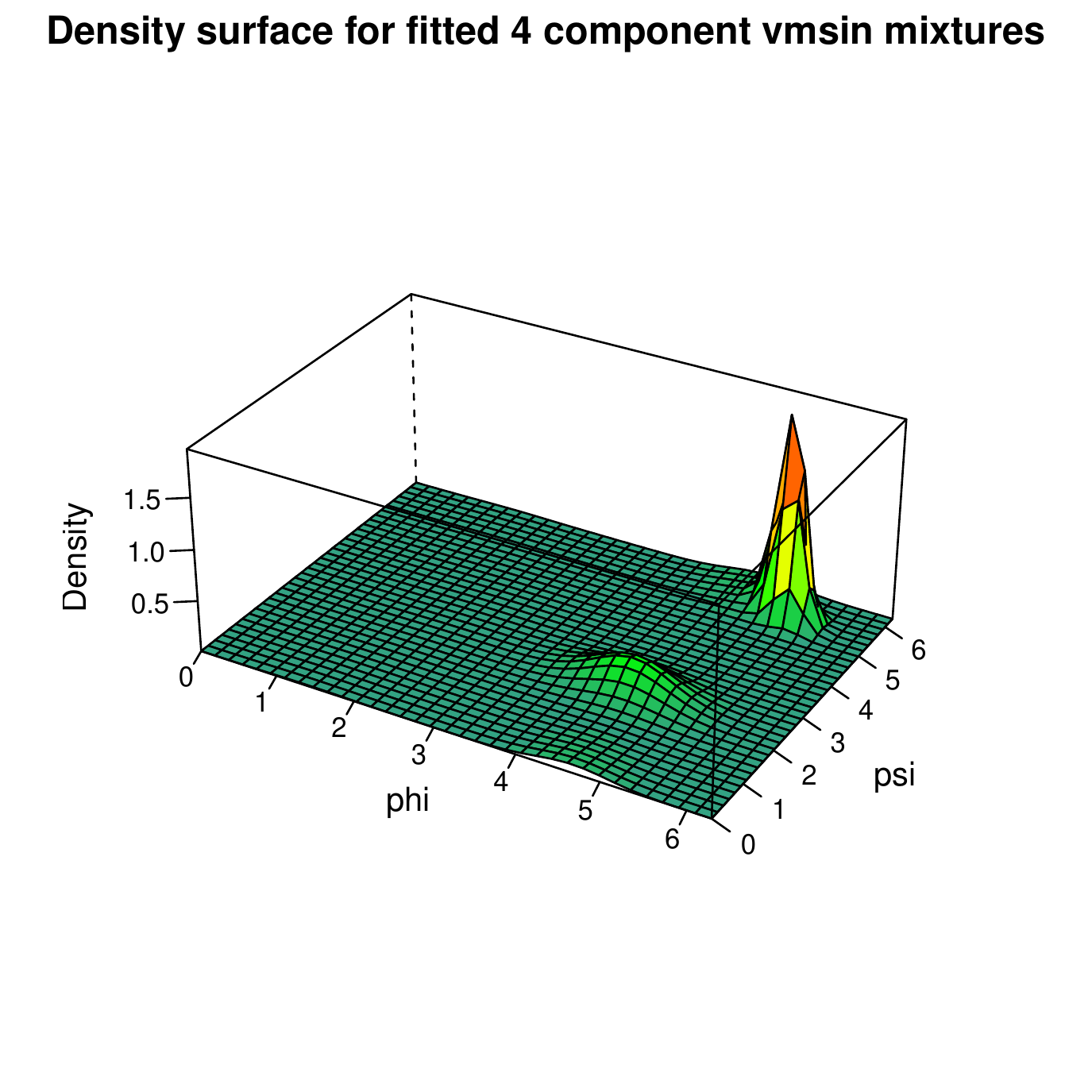}}
	\hfill
	\subcaptionbox{MCMC posterior mean after resolving label switching. \label{vmsinden}}%
	[.485\linewidth]{\includegraphics[height=3.2in, width = 3.2in]{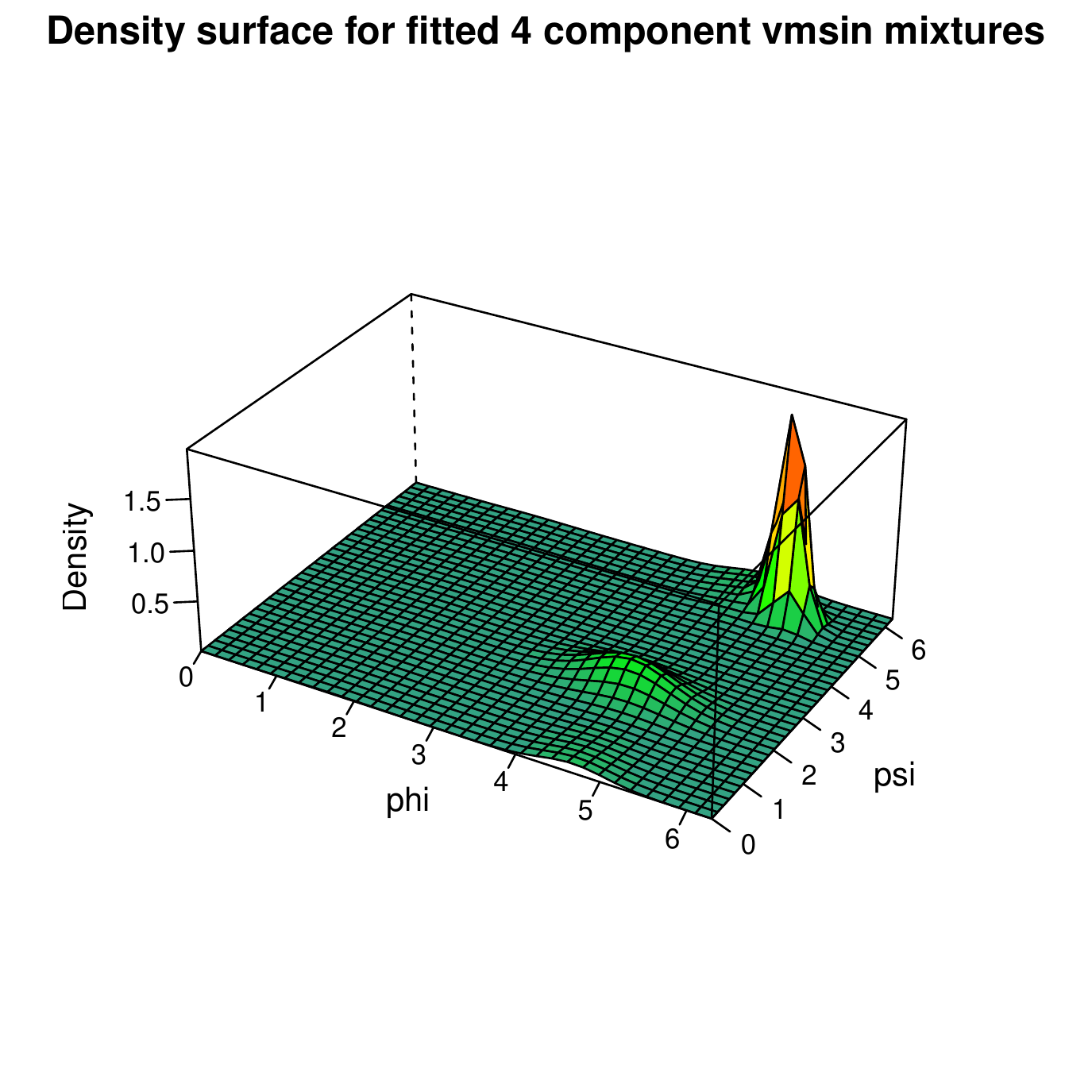}}
	\caption{Density surfaces for fitted 4 component \code{vmsin} mixture model with parameters estimated via MCMC-based MAP estimation (left) and  posterior mean estimation (right).}
	\label{vmsindenplots}
\end{figure}

The contour plots are shown in Figures~\ref{vmsincon_raw} and \ref{vmsincon}, and the density surfaces in Figures~\ref{vmsinden_raw} and \ref{vmsinden}. As can be seen, these plots are visually quite similar, despite the differences in the point estimates. This is due to the fact the component mostly affected by the existence of multiple modes has a low mixing proportion.  Nonetheless, in the current setting, the MAP estimate is the better of the two for the reasons described.  

Finally, we compute interval estimates of the parameters. This done by the S3 function \code{summary} for \code{angmcmc} objects, which computes the MCMC posterior mean along with a 95\% credible interval: 
\begin{Sinput}
R> summary(fit.vmsin.best)
\end{Sinput}

\begin{Soutput}
                            1                   2
pmix        0.43 (0.38, 0.48)   0.17 (0.12, 0.22)
kappa1   36.39 (29.01, 45.41)  7.35 (4.99, 10.68)
kappa2   29.19 (21.99, 38.96)   2.06 (1.21, 3.16)
kappa3 -12.74 (-18.49, -7.50) -0.38 (-1.76, 1.16)
mu1         5.23 (5.20, 5.26)   4.67 (4.54, 4.80)
mu2         5.55 (5.51, 5.58)   6.17 (5.99, 6.28)
                          3                    4
pmix      0.34 (0.29, 0.38)   0.064 (0.043, 0.09)
kappa1    4.34 (3.45, 5.43)     4.32 (1.64, 7.76)
kappa2   8.12 (6.00, 10.61) 0.073 (0.00012, 0.55)
kappa3 -1.08 (-2.29, 0.025)   -0.19 (-3.10, 3.05)
mu1       4.43 (4.35, 4.52)     1.71 (1.44, 2.05)
mu2       2.43 (2.37, 2.49)     3.60 (0.66, 5.97)
\end{Soutput}

%    \begin{table}[htbp]
%    	\begin{tiny}
%    		
%    	\centering
%    	\begin{tabular}{|l|r|r|r|r|r|r|}
%    		\toprule
%    		$j$ &  $p_j$ & $\kappa_{1j}$ & $\kappa_{2j}$ & $\kappa_{3j}$ & $\mu_{1j}$ & $\mu_{2j}$ \\
%    		\midrule
%            1 & 0.43 (0.38, 0.48) & 36.43 (28.71, 45.69) & 29.14 (21.78, 38.77) & -12.72 (-18.76, -7.40) & 5.23 (5.20, 5.26) &  5.55 (5.51, 5.58) \\ 
%            2 & 0.17 (0.12, 0.22) & 7.34 (5.02, 10.70) & 2.07 (1.22, 3.16) & -0.40 (-1.74, 1.15) & 4.67 (4.54, 4.80) & 6.17 (5.99,   6.28) \\ 
%            3 & 0.34 (0.29, 0.38) & 4.34 (3.46, 5.44) & 8.11 (6.02, 10.63) & -1.08 (-2.25, 0.03) & 4.43 (4.35, 4.52) & 2.43 (2.36, 2.49) \\ 
%            4 & 0.06 (0.04, 0.09) & 4.32 (1.61, 7.79) & 0.07 (0.00, 0.52) & -0.11 (-3.07, 3.09) & 1.71 (1.44, 2.06) & 3.49 (0.64, 5.94) \\   
%    		\bottomrule
%    	\end{tabular}
%    	\caption{Estimated posterior means and 95\% posterior credible bounds  for parameters in the fitted 4 component vmsin \code{angmcmc} object \code{fit.vmsin.best}.}
%    	\label{vmsintab}%	
%    \end{tiny}
%
%    \end{table}
%
%The interval estimates along with the MCMC posterior means are tabulated in Table~\ref{vmsintab}. 

%The likelihood for the best fitted model can be extracted as a logLik object through the function \code{logLik.angmcmc}, which by default provides the maximum likelihood (obtained during MCMC run), but can also provide likelihood at an estimated parameter. See documentation of verb{logLik.angmcmc} for more details.
%\begin{Sinput}
%R> logLik(fit.vmsin.best)	
%\end{Sinput}
%\begin{Soutput}
%'log Lik.' -803.6236 (df=23)
%\end{Soutput}

We also use this example to illustrate \code{r{\_}fitted}, which generates random deviates from a fitted model, with parameters estimated using \code{pointest}. The corresponding function \code{d{\_}fitted} evaluates the density. These can be useful for posterior predictive checks. We draw observations from the best (4 component) fitted vmsin model, construct the Ramachandran plot for the generated dataset (exhibited in Figure~\ref{vmsinrama}) and compare it with the original Ramachandran plot. The following are the \proglang{R} commands used for this purpose.

\begin{Sinput}
R> set.seed(12321)
R> vmsin.data <- r_fitted(nrow(tim8), fit.vmsin.best, fn = "MODE")
R> plot(vmsin.data, xlab = "phi", ylab = "psi",
+       xlim = c(0, 2*pi), ylim = c(0, 2*pi), 
+       pch = 16, col = scales::alpha("black", 0.6))
R> title("Data generated from best fitted vmsin")
\end{Sinput}

\begin{figure}[!htpb]
	\centering
    \includegraphics[height=4in, width = 4in]{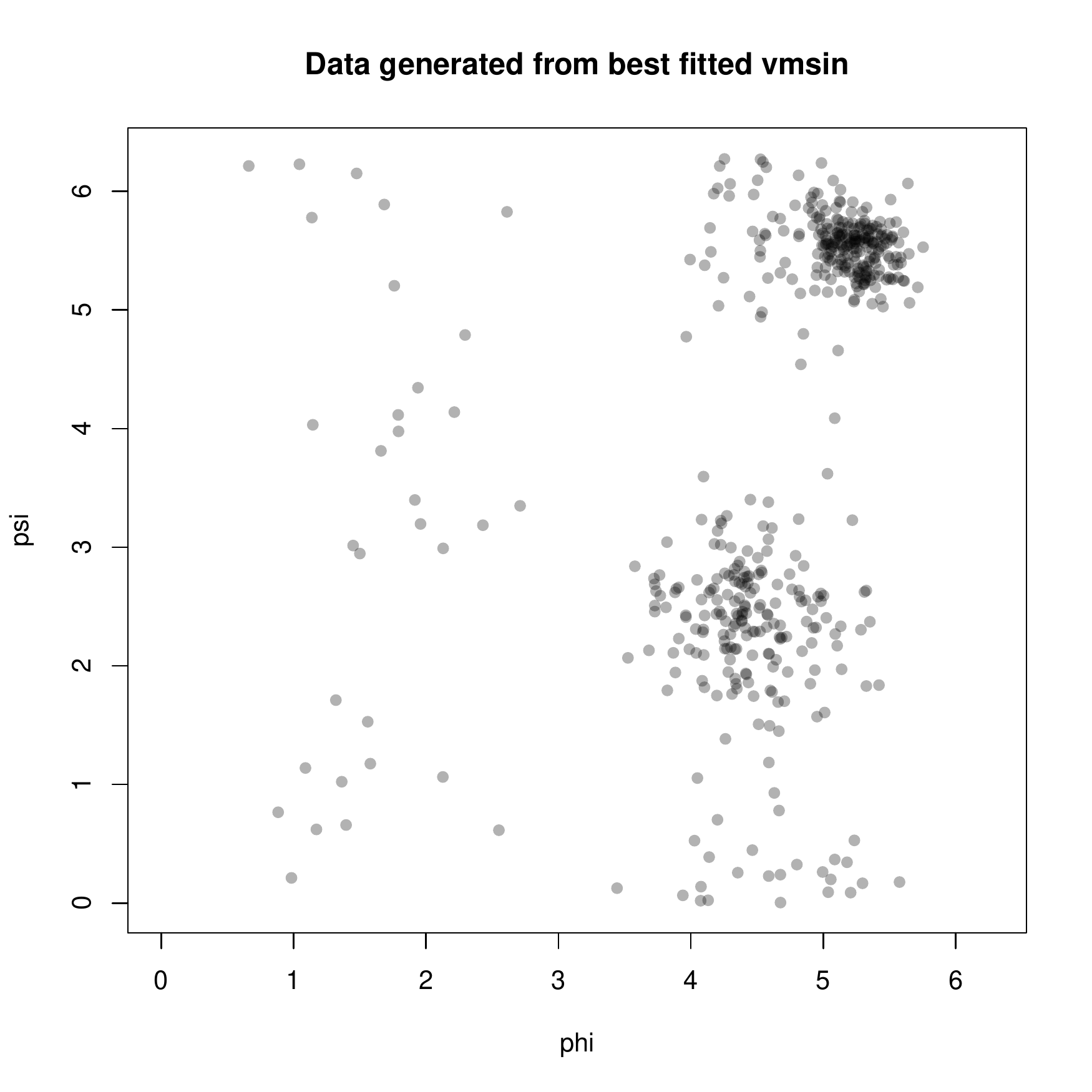}
	\caption{Ramachandran plot for the data generated from the best fitted vmsin mixture model. \label{vmsinrama}}
\end{figure}

Observe the similarity between Figures~\ref{rama} and \ref{vmsinrama}. The different clusters of the actual data points are reproduced well in the simulated data, which corroborate the goodness of fit.

% --------------------------------------------------------------------------------------------------------------

\paragraph{Fitting the vmcos mixture model.}

Next, we fit \code{vmcos} mixtures to the data, by using \code{fit\_incremental\_angmix} with \code{model = "vmcos"}. We set \code{n\_qrnd = 1e4} (the default used in \code{dvmcos}), which specifies that 10,000 pairs of quasi-random Sobol numbers would be used to approximate the \code{vmcos} normalizing constant in cases where its analytic computaion is unstable. In a small dimensional problem with finite variance (such as ours), the Sobol sequence (or low discrepancy quasi-random sequences in general) often provides a better Monte Carlo approximation than (pseudo-) random sequences. In fact, for two dimensional problems, the rate of convergence for a Sobol sequence based Monte Carlo approximation is $O((\log N)^2/N)$ as opposed to $O(1/\sqrt{N})$ for an ordinary (pseudo-) random sequence based Monte Carlo approximation \citep{lemieux:2009}, where $N$ denotes the number of (quasi) random pairs used.  See the documentation of \code{dvmcos} for examples comparing analytic, quasi Monte Carlo, and ordinary Monte Carlo approximations  of the (normalizing constant of the) \code{vmcos} density. For \code{fit\_incremental\_angmix} (or more specifically in \code{fit\_angmix}),  Monte Carlo approximations based on $10^4$ pairs of Sobol numbers typically provide reasonable approximations, while keeping the computational burden moderate. 

The following are the \proglang{R} commands used for incrementally fitting  \code{vmcos}  mixture models.

\begin{Sinput}
R> set.seed(12321)
R> fit.vmcos <- fit_incremental_angmix(model="vmcos", data = tim8,
+                                      crit = "LOOIC",
+                                      start_ncomp = 2,
+                                      max_ncomp = 10,
+                                      n.iter = 2e4,
+                                      n.chains = 3,
+                                      use_best_chain = FALSE,
+                                      n_qrnd = 1e4)
\end{Sinput}

Similar to the \code{vmsin} case, here also the algorithm stops at 5 components and determines the optimal number of components to be 4. We first extract the ``best'' fitted model, via \code{bestmodel}:
\begin{Sinput}
R> fit.vmcos.best <- bestmodel(fit.vmcos)
\end{Sinput}
and then plot the log posterior and parameter traces. These plots show similar convergence properties, and are omitted for brevity. For parameter estimation, we compute both (approximate) MAP and posterior mean estimates (after undoing label switching), and plot the contour and surface of the associated fitted model densities. The following are the associated R commands:
\begin{Sinput}
R> fit.vmcos.best <- fix_label(fit.vmcos.best)
R> contour(fit.vmcos.best, fn = "MODE")
R> lattice::densityplot(fit.vmcos.best, fn = "MODE")
R> contour(fit.vmcos.best, fn = mean)
R> lattice::densityplot(fit.vmcos.best, fn = mean)
\end{Sinput}

% --------------------------------------------------------------------------------------------------------------
% vmcos plots and tables
% --------------------------------------------------------------------------------------------------------------

\begin{figure}[!htpb]
	\centering
	\subcaptionbox{Approximate MAP estimate from MCMC samples. \label{vmcoscon_raw}}%
	[.485\linewidth]{\includegraphics[height=3.2in, width = 3.2in]{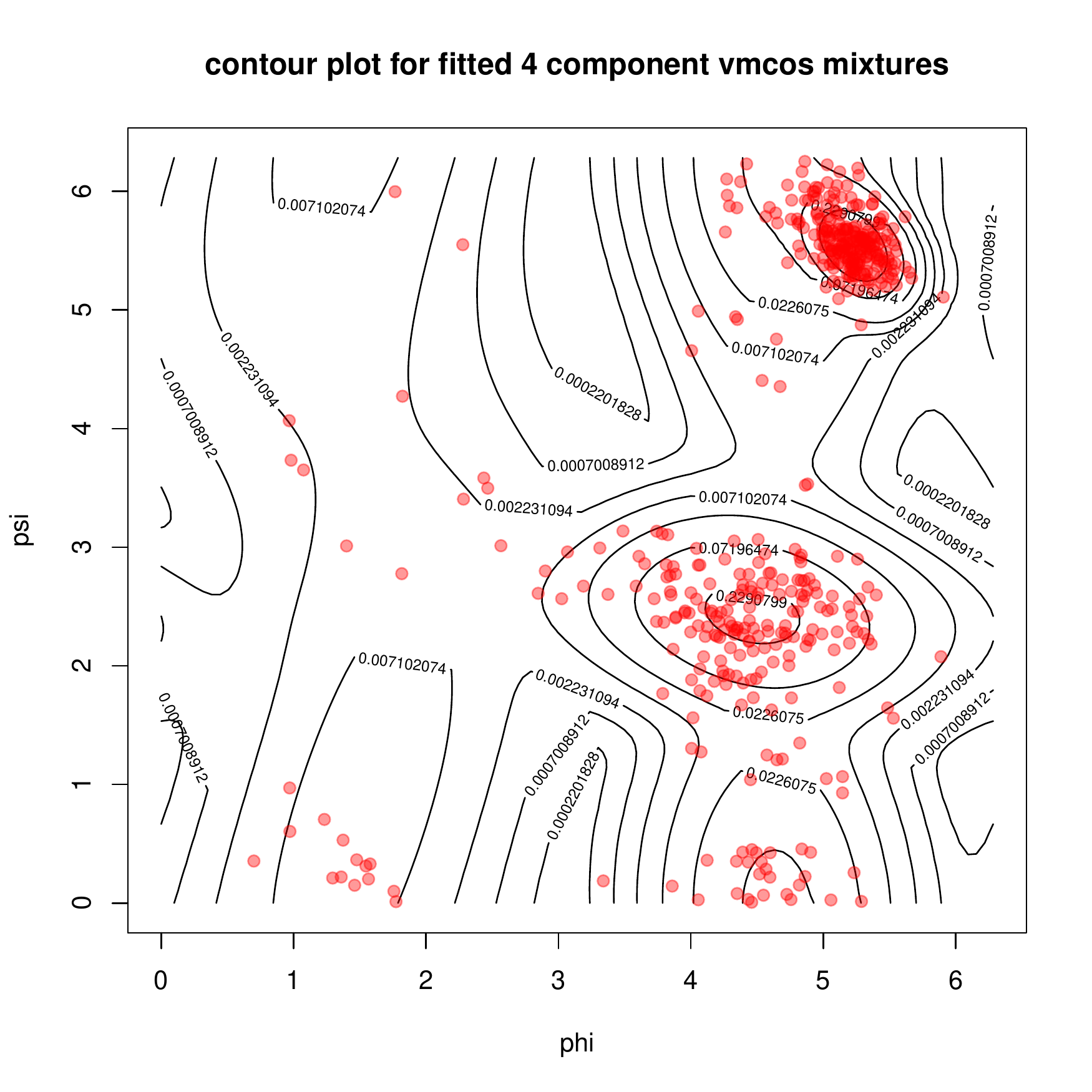}}
	\hfill
	\subcaptionbox{MCMC posterior mean after resolving label switching. \label{vmcoscon}}%
	[.485\linewidth]{\includegraphics[height=3.2in, width = 3.2in]{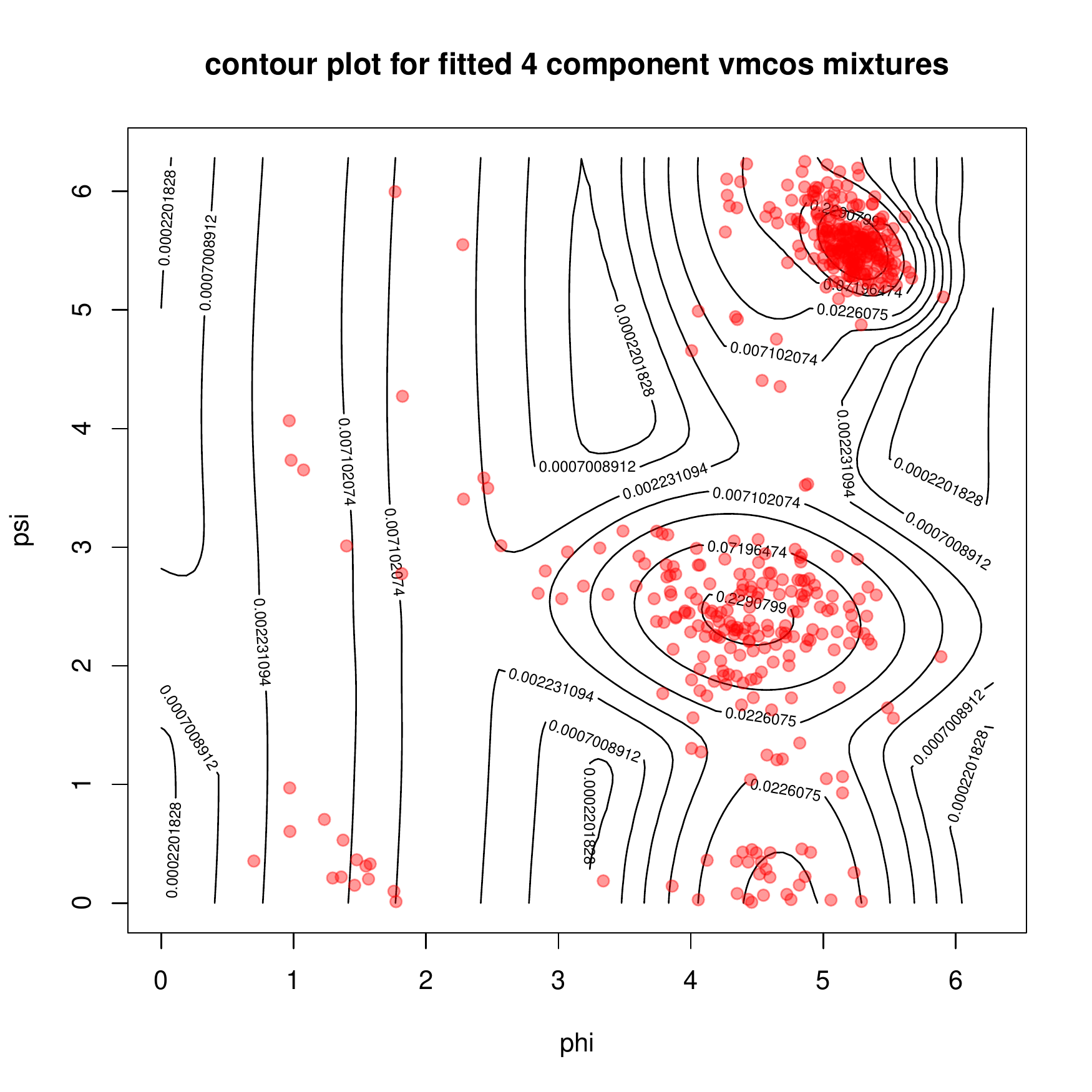}}
	\caption{Contour plots for fitted 4 component \code{vmcos} mixture model with parameters estimated via MCMC-based MAP estimation (left) and  posterior mean estimation (right)}
	\label{vmcosconplots}
\end{figure}

\begin{figure}[!htpb]
	\centering
	\subcaptionbox{Approximate MAP estimate from MCMC samples. \label{vmcosden_raw}}%
	[.485\linewidth]{\includegraphics[height=3.2in, width = 3.2in]{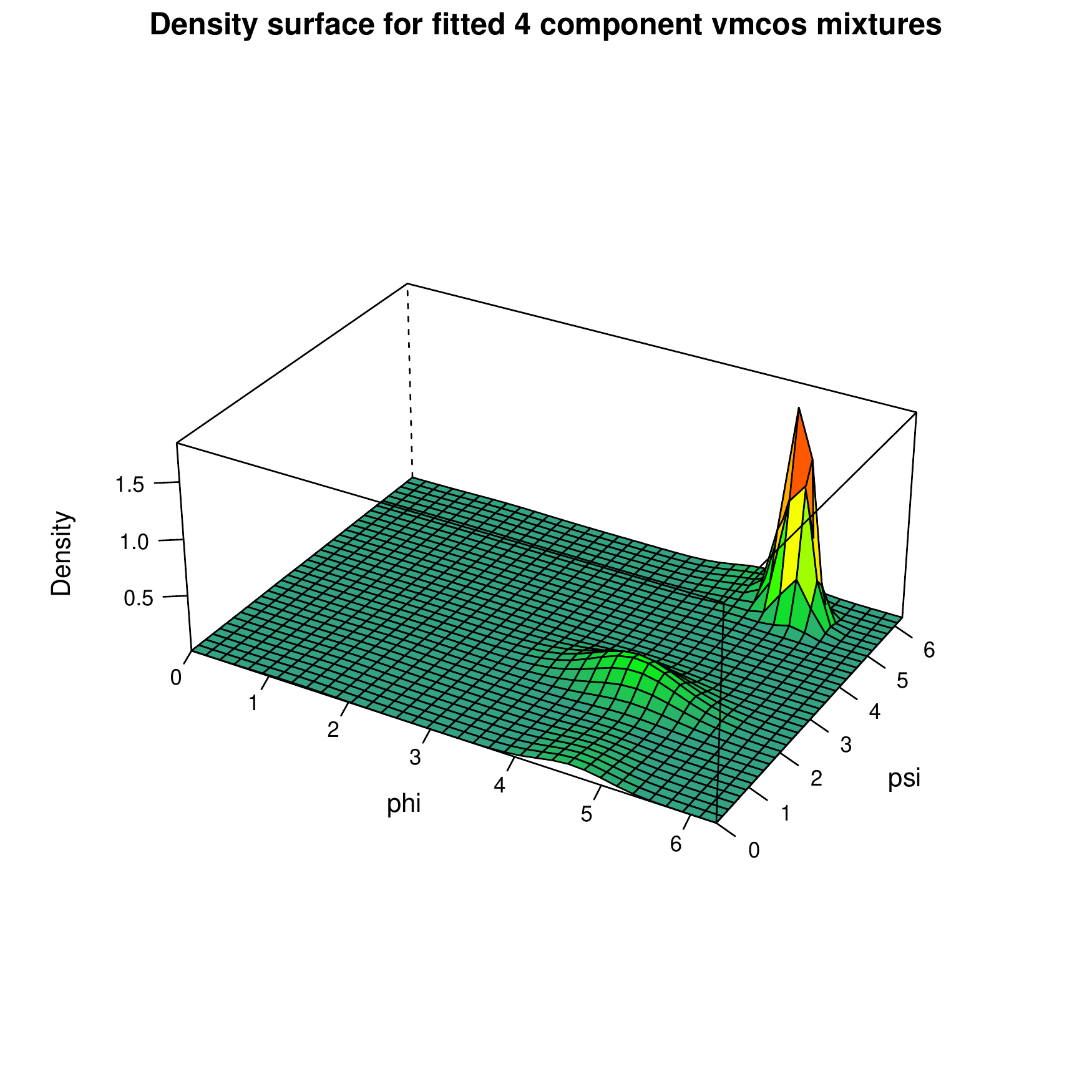}}
	\hfill
	\subcaptionbox{MCMC posterior mean after resolving label switching. \label{vmcosden}}%
	[.485\linewidth]{\includegraphics[height=3.2in, width = 3.2in]{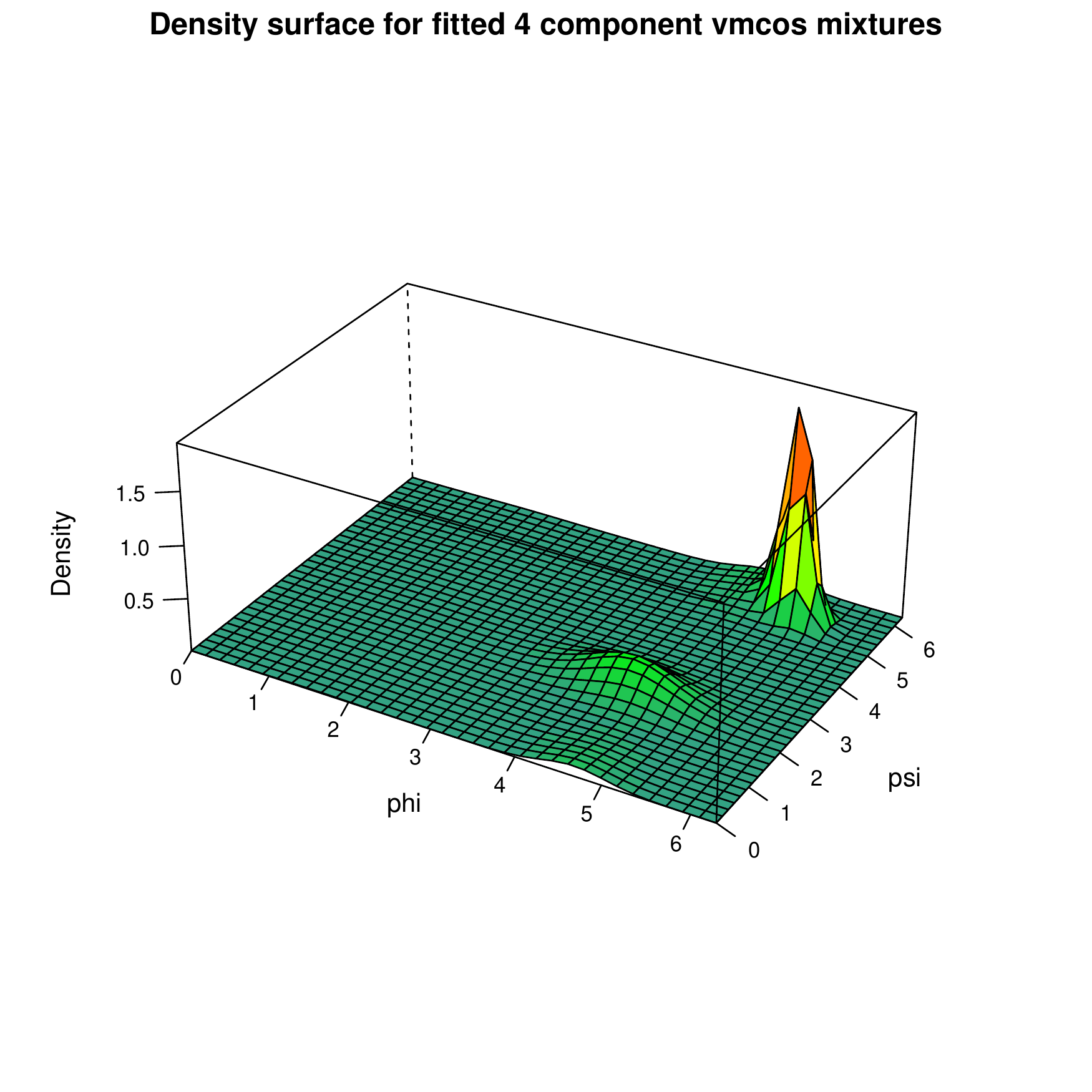}}
	\caption{Density surfaces for fitted 4 component \code{vmcos} mixture model with parameters estimated via MCMC-based MAP estimation (left) and  posterior mean estimation (right).}
	\label{vmcosdenplots}
\end{figure}

The fitted contours are displayed in Figures~\ref{vmcoscon_raw} and \ref{vmcoscon}, and the density surfaces are displayed in Figures~\ref{vmcosden_raw} and \ref{vmcosden}, are noticeably similar.  They  are also broadly similar to the ones obtained for the fitted \code{vmsin} models.  Estimated posterior means along with estimated 95\% credible intervals are   obtained using the S3 function \code{summary.angmcmc} as follows.%, which shows  reasonable precision for the parameter estimates. This, again, is attributable to multimodality of the posterior density.

%% Table generated by Excel2LaTeX from sheet 'vmcostab'
%\begin{table}[htbp]
%	\begin{tiny}
%	
%	\centering
%	\begin{tabular}{|l|r|r|r|r|r|r|}
%		\toprule
%		$j$ &  $p_j$ & $\kappa_{1j}$ & $\kappa_{2j}$ & $\kappa_{3j}$ & $\mu_{1j}$ & $\mu_{2j}$ \\
%		\midrule
%         1 & 0.45 (0.40, 0.50) & 24.36 (22.80, 24.98) & 23.13 (19.67, 24.94) & -2.75 (-4.67, -0.05) & 5.22 (5.19, 5.25) & 5.55 (5.52, 5.59) \\ 
%         2 & 0.14 (0.11, 0.19) & 8.44 (5.34, 12.85) & 1.20 (0.00, 3.29) & 0.78 (-1.08, 2.50) & 4.61 (4.49, 4.74) & 6.20 (6.04, 6.28) \\ 
%         3 & 0.34 (0.30, 0.38) & 5.13 (3.79, 6.64) & 8.73 (6.36, 11.68) & -0.95 (-2.11, 0.14) & 4.43 (4.34, 4.51) & 2.43 (2.37, 2.49) \\ 
%         4 & 0.06 (0.04, 0.09) & 3.61 (1.40, 6.76) & 0.17 (0.00, 1.28) & -0.16 (-1.36, 0.90) & 1.61 (1.35, 1.93) & 3.02 (0.13, 6.13) \\  
%		\bottomrule
%	\end{tabular}
%	\caption{Estimated posterior means and 95\% posterior credible bounds  for parameters in the fitted 4 component vmcos \code{angmcmc} object \code{fit.vmcos.best}.}
%	\label{vmcostab}%	
%\end{tiny}
%\end{table}%

\begin{Sinput}
R> summary(fit.vmcos.best)
\end{Sinput}

\begin{Soutput}
                            1                     2
pmix        0.43 (0.38, 0.49)     0.16 (0.12, 0.21)
kappa1   48.70 (37.89, 61.88)    7.97 (5.11, 12.56)
kappa2   41.49 (30.79, 55.64)    2.15 (0.002, 4.57)
kappa3 -12.61 (-18.64, -7.35) -0.0077 (-1.90, 2.12)
mu1         5.23 (5.20, 5.26)     4.67 (4.54, 4.81)
mu2         5.55 (5.51, 5.58)     6.17 (5.99, 6.28)
                         3                    4
pmix     0.34 (0.29, 0.38) 0.065 (0.043, 0.095)
kappa1   5.16 (3.80, 6.69)    3.62 (0.90, 6.77)
kappa2  8.79 (6.36, 11.81) 0.19 (0.00013, 1.33)
kappa3 -0.98 (-2.15, 0.12)  -0.20 (-1.54, 0.90)
mu1      4.43 (4.34, 4.51)    1.59 (1.31, 1.93)
mu2      2.43 (2.36, 2.49)    3.13 (0.14, 6.13)
\end{Soutput}

% --------------------------------------------------------------------------------------------------------------

\paragraph{Fitting the wnorm2 mixture model.}

Finally, we fit \code{wnorm2} mixtures to the data. The \proglang{R} commands used are as follows:

\begin{Sinput}
R> set.seed(12321)
R> library(future)
R> plan(multiprocess(workers = 3))
R> fit.wnorm2 <- fit_incremental_angmix(model="wnorm2", data = tim8,
+                                       crit = "LOOIC",
+                                       start_ncomp = 2,
+                                       max_ncomp = 10,
+                                       n.iter = 2e4,
+                                       n.chains = 3,
+                                       use_best_chain = FALSE)
\end{Sinput}

Here also, the function stops at 5 components and determines the optimal number of components to be 4.\footnote{It should be noted that the runtime for wrapped normal fitting is considerably longer than the von Mises sine models, due to the computational burden; see Section~\ref{wnmodels}.} As done in the previous two cases, after extracting the best model, we assess convergence via trace plots (omitted for brevity). We  find MCMC-based MAP and posterior mean estimates (after undoing label switching), and also find credible interval estimates. Finally we assess goodness of fit through fitted contour and density surfaces. The following are the R commands that perform these tasks.

\begin{Sinput}
R> fit.wnorm2.best <- bestmodel(fit.wnorm2)
R> lpdtrace(fit.wnorm2.best)
R> paramtrace(fit.wnorm2.best)
R> fit.wnorm2.best <- fix_label(fit.wnorm2.best)
R> contour(fit.wnorm2.best, fn = "MODE")
R> lattice::densityplot(fit.wnorm2.best, fn = "MODE")
R> contour(fit.wnorm2.best, fn = mean)
R> lattice::densityplot(fit.wnorm2.best, fn = mean)
R> summary(fit.wnorm2.best)
\end{Sinput}

\begin{figure}[!htpb]
	\centering
	\subcaptionbox{Approximate MAP estimate from MCMC samples. \label{wnorm2con_raw}}%
	[.485\linewidth]{\includegraphics[height=3.2in, width = 3.2in]{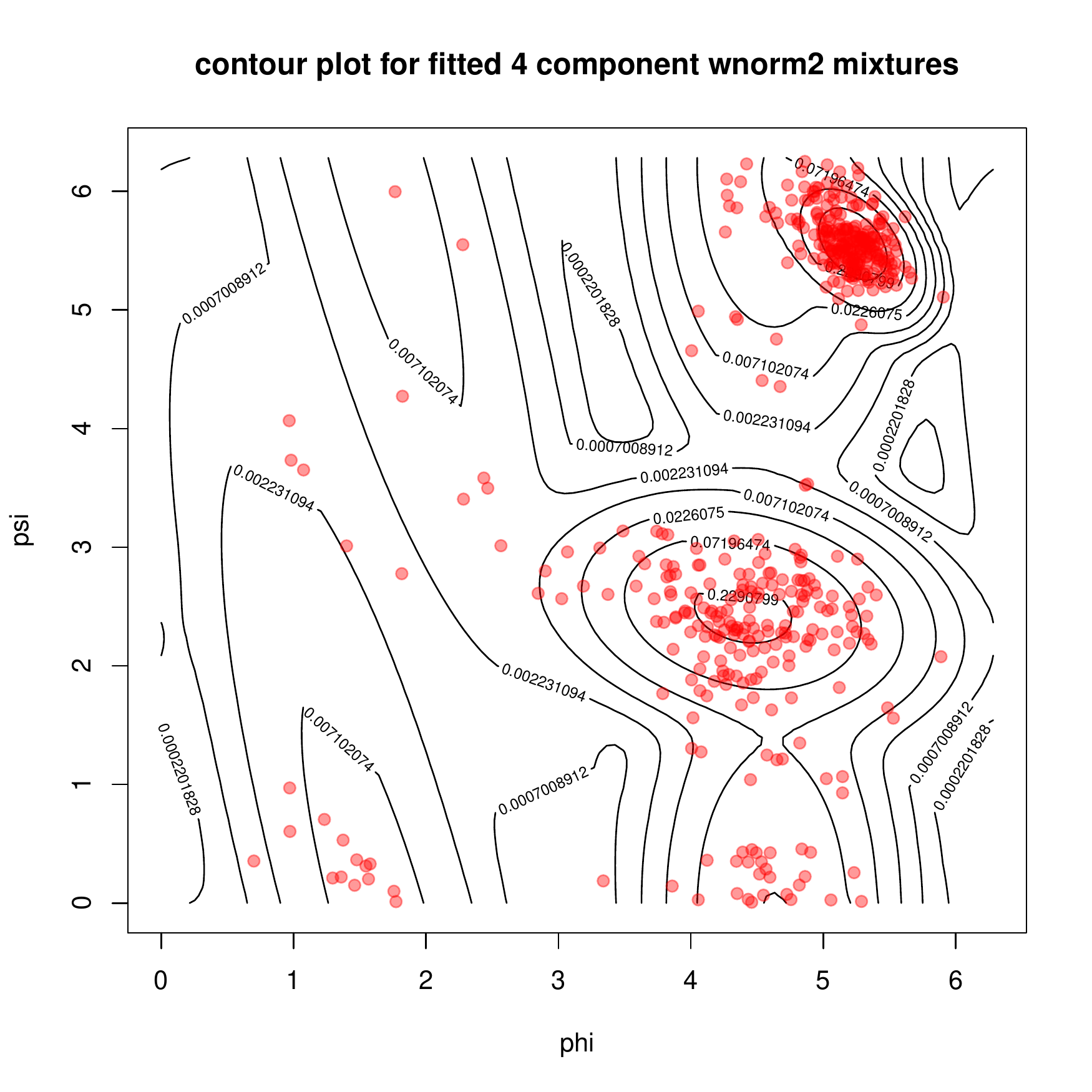}}
	\hfill
	\subcaptionbox{MCMC posterior mean after resolving label switching. \label{wnorm2con}}%
	[.485\linewidth]{\includegraphics[height=3.2in, width = 3.2in]{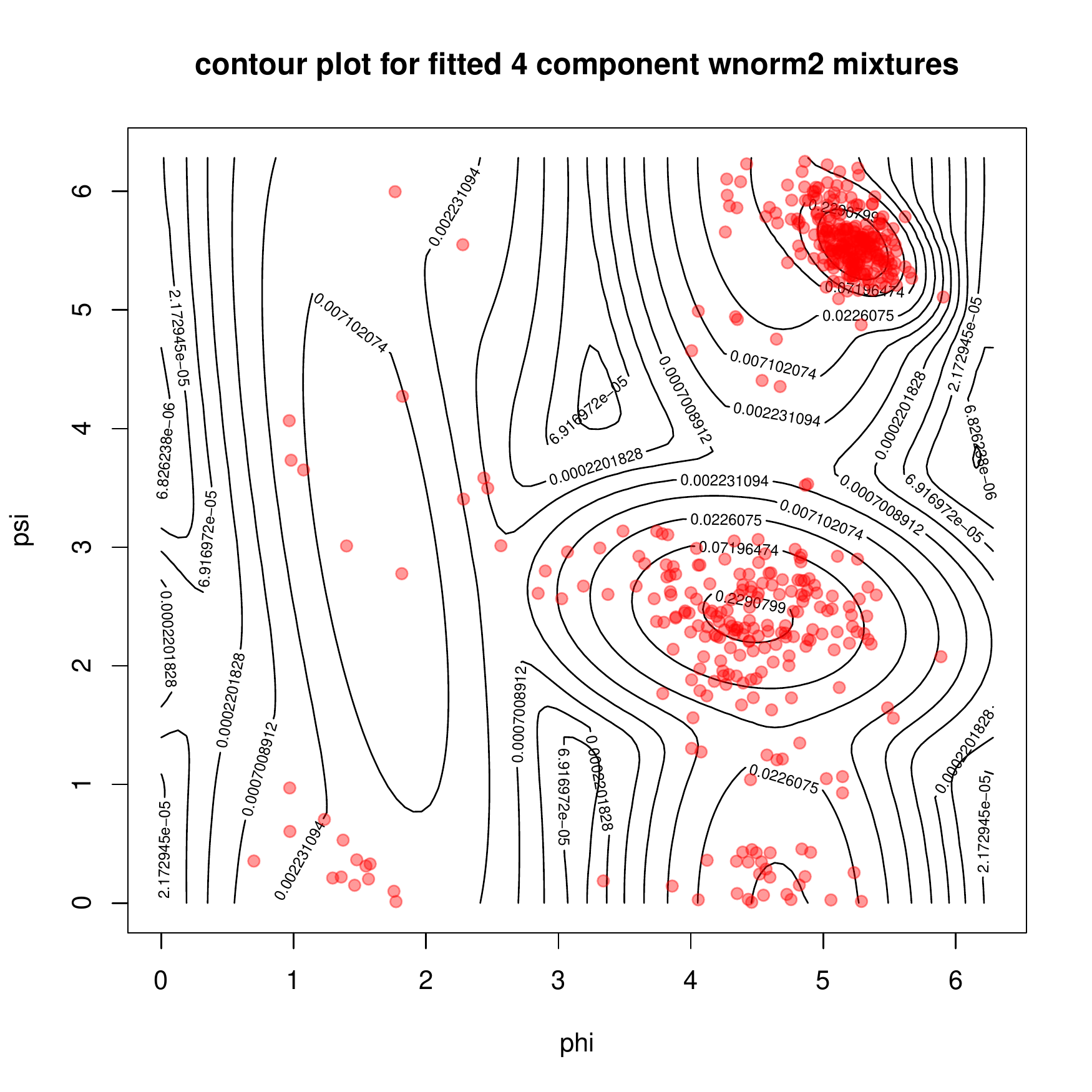}}
	\caption{Contour plots for fitted 4 component \code{wnorm2} mixture model with parameters estimated via MCMC-based MAP estimation (left) and  posterior mean estimation (right).}
	\label{wnorm2conplots}
\end{figure}

\begin{figure}[!htpb]
	\centering
	\subcaptionbox{Approximate MAP estimate from MCMC samples. \label{wnorm2den_raw}}%
	[.485\linewidth]{\includegraphics[height=3.2in, width = 3.2in]{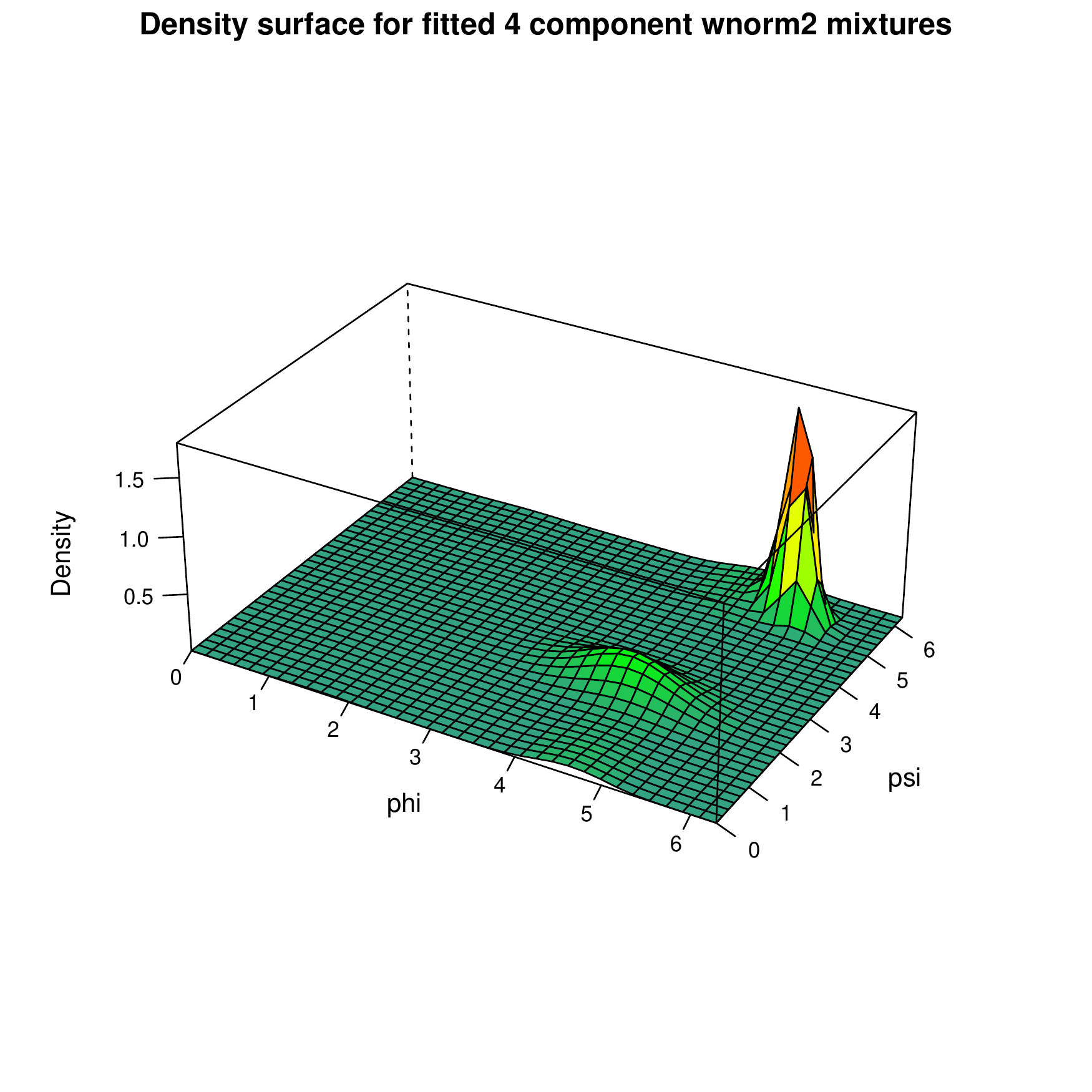}}
	\hfill
	\subcaptionbox{MCMC posterior mean after resolving label switching. \label{wnorm2den}}%
	[.485\linewidth]{\includegraphics[height=3.2in, width = 3.2in]{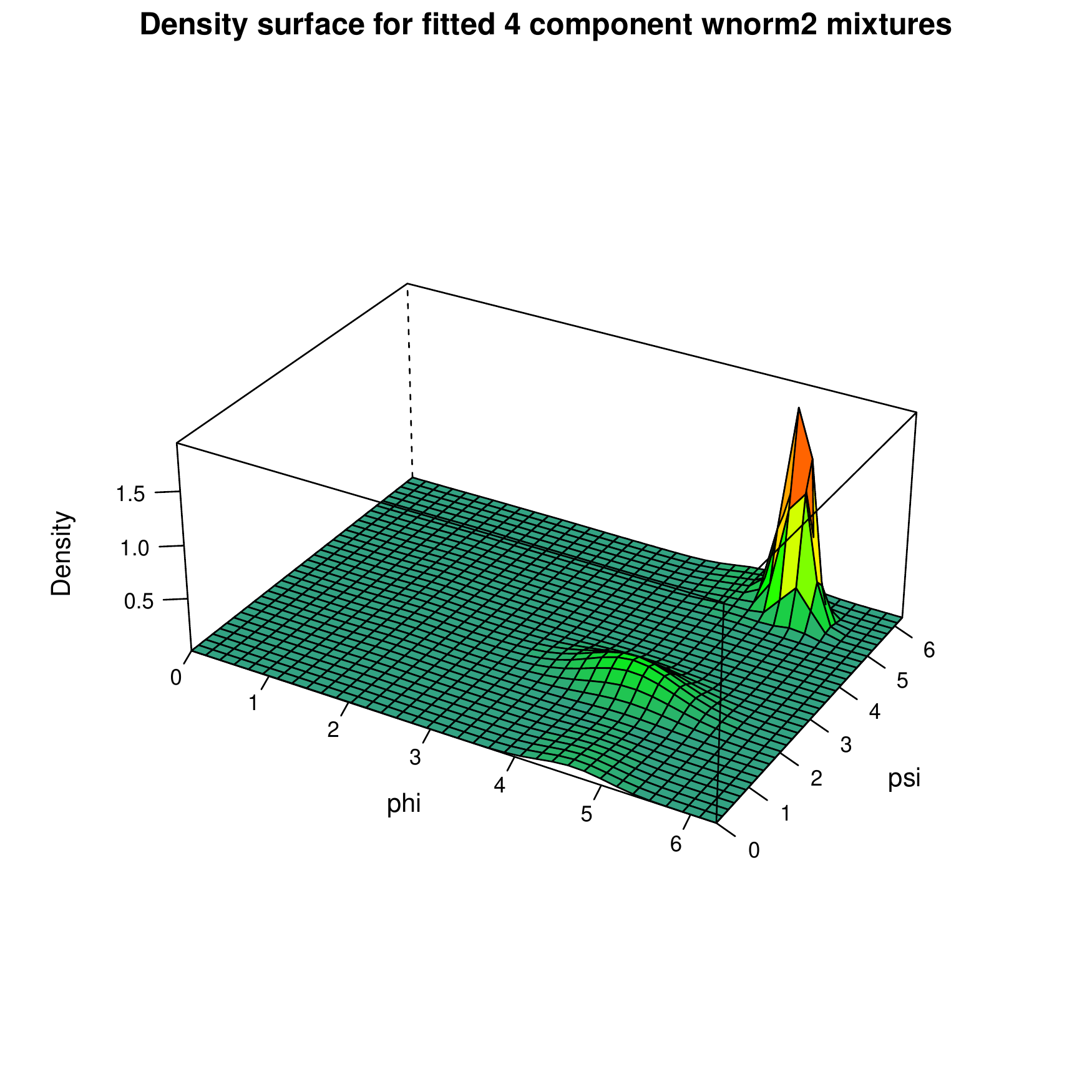}}
	\caption{Density surfaces for fitted 4 component \code{wnorm2} mixture model with parameters estimated via MCMC-based MAP estimation (left) and  posterior mean estimation (right).}
	\label{wnorm2denplots}
\end{figure}

The contours and density surfaces  displayed in Figures~\ref{wnorm2conplots} and \ref{wnorm2denplots}) are noticeably  similar.  They are also broadly similar to the fitted \code{vmsin} and \code{vmcos} mixture model density contours and surfaces. The estimated credible intervals along with MCMC posterior means for fitted 4 component wnorm2 mixture are obtained using the S3 function \code{summary.angmcmc} as follows.

\begin{Sinput}
R> summary(fit.wnorm2.best)
\end{Sinput}

\begin{Soutput}
                          1                  2
pmix      0.44 (0.38, 0.49)  0.16 (0.12, 0.21)
kappa1 35.57 (28.05, 44.49) 7.13 (4.55, 10.90)
kappa2 28.03 (20.93, 36.91)  1.48 (0.75, 2.44)
kappa3  12.32 (7.38, 18.02) 0.20 (-0.90, 1.12)
mu1       5.23 (5.20, 5.26)  4.66 (4.54, 4.78)
mu2       5.55 (5.52, 5.58)  6.17 (5.98, 6.28)
                        3                    4
pmix    0.34 (0.29, 0.38) 0.063 (0.043, 0.087)
kappa1  3.68 (2.86, 4.65)   8.05 (2.42, 17.30)
kappa2 7.68 (5.62, 10.32)   0.46 (0.034, 0.91)
kappa3  1.00 (0.10, 2.00)   1.13 (-1.36, 3.09)
mu1     4.43 (4.35, 4.51)    1.62 (1.37, 1.94)
mu2     2.43 (2.37, 2.50)   3.06 (0.044, 6.23)
\end{Soutput}

\paragraph{Comparative analysis of the three fitted models.}

So far, we have considered mixtures of \code{vmsin}, \code{vmcos} and \code{wnorm2} densities, and have fitted them to \code{tim8} data. The associated optimal number of components were determined via leave one out information criterion (LOOIC) in incremental fitting schemes. We then plotted the fitted density contours and  surfaces to assess the goodness of fit, and noticed that these plots are broadly similar across the three optimal fitted mixture models (of \code{vmsin}, \code{vmcos} and \code{wnorm2} densities). It is natural to then consider the question of which among these three fitted bivariate mixture models best explains the data. This can again be answered via LOOIC. For \code{angmcmc} objects LOOIC can be conveniently computed using the S3 function \code{looic} from package \pkg{loo}. However, here we do not need to recompute them since they were already computed during incremental model fitting, and can be extracted via the convenience function \code{bestcriterion} as follows:
\begin{Sinput}
R> vmsin.4.looic <- bestcriterion(fit.vmsin)
R> wnorm2.4.looic <- bestcriterion(fit.wnorm2)
R> vmcos.4.looic <- bestcriterion(fit.vmcos)
\end{Sinput}

Now we compare the three models via \code{loo::compare} on the basis of their LOOIC's:
\begin{Sinput}
R> comp <- loo::compare(vmsin.4.looic, vmcos.4.looic, wnorm2.4.looic)
R> comp
\end{Sinput}
\begin{Soutput}
               elpd_diff se_diff elpd_loo p_loo  looic 
vmsin.4.looic     0.0       0.0  -826.0     25.5 1652.0
wnorm2.4.looic   -0.9       3.6  -826.9     26.5 1653.8
vmcos.4.looic    -5.3       3.1  -831.3     25.2 1662.7
\end{Soutput}

The documentation of \code{loo} from package \pkg{loo} v2.1.0 says ``When comparing two fitted models, we can estimate the difference in their expected predictive accuracy by the difference in \code{elpd_loo} or \code{elpd_waic} (or multiplied by -2, if desired, to be on the deviance scale). When that difference, \code{elpd_diff}, is positive then the expected predictive accuracy for the second model is higher. A negative \code{elpd_diff} favors the first model. When using \code{compare()} with more than two models, the values in the \code{elpd_diff} and \code{se_diff} columns of the returned matrix are computed by making pairwise comparisons between each model and the model with the best ELPD (i.e., the model in the first row)''. 

Thus the above output provides a ranking among the three models based on their (estimated) expected log predictive density (elpd) values; a higher elpd indicates a better predictive accuracy and thus a better fit.  The fitted \code{vmsin} model appears to have the highest elpd (see the column \code{elpd_diff} in the above output), followed by the fitted \code{wnorm2} model and the fitted \code{vmcos} model. However, these elpd's are estimates, and the variabilities of these estimates need to be considered when making comparisons. To address this, we make use of the standard errors of the differences provided in the column \code{se_diff} and construct approximate 95\% credible interval estimates of the pairwise elpd differences (viz., \code{elpd_diff} $\pm$ 2 \code{se_diff}) for the fitted model pairs (\code{wnorm2}, \code{vmsin}) and  (\code{vmcos}, \code{vmsin}). A elpd difference is considered to be significant (at the 95\% level) if the corresponding interval estimate does not contain zero. The \proglang{R} commands are as follows.
\begin{Sinput}
R> find_ci <- function(x, digits = 1) {
+    round(c(lower = unname(x[1] - 2*x[2]),
+            upper = unname(x[1] + 2*x[2])),
+          digits = digits)
+ }
R> t(apply(comp[-1, c("elpd_diff", "se_diff")], 1, find_ci))
\end{Sinput}
\begin{Soutput}
               lower upper
wnorm2.4.looic  -8.1   6.3
vmcos.4.looic  -11.5   0.9
\end{Soutput}        
This shows that approximate 95\% interval estimates of the elpd differences between the fitted best \code{vmsin} and the best \code{wnorm2} model, and the best \code{vmsin} and the best \code{vmcos} model, are $(-8.1, 6.3)$ and $(-11.5, 0.9)$ respectively, both containing zero. It therefore follows that all three of the fitted (four component) mixture models are not significantly different in terms of their goodness of fit to these data.

\subsection{Fitting mixture models on the wind (univariate) data}

The \code{wind} data contains 239 observations on wind direction in radians  measured at Saturna Island, British Columbia, Canada, during October 1-10, 2016. As a result of a severe storm that occurred during that period, the data shows significant variability with an interesting bi- (or possibly tri-) modality. Figure~\ref{windhist} shows a histogram of the data, constructed by applying the default \code{hist} function on \code{wind[, "angle"]}.

\begin{figure}[hptb]
	\centering
	\includegraphics[width=4in]{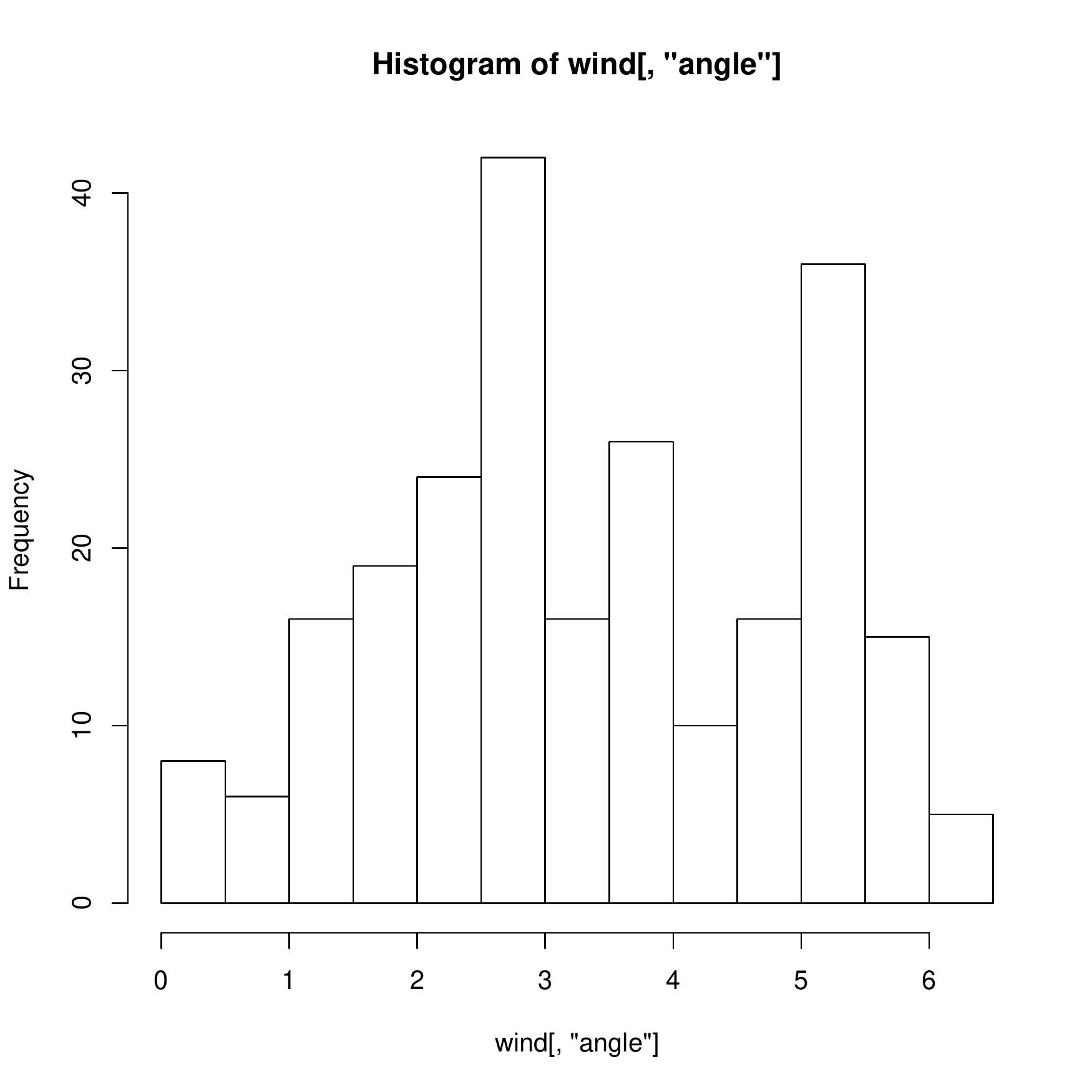}
	\caption{Histogram for the wind dataset.}
\end{figure} \label{windhist}

Similar to the bivariate case,  we  use \code{fit_incremental_angmix} to fit mixtures of \code{vm} and \code{wnorm} separately with incremental number of components (starting at 1) and determine an optimum size in each case.  To fit  each mixture model, we first generate 20,000 MCMC samples for the parameters, with the (default) first half taken as burn-in. Except for \code{n.iter}, defaults for all other arguments are used in these examples. After generating the MCMC samples, we assess their convergence via LPD and parameter trace plots. Following, we visualize the fits via  density curves constructed using the S3 function \code{densityplot} (which requires \pkg{lattice}). Finally, we compute point and interval estimates for each parameter using the S3 function \code{summary.angmcmc}.

\paragraph{Fitting the vm mixture model.}
We start with \code{vm},  The \proglang{R} commands are as follows:
\begin{Sinput}
R> set.seed(12321)
R> fit.vm <- fit_incremental_angmix(model="vm", data = wind[, "angle"],
+                                   crit = "LOOIC",
+                                   start_ncomp = 1, max_ncomp = 10,
+                                   n.iter = 2e4,
+                                   n.chains = 3)
\end{Sinput}

The function stops at 3 components and determines the optimal number of components to be 2. After it stops, we extract the \code{angmcmc} object corresponding to the best model from its output, and inspect its LPD and parameter traces for convergence (omitted for brevity). 
\begin{Sinput}
R> fit.vm.best <- bestmodel(fit.vm)
R> lpdtrace(fit.vm.best)
R> paramtrace(fit.vm.best)
\end{Sinput}

%\begin{figure}[hptb]
%	\centering
%	\includegraphics[height=3in, width = 3in]{plots_arxiv/vm_lpdtrace.png}
%	\caption{Log posterior trace plot for the Markov chain associated with the best fitted vm mixture model. \label{vmlpdtrace}}%
% \end{figure}

% The LPD trace plots are displayed in Figure~\ref{vmlpdtrace}, which show that the LPDs in all the three chains have stabilized and reached similar levels. The parameter traces are shown in Figure~\ref{vm_paramtrace_plots} in Appendix~\ref{appendix_vmtraces}, which are also stable and show no noticeable trends or patterns.
% \begin{Sinput}
% R> fit.vm.best <- fix_label(fit.vm.best)
% R> round(pointest(fit.vm.best, fn = "MODE"), 2) 
% \end{Sinput}
% \begin{Soutput}
%          1    2
% pmix  0.24 0.76
% kappa 5.86 1.06
% mu    5.34 2.73
% \end{Soutput}
% \begin{Sinput}
% R> round(pointest(fit.vm.best, fn = mean), 2)
% \end{Sinput}
% \begin{Soutput}
%          1    2
% pmix  0.24 0.76
% kappa 7.23 1.03
% mu    5.30 2.75
% \end{Soutput}

% We note that both the MAP and posterior mean estimates reasonably agree with each other, except in the parameter \code{kappa} in the first component, which is somewhat larger in the latter. 
We first use \code{fix_label} to undo label switching, and then assess goodness of fit through density curves fitted using MAP and posterior mean estimation:  
\begin{Sinput}
R> fit.vm.best <- fix_label(fit.vm.best)
R> lattice::densityplot(fit.vm.best, fn = "MODE")
R> lattice::densityplot(fit.vm.best, fn = mean)
\end{Sinput}
The plots are displayed in Figures~\ref{vmden_raw} and \ref{vmden}, which show noticeable similarity. 
\begin{figure}[!htpb]
	\centering
	\subcaptionbox{Approximate MAP estimate from MCMC samples. \label{vmden_raw}}%
	[.485\linewidth]{\includegraphics[height=3.2in, width = 3.2in]{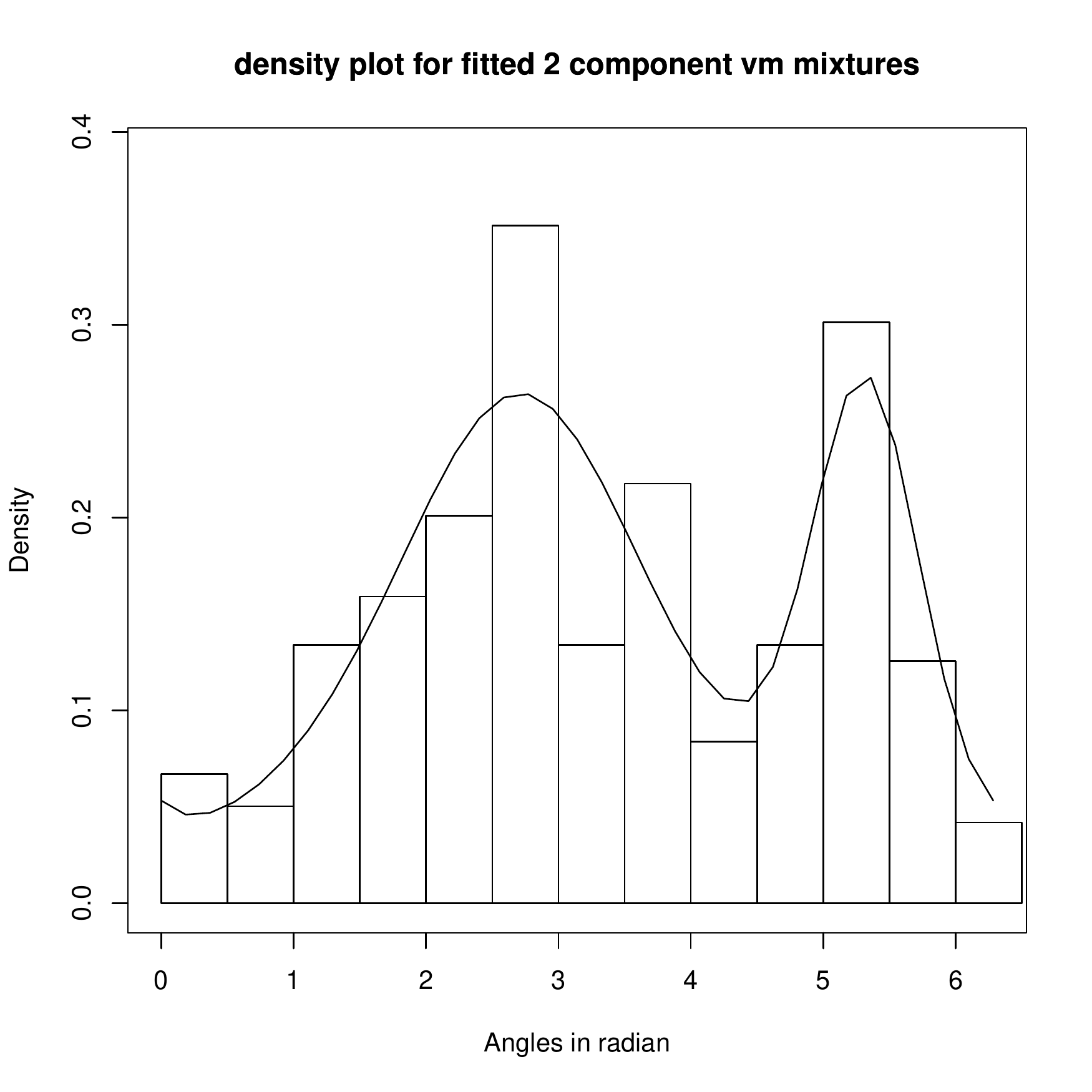}}
	\hfill
	\subcaptionbox{MCMC posterior mean after resolving label switching. \label{vmden}}%
	[.485\linewidth]{\includegraphics[height=3.2in, width = 3.2in]{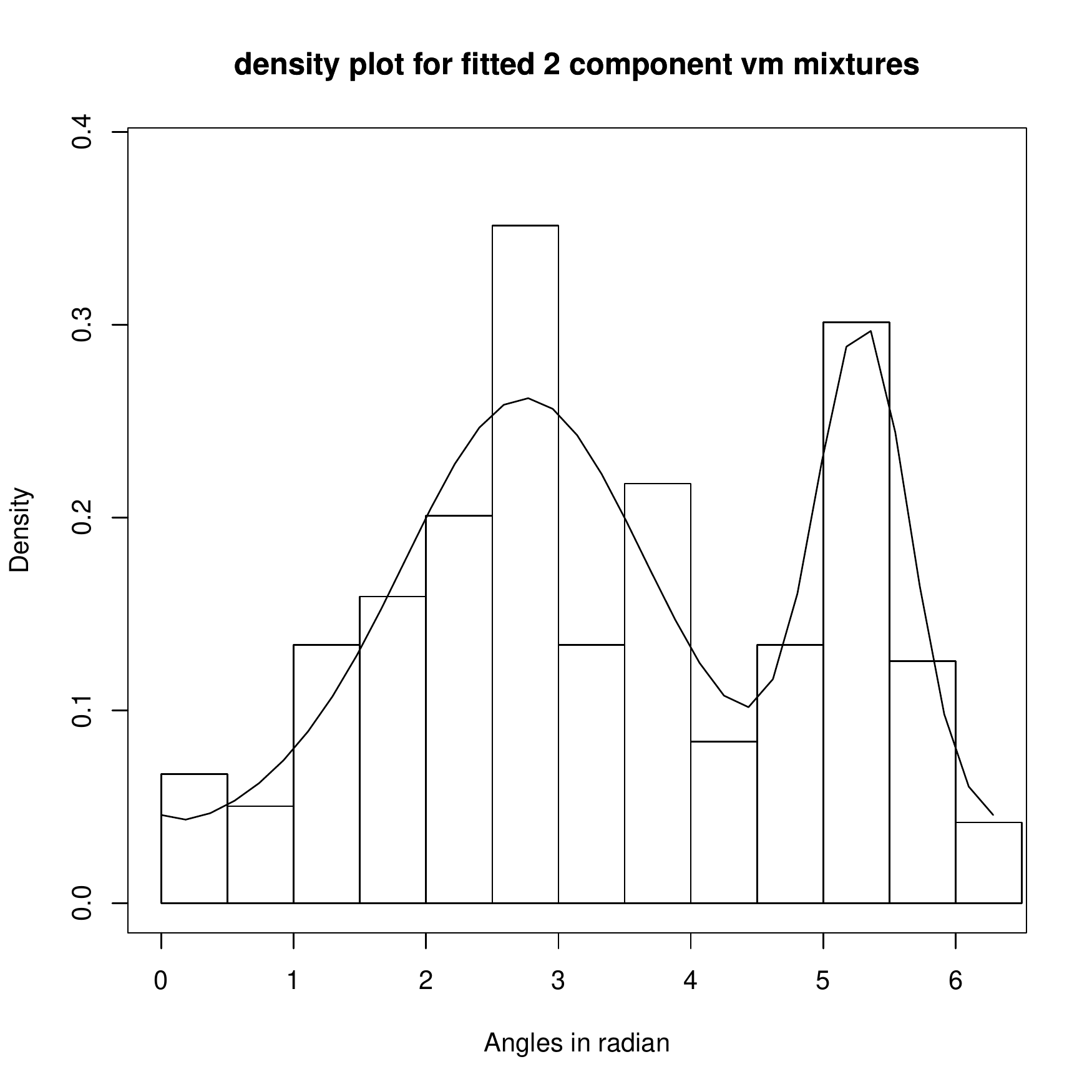}}
	\caption{Density curves for fitted 2 component \code{vm} mixture model with parameters estimated via MCMC-based MAP estimation (left) and  posterior mean estimation (right).}
	\label{vmdenplots}
\end{figure}

Finally, we compute MCMC posterior mean and associated 95\% credible interval using S3 function \code{summary}:
\begin{Sinput}
R> summary(fit.vm.best)
\end{Sinput}
\begin{Soutput}
                       1                 2
pmix   0.24 (0.13, 0.43) 0.76 (0.57, 0.87)
kappa 7.81 (1.34, 22.40) 1.01 (0.59, 1.71)
mu     5.30 (5.08, 5.49) 2.75 (2.49, 3.04)
\end{Soutput}

%and display the results in Table~\ref{vmtab}.
% 
%% Table generated by Excel2LaTeX from sheet 'vmtab'
%\begin{table}[htbp]
%	\centering
%
%	\begin{tabular}{|c|r|r|r|}
%		\toprule
%       $j$	& $p_j$ & $\kappa_j$ & $\mu_j$ \\ 
%      \hline
%      1 & 0.24 (0.13, 0.42) & 7.23 (1.40, 20.05) & 5.30 (5.08, 5.49) \\ 
%      2 & 0.76 (0.58, 0.87) & 1.03 (0.60, 1.68) & 2.75 (2.48, 3.03) \\ 
%      	\bottomrule
%	\end{tabular}%
%	\caption{Estimated posterior means and 95\% posterior credible bounds for parameters in the fitted 2 component \code{vm} \code{angmcmc} object \code{fit.vm.best}.}
%	\label{vmtab}%
%\end{table}%

\paragraph{Fitting the wnorm mixture model.}
Next, we do similar exercises with \code{wnorm} model. The following are the \proglang{R} codes used.
\begin{Sinput}
R> set.seed(12321)
R> fit.wnorm <- fit_incremental_angmix(model="wnorm", data = wind[, "angle"],
+                                      crit = "LOOIC",
+                                      start_ncomp = 1, max_ncomp = 10,
+                                      n.iter = 2e4,
+                                      n.chains = 3)
R> fit.wnorm.best <- bestmodel(fit.wnorm)
R> lpdtrace(fit.wnorm.best)
R> paramtrace(fit.wnorm.best)
R> fit.wnorm.best <- fix_label(fit.wnorm.best)
R> lattice::densityplot(fit.wnorm.best, fn = "MODE")
R> lattice::densityplot(fit.wnorm.best, fn = mean)
\end{Sinput}

Similar to the \code{"vm"} case, here also the function stops at 3 components and determines the optimal number of components to be 2.  The LPD and parameter trace plots are  omitted for brevity. The density curves fitted using MAP and poterior mean estimates are shown in Figures~\ref{wnormden_raw} and \ref{wnormden} respectively, which are  noticeably similar. They are also  broadly similar to the plots associated with the fitted \code{vm} mixture densities shown in Figure~\ref{vmdenplots}.

\begin{figure}[!htpb]
	\centering
	\subcaptionbox{Approximate MAP estimate from MCMC samples. \label{wnormden_raw}}%
	[.485\linewidth]{\includegraphics[height=3.2in, width = 3.2in]{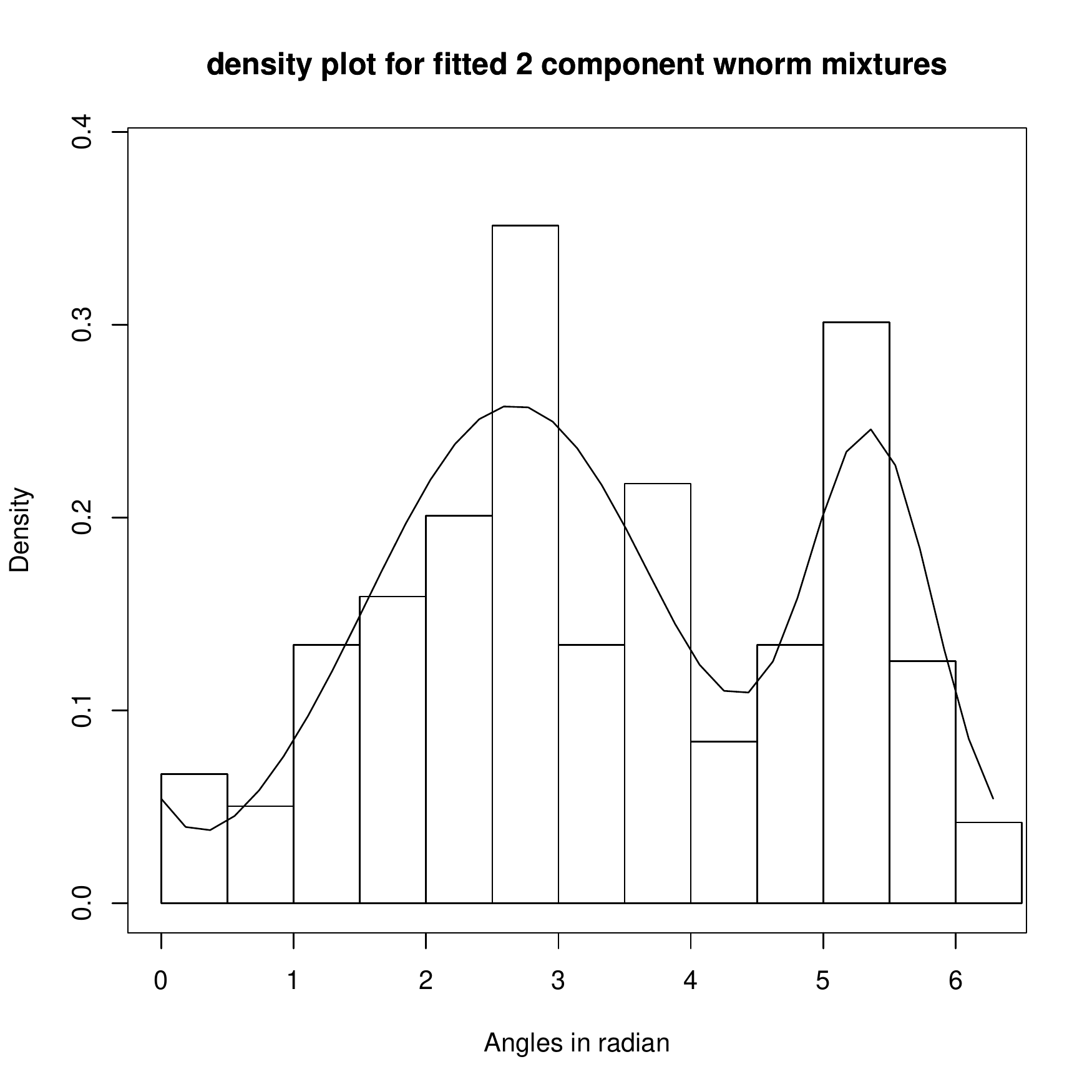}}
	\hfill
	\subcaptionbox{Estimated posterior mean after resolving label switching. \label{wnormden}}%
	[.485\linewidth]{\includegraphics[height=3.2in, width = 3.2in]{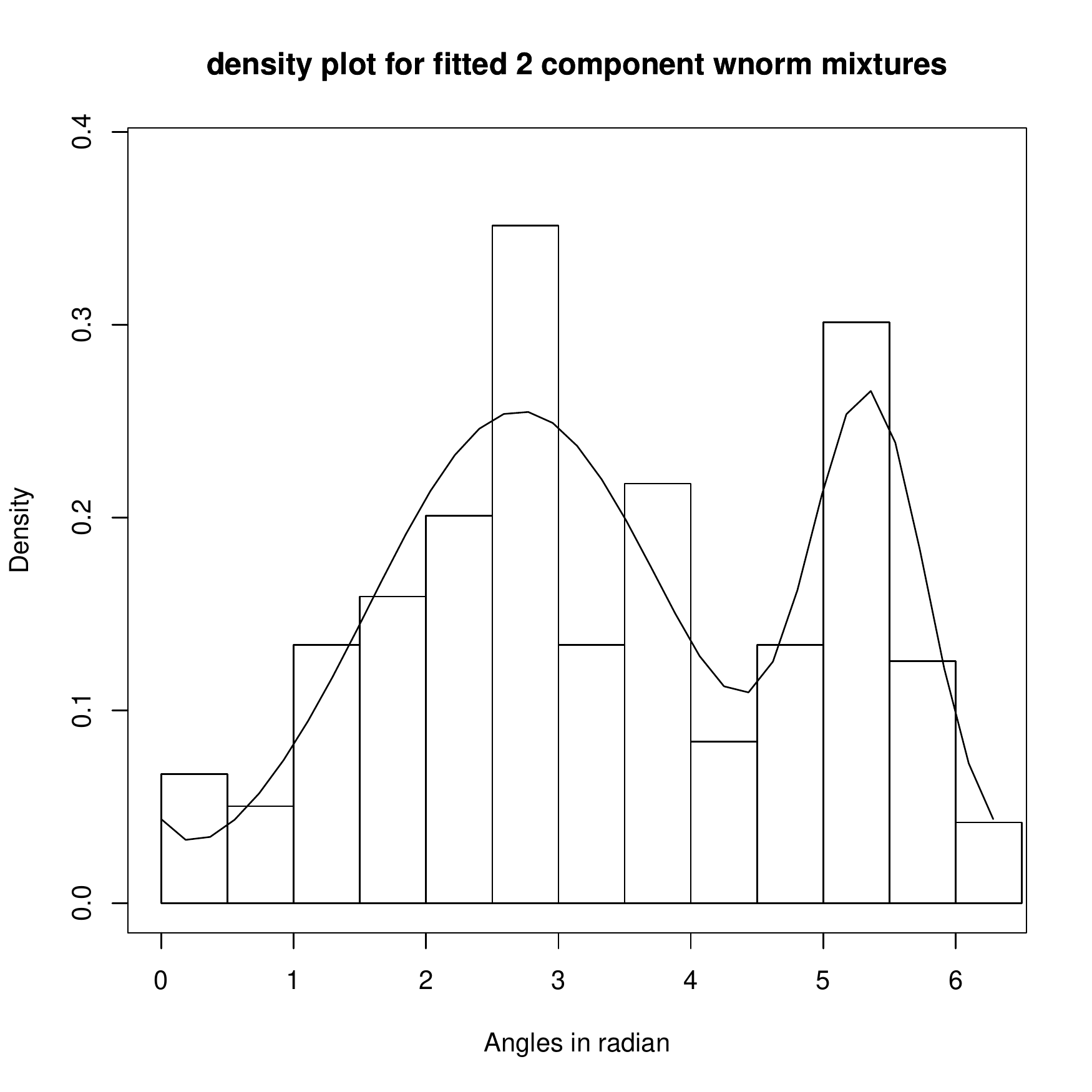}}
	\caption{Density curves for fitted 2 component \code{wnorm} mixture model with parameters estimated via MCMC-based MAP estimation (left) and  posterior mean estimation (right).}
	\label{wnormdenplots}
\end{figure}

Finally we compute the MCMC posterior mean and 95\% credible interval using the S3 function \code{summary}.
\begin{Sinput}
R> summary(fit.wnorm.best)
\end{Sinput}
\begin{Soutput}
                       1                 2
pmix   0.28 (0.16, 0.43) 0.72 (0.57, 0.84)
kappa 5.09 (1.16, 14.21) 0.78 (0.43, 1.35)
mu     5.35 (5.12, 5.53) 2.71 (2.45, 3.00)
\end{Soutput} 

%\begin{table}[htbp]
%	\centering
%	
%	\begin{tabular}{|c|r|r|r|}
%	\toprule
%	$j$	& $p_j$ & $\kappa_j$ & $\mu_j$ \\ 
%	\hline
%	1 & 0.28 (0.16, 0.47) & 4.95 (0.84, 13.26) & 5.34 (5.01, 5.54) \\ 
%    2 & 0.72 (0.53, 0.84) & 0.79 (0.43, 1.46) & 2.71 (2.44, 3.02) \\ 
%	\bottomrule
%\end{tabular}%
%  \caption{Estimated posterior means and 95\% posterior credible bounds for parameters in the fitted 2 component \code{wnorm} \code{angmcmc} object \code{fit.wnorm.best}.}
%	\label{wnormtab}%
%\end{table}%

\paragraph{Comparison between the two models.}
Similar to the bivariate case, we  compare the fitted \code{vm} and \code{wnorm} mixture models using their LOOIC values. We first extract the LOOICs using the convenience function \code{bestcriterion}:
\begin{Sinput}
R> vm.2.looic <- bestcriterion(fit.vm)
R> wnorm.2.looic <- bestcriterion(fit.wnorm)
\end{Sinput}
Then we compare the two models based on their estimated expected log predictive densities, by using \code{loo::compare} on \code{vm.2.looic} and \code{wnorm.2.looic}:
\begin{Sinput}
R> loo::compare(wnorm.2.looic, vm.2.looic)
\end{Sinput}
\begin{Soutput}
elpd_diff        se 
      1.0       1.0 
\end{Soutput}
Clearly an approximate 95\% credible interval estimate for the elpd difference, obtained by \code{elpd_diff} $\pm$ 2 \code{se}, contains zero. This implies that the fitted \code{vm} model and the fitted \code{wnorm} model do not have a significant difference in terms of their goodness of fit to these data.
%log predictive densities using compare  and also note the associated log (maximum conditional) likelihoods; the results are shown in Table~\ref{unicompare}.
%\begin{Sinput}
%R> logLik(fit.vm.best)
%R> set.seed(12321)
%R> bridgesampling::bridge_sampler(fit.vm.best)
%R> logLik(fit.wnorm.best)
%R> set.seed(12321)
%R> bridgesampling::bridge_sampler(fit.wnorm.best)
%\end{Sinput}
%
%
%\begin{table}[htbp]
%\centering
%\begin{tabular}{|l|r|r|}
%	\toprule
%	& \code{vm} & \code{wnorm} \\
%	\midrule
%	 log-likelihood  & -412.94 & -413.82 \\
%	log marginal likelihood  & -426.12 & -426.89 \\
%	\bottomrule
%\end{tabular}%
%\caption{Comparing fitted \code{vm} and \code{wnorm} mixture models via  log-likelihoods and log marginal likelihoods}
%\label{unicompare}%
%\end{table}%
%
%
%From Table~\ref{unicompare}, it follows that in terms of both marginal likelihood and (conditional) likelihood, the fitted \code{vm} model is better than the fitted \code{wnorm} model in terms of representing the data, although the differences in their values are small.

%------------------------------------------------------------------------------------------------

\section{Concluding remarks and future work}

Angular data, both univariate and bivariate, arise naturally in a variety of modern scientific problems, and their analyses require appropriate use of rigorous statistical tools and distributions specifically developed for such data. The lack of comprehensive software implementing such methods (in \proglang{R} or otherwise) has hindered their applicability in practice -- especially for bivariate angular models and mixtures thereof.

%an early attempt to  fill this void. The objective behind the creation of \bambi is to provide 
The package \bambi is our contribution to this area, providing a platform that implements a set of formal statistical tools and methods for analyzing such data, and is readily accessible to  practitioners. 
%However, \bambi is still in its early stage, 
There are various directions in which the software could be extended in future releases. Some possible features under consideration include the following.
%In the following, we make a list of features and functionalities that are not currently implemented in \pkg{BAMBI}, but are under consideration, and are expected to be added in future releases.  
\begin{itemize}
\item Implementation of additional angular distributions, such as wrapped Cauchy.
\item Additional methods of density evaluation and random simulation from fitted models on the basis of MCMC samples. 
\item Visualizations of bivariate angles with toroidal plots.
\item Use of parallel tempering or related methods during MCMC simulations for faster exploration of the posterior density.
\item Proper handling of overfitting heterogeneity that takes place in high dimensional mixture models when some of the component specific parameters in two different components are identical. \cite{fruhwirth:2011} suggests the use of sparse priors for the component specific location parameters to deal with this problem. 
\end{itemize}

\section{Acknowledgment}
The authors thank the two anonymous referees who reviewed this work, and provided valuable, thorough and constructive suggestions on both the package and the manuscript. The authors are also thankful to Anahita Nodehi from Tarbiat Modares University, Iran, for notifying them of a few typos and providing helpful suggestions on the R package documentation.

%-------------------------------------------------------------------------------------------------------
% bibliography
%-------------------------------------------------------------------------------------------------------
%\bibliographystyle{apa}
\bibliography{all_citations}

\begin{thebibliography}{72}
\newcommand{\enquote}[1]{``#1''}
\providecommand{\natexlab}[1]{#1}
\providecommand{\url}[1]{\texttt{#1}}
\providecommand{\urlprefix}{URL }
\expandafter\ifx\csname urlstyle\endcsname\relax
  \providecommand{\doi}[1]{doi:\discretionary{}{}{}#1}\else
  \providecommand{\doi}{doi:\discretionary{}{}{}\begingroup
  \urlstyle{rm}\Url}\fi
\providecommand{\eprint}[2][]{\url{#2}}

\bibitem[{Abramowitz and Stegun(1964)}]{abramowitz:1964}
Abramowitz M, Stegun IA (1964).
\newblock \emph{{Handbook of Mathematical Functions: with Formulas, Graphs, and
  Mathematical Tables}}, volume~55.
\newblock Courier Corporation.

\bibitem[{Agostinelli and Lund(2013)}]{circular_Rpack}
Agostinelli C, Lund U (2013).
\newblock \emph{{\pkg{circular}: Circular Statistics}}.
\newblock \proglang{R} package version version 0.4-7,
  \urlprefix\url{https://CRAN.R-project.org/package=circular}.

\bibitem[{Akaike(1974)}]{akaike:1974}
Akaike H (1974).
\newblock \enquote{{A New Look at the Statistical Model Identification}.}
\newblock \emph{{IEEE Transactions on Automatic Control}}, \textbf{19}(6),
  716--723.

\bibitem[{Benaglia \emph{et~al.}(2009)Benaglia, Chauveau, Hunter, and
  Young}]{mixtools}
Benaglia T, Chauveau D, Hunter DR, Young D (2009).
\newblock \enquote{{\pkg{mixtools}: An \proglang{R} Package for Analyzing
  Finite Mixture Models}.}
\newblock \emph{{Journal of Statistical Software}}, \textbf{32}(6), 1--29.
\newblock \urlprefix\url{http://www.jstatsoft.org/v32/i06/}.

\bibitem[{Best and Fisher(1979)}]{best:fisher:1979}
Best DJ, Fisher NI (1979).
\newblock \enquote{{Efficient Simulation of the von Mises Distribution}.}
\newblock \emph{Journal of the Royal Statistical Society C (Applied
  Statistics)}, \textbf{28}(2), 152--157.
\newblock ISSN 00359254, 14679876.
\newblock \urlprefix\url{http://www.jstor.org/stable/2346732}.

\bibitem[{Bhattacharya and Cheng(2015)}]{bhattacharya:2015}
Bhattacharya D, Cheng J (2015).
\newblock \enquote{De novo protein conformational sampling using a
  probabilistic graphical model.}
\newblock \emph{Scientific reports}, \textbf{5}, 16332.

\bibitem[{Boomsma \emph{et~al.}(2008)Boomsma, Mardia, Taylor, Ferkinghoff-Borg,
  Krogh, and Hamelryck}]{boomsma:2008}
Boomsma W, Mardia KV, Taylor CC, Ferkinghoff-Borg J, Krogh A, Hamelryck T
  (2008).
\newblock \enquote{A generative, probabilistic model of local protein
  structure.}
\newblock \emph{Proceedings of the National Academy of Sciences},
  \textbf{105}(26), 8932--8937.

\bibitem[{Celeux \emph{et~al.}(2006)Celeux, Forbes, Robert, Titterington
  \emph{et~al.}}]{celeux:2006}
Celeux G, Forbes F, Robert CP, Titterington DM, \emph{et~al.} (2006).
\newblock \enquote{Deviance information criteria for missing data models.}
\newblock \emph{Bayesian analysis}, \textbf{1}(4), 651--673.

\bibitem[{Chakraborty and Wong(2018)}]{chakraborty:wong:2018:circular}
Chakraborty S, Wong SW (2018).
\newblock \enquote{On the circular correlation coefficients for bivariate von
  Mises distributions on a torus.}
\newblock \emph{arXiv preprint arXiv:1804.08553}.

\bibitem[{Diebolt and Robert(1994)}]{diebolt:1994}
Diebolt J, Robert CP (1994).
\newblock \enquote{Estimation of Finite Mixture Distributions through Bayesian
  Sampling.}
\newblock \emph{Journal of the Royal Statistical Society. Series B
  (Methodological)}, \textbf{56}(2), 363--375.
\newblock ISSN 00359246.
\newblock \urlprefix\url{http://www.jstor.org/stable/2345907}.

\bibitem[{Duane \emph{et~al.}(1987)Duane, Kennedy, Pendleton, and
  Roweth}]{duane:1987}
Duane S, Kennedy AD, Pendleton BJ, Roweth D (1987).
\newblock \enquote{{Hybrid Monte Carlo}.}
\newblock \emph{{Physics Letters B}}, \textbf{195}(2), 216--222.

\bibitem[{Fisher(1995)}]{fisher:1995}
Fisher N (1995).
\newblock \emph{{Statistical Analysis of Circular Data}}.
\newblock {Cambridge University Press}.
\newblock ISBN 9780521568906.

\bibitem[{Fisher and Lee(1983)}]{fisher:1983}
Fisher NI, Lee A (1983).
\newblock \enquote{A correlation coefficient for circular data.}
\newblock \emph{Biometrika}, \textbf{70}(2), 327--332.

\bibitem[{Fraley and Raftery(2002)}]{mclust:2002}
Fraley C, Raftery AE (2002).
\newblock \enquote{{Model-based Clustering, Discriminant Analysis and Density
  Estimation}.}
\newblock \emph{Journal of the American Statistical Association}, \textbf{97},
  611--631.

\bibitem[{Fraley \emph{et~al.}(2012)Fraley, Raftery, Murphy, and
  Scrucca}]{mclust:2012}
Fraley C, Raftery AE, Murphy TB, Scrucca L (2012).
\newblock \enquote{{\pkg{mclust} Version 4 for \proglang{R}: Normal Mixture
  Modeling for Model-Based Clustering, Classification, and Density
  Estimation}.}
\newblock \emph{Technical Report 597}, Department of Statistics, University of
  Washington.

\bibitem[{Fr{\"u}hwirth-Schnatter(2001)}]{fruhwirth:2001}
Fr{\"u}hwirth-Schnatter S (2001).
\newblock \enquote{Markov chain Monte Carlo estimation of classical and dynamic
  switching and mixture models.}
\newblock \emph{Journal of the American Statistical Association},
  \textbf{96}(453), 194--209.

\bibitem[{Fr{\"u}hwirth-Schnatter(2006)}]{fruhwirth:2006}
Fr{\"u}hwirth-Schnatter S (2006).
\newblock \emph{Finite Mixture and Markov Switching Models}.
\newblock 1 edition. Springer-Verlag New York.
\newblock ISBN 9780387329093.
\newblock \doi{10.1007/978-0-387-35768-3}.

\bibitem[{Fr{\"u}hwirth-Schnatter(2011)}]{fruhwirth:2011}
Fr{\"u}hwirth-Schnatter S (2011).
\newblock \emph{Dealing with Label Switching under Model Uncertainty},
  chapter~10, pp. 213--239.
\newblock Wiley-Blackwell.
\newblock ISBN 9781119995678.
\newblock \doi{10.1002/9781119995678.ch10}.
\newblock
  \urlprefix\url{https://onlinelibrary.wiley.com/doi/abs/10.1002/9781119995678.ch10}.

\bibitem[{Gelfand and Smith(1990)}]{gelfand:1990}
Gelfand AE, Smith AFM (1990).
\newblock \enquote{{Sampling-Based Approaches to Calculating Marginal
  Densities}.}
\newblock \emph{{Journal of the American Statistical Association}},
  \textbf{85}(410), 398--409.

\bibitem[{Gelman \emph{et~al.}(2013)Gelman, Carlin, Stern, Dunson, Vehtari, and
  Rubin}]{gelman:2014}
Gelman A, Carlin JB, Stern HS, Dunson DB, Vehtari A, Rubin DB (2013).
\newblock \emph{{Bayesian Data Analysis, Third Edition}}.
\newblock Chapman and Hall/CRC.

\bibitem[{Gelman \emph{et~al.}(2014)Gelman, Hwang, and
  Vehtari}]{gelman:2014:paper}
Gelman A, Hwang J, Vehtari A (2014).
\newblock \enquote{{Understanding Predictive Information Criteria for Bayesian
  Models}.}
\newblock \emph{Statistics and Computing}, \textbf{24}(6), 997--1016.

\bibitem[{Gelman and Rubin(1992)}]{gelman:1992}
Gelman A, Rubin DB (1992).
\newblock \enquote{Inference from iterative simulation using multiple
  sequences.}
\newblock \emph{Statistical science}, pp. 457--472.

\bibitem[{Geman and Geman(1984)}]{geman:1984}
Geman S, Geman D (1984).
\newblock \enquote{{Stochastic Relaxation, Gibbs Distributions, and the
  Bayesian Restoration of Images}.}
\newblock \emph{{IEEE Trans. Pattern Anal. Mach. Intell.}}, \textbf{6}(6),
  721--741.
\newblock ISSN 0162-8828.

\bibitem[{Geweke \emph{et~al.}(1991)}]{geweke:1991}
Geweke J, \emph{et~al.} (1991).
\newblock \emph{Evaluating the accuracy of sampling-based approaches to the
  calculation of posterior moments}, volume 196.
\newblock Federal Reserve Bank of Minneapolis, Research Department Minneapolis,
  MN, USA.

\bibitem[{Grim(2017)}]{grim:2017}
Grim J (2017).
\newblock \enquote{Approximation of Unknown Multivariate Probability
  Distributions by Using Mixtures of Product Components: A Tutorial.}
\newblock \emph{International Journal of Pattern Recognition and Artificial
  Intelligence}, \textbf{31}(09), 1750028.

\bibitem[{Gronau \emph{et~al.}(2017)Gronau, Sarafoglou, Matzke, Ly, Boehm,
  Marsman, Leslie, Forster, Wagenmakers, and Steingroever}]{gronau:2017}
Gronau QF, Sarafoglou A, Matzke D, Ly A, Boehm U, Marsman M, Leslie DS, Forster
  JJ, Wagenmakers EJ, Steingroever H (2017).
\newblock \enquote{A tutorial on bridge sampling.}
\newblock \emph{Journal of mathematical psychology}, \textbf{81}, 80--97.

\bibitem[{Gronau and Singmann(2017)}]{bridgesampling_rpack}
Gronau QF, Singmann H (2017).
\newblock \emph{bridgesampling: Bridge Sampling for Marginal Likelihoods and
  Bayes Factors}.
\newblock R package version 0.4-0,
  \urlprefix\url{https://CRAN.R-project.org/package=bridgesampling}.

\bibitem[{Hastings(1970)}]{hastings:1970}
Hastings WK (1970).
\newblock \enquote{{Monte Carlo Sampling Methods Using Markov Chains and Their
  Applications}.}
\newblock \emph{Biometrika}, \textbf{57}(1), 97--109.
\newblock ISSN 00063444.

\bibitem[{Hobert \emph{et~al.}(2011)Hobert, Roy, and Robert}]{hobert:2011}
Hobert JP, Roy V, Robert CP (2011).
\newblock \enquote{Improving the Convergence Properties of the Data
  Augmentation Algorithm with an Application to Bayesian Mixture Modeling.}
\newblock \emph{Statist. Sci.}, \textbf{26}(3), 332--351.
\newblock \doi{10.1214/11-STS365}.
\newblock \urlprefix\url{https://doi.org/10.1214/11-STS365}.

\bibitem[{Jammalamadaka and Sarma(1988)}]{jammalamadaka:1988}
Jammalamadaka SR, Sarma Y (1988).
\newblock \enquote{A correlation coefficient for angular variables.}
\newblock \emph{Statistical theory and data analysis II}, pp. 349--364.

\bibitem[{Jammalamadaka and Sengupta(2001)}]{jammalamadaka:2001}
Jammalamadaka SR, Sengupta A (2001).
\newblock \emph{{Topics in Circular Statistics}}, volume~5.
\newblock World Scientific.

\bibitem[{Jasra \emph{et~al.}(2005)Jasra, Holmes, and Stephens}]{jasra:2005}
Jasra A, Holmes CC, Stephens DA (2005).
\newblock \enquote{{Markov Chain Monte Carlo Methods and the Label Switching
  Problem in Bayesian Mixture Modeling}.}
\newblock \emph{Statistical Science}, \textbf{20}(1), 50--67.
\newblock \doi{10.1214/088342305000000016}.

\bibitem[{Jona-Lasinio \emph{et~al.}(2012)Jona-Lasinio, Gelfand, and
  Jona-Lasinio}]{jona-lasinio:2012}
Jona-Lasinio G, Gelfand A, Jona-Lasinio M (2012).
\newblock \enquote{{Spatial Analysis of Wave Direction Data using Wrapped
  Gaussian Processes}.}
\newblock \emph{{The Annals of Applied Statistics}}, \textbf{6}(4), 1478--1498.
\newblock \doi{10.1214/12-AOAS576}.

\bibitem[{Kabsch and Sander(1983)}]{kabsch:1983}
Kabsch W, Sander C (1983).
\newblock \enquote{{Dictionary of Protein Secondary Structure: Pattern
  Recognition of Hydrogen-bonded and Geometrical Features}.}
\newblock \emph{Biopolymers}, \textbf{22}(12), 2577--2637.

\bibitem[{Khare and Hobert(2011)}]{khare:hobert:2011}
Khare K, Hobert JP (2011).
\newblock \enquote{A spectral analytic comparison of trace-class data
  augmentation algorithms and their sandwich variants.}
\newblock \emph{Ann. Statist.}, \textbf{39}(5), 2585--2606.
\newblock \doi{10.1214/11-AOS916}.
\newblock \urlprefix\url{https://doi.org/10.1214/11-AOS916}.

\bibitem[{Langrognet \emph{et~al.}(2019)Langrognet, Lebret, Poli, Iovleff,
  Auder, and Iovleff}]{pkg_Rmixmod}
Langrognet F, Lebret R, Poli C, Iovleff S, Auder B, Iovleff S (2019).
\newblock \emph{Rmixmod: Classification with Mixture Modelling}.
\newblock R package version 2.1.2.1,
  \urlprefix\url{https://CRAN.R-project.org/package=Rmixmod}.

\bibitem[{Lemieux and Faure(2009)}]{lemieux:2009}
Lemieux C, Faure H (2009).
\newblock \enquote{New perspectives on (0, s)-sequences.}
\newblock In \emph{Monte Carlo and Quasi-Monte Carlo Methods}, pp. 113--130.
  Springer.

\bibitem[{Lennox \emph{et~al.}(2009)Lennox, Dahl, Vannucci, and
  Tsai}]{lennox:2009}
Lennox KP, Dahl DB, Vannucci M, Tsai JW (2009).
\newblock \enquote{{Density Estimation for Protein Conformation Angles Using a
  Bivariate von Mises Distribution and Bayesian Nonparametrics}.}
\newblock \emph{{Journal of the American Statistical Association}},
  \textbf{104}(486), 586--596.
\newblock PMID: 20221312.

\bibitem[{Lindsay(1995)}]{lindsay:1995}
Lindsay BG (1995).
\newblock \enquote{{Mixture Models: Theory, Geometry and Applications}.}
\newblock \emph{{NSF-CBMS Regional Conference Series in Probability and
  Statistics}}, \textbf{5}, i--163.
\newblock ISSN 19355920.
\newblock \urlprefix\url{http://www.jstor.org/stable/4153184}.

\bibitem[{Lunn \emph{et~al.}(2000)Lunn, Thomas, Best, and
  Spiegelhalter}]{lunn:2000}
Lunn DJ, Thomas A, Best N, Spiegelhalter D (2000).
\newblock \enquote{WinBUGS-a Bayesian modelling framework: concepts, structure,
  and extensibility.}
\newblock \emph{Statistics and computing}, \textbf{10}(4), 325--337.

\bibitem[{Mackenze(1989)}]{Mackenzie:1989}
Mackenze PB (1989).
\newblock \enquote{{An improved hybrid Monte Carlo method}.}
\newblock \emph{Physics Letters}, \textbf{B226}(3-4), 369--371.
\newblock \doi{10.1016/0370-2693(89)91212-4}.

\bibitem[{Mardia(1972)}]{mardia:1972}
Mardia K (1972).
\newblock \emph{{Statistics of Directional Data}}.
\newblock {Probability and Mathematical Statistics: a Series of Monographs and
  Textbooks}. {Academic Press}.
\newblock ISBN 9780124711501.

\bibitem[{Mardia and Jupp(2009)}]{mardia:jupp:2009}
Mardia K, Jupp P (2009).
\newblock \emph{{Directional Statistics}}.
\newblock {Wiley Series in Probability and Statistics}. {John Wiley \& Sons}.
\newblock ISBN 9780470317815.

\bibitem[{Mardia(1975)}]{mardia:1975}
Mardia KV (1975).
\newblock \enquote{{Statistics of Directional Data}.}
\newblock \emph{{Journal of the Royal Statistical Society B (Methodological)}},
  \textbf{37}(3), 349--393.
\newblock ISSN 00359246.

\bibitem[{Mardia \emph{et~al.}(2007)Mardia, Taylor, and
  Subramaniam}]{mardia:2007}
Mardia KV, Taylor CC, Subramaniam GK (2007).
\newblock \enquote{{Protein Bioinformatics and Mixtures of Bivariate von Mises
  Distributions for Angular Data}.}
\newblock \emph{Biometrics}, \textbf{63}(2), 505--512.
\newblock ISSN 1541-0420.

\bibitem[{Meng and Dyk(1999)}]{Meng:1999}
Meng XL, Dyk DAV (1999).
\newblock \enquote{Seeking Efficient Data Augmentation Schemes via Conditional
  and Marginal Augmentation.}
\newblock \emph{Biometrika}, \textbf{86}(2), 301--320.
\newblock ISSN 00063444.
\newblock \urlprefix\url{http://www.jstor.org/stable/2673513}.

\bibitem[{Meng and Schilling(2002)}]{meng:schilling:2002}
Meng XL, Schilling S (2002).
\newblock \enquote{Warp bridge sampling.}
\newblock \emph{Journal of Computational and Graphical Statistics},
  \textbf{11}(3), 552--586.

\bibitem[{Meng and Wong(1996)}]{meng:wong:1996}
Meng XL, Wong WH (1996).
\newblock \enquote{Simulating ratios of normalizing constants via a simple
  identity: a theoretical exploration.}
\newblock \emph{Statistica Sinica}, pp. 831--860.

\bibitem[{Metropolis \emph{et~al.}(1953)Metropolis, Rosenbluth, Rosenbluth,
  Teller, and Teller}]{metropolis:1953}
Metropolis N, Rosenbluth AW, Rosenbluth MN, Teller AH, Teller E (1953).
\newblock \enquote{{Equation of State Calculations by Fast Computing
  Machines}.}
\newblock \emph{{The Journal of Chemical Physics}}, \textbf{21}(6), 1087--1092.

\bibitem[{Neal(1996)}]{neal:1996}
Neal RM (1996).
\newblock \enquote{{Priors for Infinite Networks}.}
\newblock In \emph{Bayesian Learning for Neural Networks}, pp. 29--53.
  Springer-Verlag.

\bibitem[{Neal(2011)}]{neal:2011}
Neal RM (2011).
\newblock \enquote{{MCMC Using Hamiltonian Dynamics}.}
\newblock In S~Brooks, A~Gelman, G~Jones, X~Meng (eds.), \emph{{Handbook of
  Markov Chain Monte Carlo}}, Chapman \& Hall/CRC Handbooks of Modern
  Statistical Methods, chapter~5, pp. 113--162. CRC Press.
\newblock ISBN 9781420079425.

\bibitem[{Paluszewski and Hamelryck(2010)}]{paluszewski:2010}
Paluszewski M, Hamelryck T (2010).
\newblock \enquote{Mocapy++-A toolkit for inference and learning in dynamic
  Bayesian networks.}
\newblock \emph{BMC bioinformatics}, \textbf{11}(1), 126.

\bibitem[{Papastamoulis(2016)}]{label.switching:2016}
Papastamoulis P (2016).
\newblock \enquote{{\pkg{label.switching}: An \proglang{R} Package for Dealing
  with the Label Switching Problem in MCMC Outputs}.}
\newblock \emph{Journal of Statistical Software, Code Snippets},
  \textbf{69}(1), 1--24.
\newblock \doi{10.18637/jss.v069.c01}.

\bibitem[{Plummer(2003)}]{plummer:2003}
Plummer M (2003).
\newblock \enquote{JAGS: A program for analysis of Bayesian graphical models
  using Gibbs sampling.}
\newblock In \emph{Proceedings of the 3rd international workshop on distributed
  statistical computing}, volume 124. Vienna, Austria.

\bibitem[{Plummer(2008)}]{plummer:2008}
Plummer M (2008).
\newblock \enquote{Penalized loss functions for Bayesian model comparison.}
\newblock \emph{Biostatistics}, \textbf{9}(3), 523--539.

\bibitem[{{R Core Team}(2019)}]{r}
{R Core Team} (2019).
\newblock \emph{R: A Language and Environment for Statistical Computing}.
\newblock R Foundation for Statistical Computing, Vienna, Austria.
\newblock \urlprefix\url{https://www.R-project.org/}.

\bibitem[{Rivest(1988)}]{rivest:1988}
Rivest LP (1988).
\newblock \enquote{{A Distribution for Dependent Unit Vectors}.}
\newblock \emph{Communications in Statistics-Theory and Methods},
  \textbf{17}(2), 461--483.

\bibitem[{Roberts \emph{et~al.}(2001)Roberts, Rosenthal
  \emph{et~al.}}]{roberts:2001}
Roberts GO, Rosenthal JS, \emph{et~al.} (2001).
\newblock \enquote{{Optimal Scaling for Various Metropolis-Hastings
  Algorithms}.}
\newblock \emph{{Statistical Science}}, \textbf{16}(4), 351--367.

\bibitem[{Rodr{\'\i}guez and Walker(2014)}]{rodriguez:2014}
Rodr{\'\i}guez CE, Walker SG (2014).
\newblock \enquote{{Label Switching in Bayesian Mixture Models: Deterministic
  Relabeling Strategies}.}
\newblock \emph{Journal of Computational and Graphical Statistics},
  \textbf{23}(1), 25--45.

\bibitem[{Rousseau and Mengersen(2011)}]{rousseau:2011}
Rousseau J, Mengersen K (2011).
\newblock \enquote{{Asymptotic Behaviour of the Posterior Distribution in
  Overfitted Mixture Models}.}
\newblock \emph{Journal of the Royal Statistical Society B (Statistical
  Methodology)}, \textbf{73}(5), 689--710.

\bibitem[{Schwarz(1978)}]{schwarz:1978}
Schwarz G (1978).
\newblock \enquote{{Estimating the Dimension of a Model}.}
\newblock \emph{{The Annals of Statistics}}, \textbf{6}(2), 461--464.
\newblock \doi{10.1214/aos/1176344136}.
\newblock \urlprefix\url{http://dx.doi.org/10.1214/aos/1176344136}.

\bibitem[{Singh \emph{et~al.}(2002)Singh, Hnizdo, and Demchuk}]{singh:2002}
Singh H, Hnizdo V, Demchuk E (2002).
\newblock \enquote{{Probabilistic Model for Two Dependent Circular Variables}.}
\newblock \emph{Biometrika}, \textbf{89}(3), 719--723.

\bibitem[{Spiegelhalter \emph{et~al.}(2002)Spiegelhalter, Best, Carlin, and
  van~der Linde}]{spigelhalter:2002}
Spiegelhalter DJ, Best NG, Carlin BP, van~der Linde A (2002).
\newblock \enquote{{Bayesian Measures of Model Complexity and Fit}.}
\newblock \emph{Journal of the Royal Statistical Society B (Statistical
  Methodology)}, \textbf{64}(4), 583--639.
\newblock ISSN 13697412, 14679868.
\newblock \urlprefix\url{http://www.jstor.org/stable/3088806}.

\bibitem[{{Stan Development Team}(2018)}]{stan_rpack}
{Stan Development Team} (2018).
\newblock \enquote{{RStan}: the {R} interface to {Stan}.}
\newblock R package version 2.17.3, \urlprefix\url{http://mc-stan.org/}.

\bibitem[{Stephens(2000)}]{stephens:2000}
Stephens M (2000).
\newblock \enquote{Dealing with label switching in mixture models.}
\newblock \emph{Journal of the Royal Statistical Society: Series B (Statistical
  Methodology)}, \textbf{62}(4), 795--809.

\bibitem[{Touw \emph{et~al.}(2015)Touw, Baakman, Black, te~Beek, Krieger,
  Joosten, and Vriend}]{touw:2015}
Touw WG, Baakman C, Black J, te~Beek TA, Krieger E, Joosten RP, Vriend G
  (2015).
\newblock \enquote{{A Series of PDB-related Databanks for Everyday Needs}.}
\newblock \emph{{Nucleic Acids Research}}, \textbf{43}(D1), D364--D368.

\bibitem[{Vehtari \emph{et~al.}(2018)Vehtari, Gabry, Yao, and
  Gelman}]{loo_rpack}
Vehtari A, Gabry J, Yao Y, Gelman A (2018).
\newblock \enquote{loo: Efficient leave-one-out cross-validation and WAIC for
  Bayesian models.}
\newblock R package version 2.0.0,
  \urlprefix\url{https://CRAN.R-project.org/package=loo}.

\bibitem[{Vehtari \emph{et~al.}(2015)Vehtari, Gelman, and Gabry}]{vehtari:2015}
Vehtari A, Gelman A, Gabry J (2015).
\newblock \enquote{Pareto smoothed importance sampling.}
\newblock \emph{arXiv preprint arXiv:1507.02646}.

\bibitem[{Vehtari \emph{et~al.}(2017)Vehtari, Gelman, and Gabry}]{vehtari:2017}
Vehtari A, Gelman A, Gabry J (2017).
\newblock \enquote{Practical Bayesian model evaluation using leave-one-out
  cross-validation and WAIC.}
\newblock \emph{Statistics and Computing}, \textbf{27}(5), 1413--1432.

\bibitem[{Watanabe(2013)}]{watanabe:2013}
Watanabe S (2013).
\newblock \enquote{{A Widely Applicable Bayesian Information Criterion}.}
\newblock \emph{Journal of Machine Learning Research}, \textbf{14}(Mar),
  867--897.

\bibitem[{Yu and Meng(2011)}]{Yu:2011}
Yu Y, Meng XL (2011).
\newblock \enquote{To Center or Not to Center: That Is Not the Question—An
  Ancillarity–Sufficiency Interweaving Strategy (ASIS) for Boosting MCMC
  Efficiency.}
\newblock \emph{Journal of Computational and Graphical Statistics},
  \textbf{20}(3), 531--570.
\newblock \doi{10.1198/jcgs.2011.203main}.
\newblock \eprint{https://doi.org/10.1198/jcgs.2011.203main},
  \urlprefix\url{https://doi.org/10.1198/jcgs.2011.203main}.

\bibitem[{Zhan \emph{et~al.}(2017)Zhan, Ma, Liu, and Shimizu}]{zhan:2017}
Zhan X, Ma T, Liu S, Shimizu K (2017).
\newblock \enquote{On circular correlation for data on the torus.}
\newblock \emph{Statistical Papers}, pp. 1--21.

\end{thebibliography}
%-------------------------------------------------------------------------------------------------------

\begin{appendices}

\section{The normalizing constant for von Mises cosine density} \label{append_c_vmc_proof}

\begin{prop} \label{prop_const}
The normalizing constant for the density (\ref{vmc}) is given by	
\[
C_c(\kappa_1, \kappa_2, \kappa_3) = \left[(2\pi)^2 \left\lbrace I_0(\kappa_1)I_0(\kappa_2)I_0(\kappa_3) + 2 \sum_{n=1}^\infty  I_n(\kappa_1) I_n(\kappa_2)  I_n(\kappa_3)  \right\rbrace \right]^{-1}	
\]
\end{prop}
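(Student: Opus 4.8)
The plan is to verify that the proposed constant $C_c(\kappa_1,\kappa_2,\kappa_3)$ indeed normalizes the density $(\ref{vmc})$ by directly computing the integral
\[
\int_0^{2\pi}\int_0^{2\pi} \exp\!\left[\kappa_1\cos(\psi_1-\mu_1)+\kappa_2\cos(\psi_2-\mu_2)+\kappa_3\cos(\psi_1-\mu_1-\psi_2+\mu_2)\right]\,d\psi_1\,d\psi_2
\]
and showing it equals $(2\pi)^2\{I_0(\kappa_1)I_0(\kappa_2)I_0(\kappa_3) + 2\sum_{n=1}^\infty I_n(\kappa_1)I_n(\kappa_2)I_n(\kappa_3)\}$. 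First I would substitute $\theta_1 = \psi_1-\mu_1$ and $\theta_2 = \psi_2-\mu_2$; since both inner functions are $2\pi$-periodic, the domain of integration remains $[0,2\pi)^2$ and the $\mu_i$'s disappear entirely, leaving $\int_0^{2\pi}\int_0^{2\pi}\exp[\kappa_1\cos\theta_1 + \kappa_2\cos\theta_2 + \kappa_3\cos(\theta_1-\theta_2)]\,d\theta_1\,d\theta_2$.

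The key tool is the Jacobi--Anger expansion $e^{z\cos\phi} = I_0(z) + 2\sum_{m=1}^\infty I_m(z)\cos(m\phi)$, equivalently the complex form $e^{z\cos\phi} = \sum_{m\in\Z} I_m(z) e^{im\phi}$ (using $I_{-m}=I_m$). I would expand each of the three exponential factors this way, writing $e^{\kappa_1\cos\theta_1} = \sum_{j\in\Z} I_j(\kappa_1)e^{ij\theta_1}$, $e^{\kappa_2\cos\theta_2} = \sum_{k\in\Z} I_k(\kappa_2)e^{ik\theta_2}$, and $e^{\kappa_3\cos(\theta_1-\theta_2)} = \sum_{l\in\Z} I_l(\kappa_3)e^{il(\theta_1-\theta_2)}$. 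Multiplying and integrating term by term, the $\theta_1$-integral forces $j + l = 0$ and the $\theta_2$-integral forces $k - l = 0$, each contributing a factor $2\pi$; so only terms with $j = -l$, $k = l$ survive, giving $(2\pi)^2\sum_{l\in\Z} I_{-l}(\kappa_1)I_l(\kappa_2)I_l(\kappa_3) = (2\pi)^2\sum_{l\in\Z} I_l(\kappa_1)I_l(\kappa_2)I_l(\kappa_3)$. Splitting the sum over $\Z$ into $l=0$ and $\pm l$ for $l\ge 1$ and again using $I_{-l}=I_l$ yields exactly the claimed expression, hence $C_c$ is its reciprocal. This also explains the footnote: replacing $\kappa_3$ by $-\kappa_3$ in the exponent changes $I_l(\kappa_3)$ to $I_l(-\kappa_3) = (-1)^l I_l(\kappa_3)$, which does alter the series (it becomes alternating), so the constant genuinely depends on the sign convention adopted.

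The main obstacle is justifying the term-by-term integration — i.e., interchanging the triple sum with the double integral. The cleanest route is absolute convergence: the Bessel coefficients satisfy $I_m(z) \ge 0$ for $z\ge 0$ (and in general $|I_m(z)| \le I_m(|z|)$, decaying super-exponentially in $m$ for fixed $z$), so $\sum_{j,k,l} |I_j(\kappa_1)I_k(\kappa_2)I_l(\kappa_3)|$ converges, and the integrand's modulus is bounded by a constant on the compact domain $[0,2\pi)^2$; dominated convergence (or Fubini--Tonelli applied to the absolutely convergent series of integrable functions) then licenses the interchange. A minor care point is that the $\kappa_3<0$ case must be handled by noting $|I_l(\kappa_3)| = I_l(|\kappa_3|)$, which preserves absolute convergence. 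I would also double-check that the orthogonality relation $\int_0^{2\pi} e^{in\theta}\,d\theta = 2\pi\,\mathbbm{1}\{n=0\}$ is applied with the correct index combinations in both variables, since that is where the two constraints $j=-l$ and $k=l$ come from.
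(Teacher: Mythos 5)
Your proof is correct and follows essentially the same route as the paper's: expand each exponential factor via the Jacobi--Anger/Bessel generating-function identity, justify term-by-term integration by Fubini--Tonelli using $|I_m(\kappa_3)|\le I_m(|\kappa_3|)$, and let orthogonality collapse the triple sum to the diagonal $l=m=n$. Your use of the complex-exponential form $e^{z\cos\phi}=\sum_{m\in\Z}I_m(z)e^{im\phi}$ merely streamlines the orthogonality bookkeeping that the paper carries out with real cosines and sines; the substance is identical.
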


\begin{proof}
Without loss of generality, we first assume that the mean parameters in the density (\ref{vmc}) are all zero, i.e., $\mu_1 = \mu_2 = 0$. Therefore, our objective boils down to evaluate the integral
\begin{align} \label{int_vmcos}
C_c(\kappa_1, \kappa_2, \kappa_3)^{-1} =  \mathcal{I} = \int_{0}^{2\pi} \int_{0}^{2\pi} \exp(\kappa_1 \cos x + \kappa_2 \cos y + \kappa_3 \cos (x-y))\: dx \: dy.
\end{align}

\noindent Now from equation 9.6.34 of \cite{abramowitz:1964}, it follows that
\begin{align*}
\exp(\kappa_1 \cos x)  &= I_0(\kappa_1) + 2 \sum_{l=1}^\infty I_l(\kappa_1) \cos(lx) \\
\exp(\kappa_2 \cos y)  &= I_0(\kappa_2) + 2 \sum_{m=1}^\infty I_m(\kappa_2) \cos(my) \\
\text{and } \exp(\kappa_3 \cos (x-y))  &= I_0(\kappa_3) + 2 \sum_{n=1}^\infty I_n(\kappa_3) \cos(n(x-y)).
\end{align*}

\noindent Therefore, the integrand in (\ref{int_vmcos}) can be written as
\begin{align*}
& I_0(\kappa_1)I_0(\kappa_2)I_0(\kappa_3) + 2 \{I_0(\kappa_2) + I_0(\kappa_3)\} \sum_{l=1}^\infty I_l(\kappa_1) \cos(lx) \\
& \quad + 2 \{I_0(\kappa_3) + I_0(\kappa_1)\} \sum_{m=1}^\infty I_m(\kappa_2) \cos(my) \\
& \quad + 2 \{I_0(\kappa_1) + I_0(\kappa_2)\} \sum_{n=1}^\infty I_n(\kappa_3) \cos(n(x-y)) \\
& \quad + 8 \sum_{l=1}^\infty \sum_{m=1}^\infty \sum_{n=1}^\infty I_l(\kappa_1)  I_m(\kappa_2) I_n(\kappa_3)  \cos(lx)  \cos(my)  \cos(n(x-y)). \numbereqn \label{integrad_1}
\end{align*}

\noindent Note that for any positive integer $q$,
\[
\int_{0}^{2\pi} \cos(qz) \: dz = \int_{0}^{2\pi} \sin(qz) \: dz = 0
\]
which implies, for a positive integer $n$,
\begin{align*}
\int_{0}^{2\pi} \int_{0}^{2\pi} \cos(n(x-y)) \: dx \: dy &= \int_{0}^{2\pi} \cos(nx) \: dx \int_{0}^{2\pi} \cos(ny) \: dy \\
&\quad + \int_{0}^{2\pi} \sin(nx) \: dx \int_{0}^{2\pi} \sin(ny) \: dy = 0.
\end{align*}
(Equality of the double and the iterative integrals are ensured by the Fubini theorem, which is applicable as the integrands and the range of integrals are all finite.)

\noindent Thus the (double) integrals of the second, third and fourth terms in (\ref{integrad_1}) are all zero. Hence,
\begin{align*}
\mathcal{I} &= (2\pi)^2  I_0(\kappa_1)I_0(\kappa_2)I_0(\kappa_3)  \\
&\quad + 8 \int_{0}^{2\pi} \int_{0}^{2\pi} \sum_{l=1}^\infty \sum_{m=1}^\infty \sum_{n=1}^\infty   I_l(\kappa_1) I_m(\kappa_2)  I_n(\kappa_3)  \cos(lx)  \cos(my)  \cos(n(x-y)) \:dx \:dy. \numbereqn \label{int_vmcos_2}
\end{align*}

\noindent Now, for the second term in (\ref{int_vmcos_2}), first note that
\begin{align*}
& \quad \quad \int_{0}^{2\pi} \int_{0}^{2\pi} \sum_{l=1}^\infty \sum_{m=1}^\infty \sum_{n=1}^\infty \left|I_l(\kappa_1)  I_m(\kappa_2) I_n(\kappa_3)  \cos(lx)  \cos(my)  \cos(n(x-y)) \right| \: dx \: dy \\
& \leq \int_{0}^{2\pi} \int_{0}^{2\pi} \sum_{l=1}^\infty \sum_{m=1}^\infty \sum_{n=1}^\infty I_l(\kappa_1)  I_m(\kappa_2) I_n(|\kappa_3|) \: dx \: dy \\
&=  \sum_{l=1}^\infty \sum_{m=1}^\infty \sum_{n=1}^\infty \int_{0}^{2\pi} \int_{0}^{2\pi} I_l(\kappa_1)  I_m(\kappa_2) I_n(|\kappa_3|) \: dx \: dy \quad  \text{(by Fubini-Tonelli)} \\
&= (2\pi)^2 \left( \sum_{l=1}^\infty I_l(\kappa_1)\right) \left(\sum_{m=1}^\infty  I_m(\kappa_2) \right) \left( \sum_{n=1}^\infty   I_n(|\kappa_3|) \right) \\
&< \infty
\end{align*}
where the equality in the third line follows from the Fubini-Tonelli theorem for non-negative integrands. Therefore, the Fubini theorem for general integrands can be applied to ensure interchangeability of the sums and the integrals in the second term in (\ref{int_vmcos_2}). In particular, one can write
\begin{align*}
& \int_{0}^{2\pi} \int_{0}^{2\pi} \sum_{l=1}^\infty \sum_{m=1}^\infty \sum_{n=1}^\infty   I_l(\kappa_1) I_m(\kappa_2)  I_n(\kappa_3)  \cos(lx)  \cos(my)  \cos(n(x-y)) \:dx \:dy \\
&= \sum_{l=1}^\infty \sum_{m=1}^\infty \sum_{n=1}^\infty   I_l(\kappa_1) I_m(\kappa_2)  I_n(\kappa_3) \int_{0}^{2\pi} \int_{0}^{2\pi} \cos(lx)  \cos(my)  \cos(n(x-y)) \:dx \:dy. \numbereqn \label{intsum}
\end{align*}

Now, for any positive integers $l, m, n$,
\begin{align*}
\cos(lx)  \cos(my)  \cos(n(x-y)) &= \cos(lx) \cos(nx)  \cos(my)  \cos(ny) \\
& \quad + \cos(lx) \sin(nx)  \cos(my) \sin(ny).
\end{align*}

\noindent Observe that for any two positive integers $p$ and $q$,
\begin{align*}
\int_{0}^{2\pi} \cos(pz) \cos(qz) \: dz = \pi \one_{\{p=q\}} \text{ and } \int_{0}^{2\pi} \cos(pz) \sin(qz) \: dz = 0.
\end{align*}

\noindent Therefore, for any positive integers $l, m, n$,
\[
\int_{0}^{2\pi}\int_{0}^{2\pi} \cos(lx) \cos(nx)  \cos(my)  \cos(ny) \: dx \:dy = \pi \one_{\{l = n\}} \pi \one_{\{m = n\}} = \pi^2 \one_{\{l = m = n\}}
\]
and
\[
\int_{0}^{2\pi}\int_{0}^{2\pi} \cos(lx) \sin(nx)  \cos(my) \sin(ny) \: dx \:dy = 0.
\]
which implies,
\begin{align} \label{lastterm}
\int_{0}^{2\pi} \int_{0}^{2\pi} \cos(lx)  \cos(my)  \cos(n(x-y)) \:dx \:dy  = \pi^2 \one_{\{l = m = n\}}.
\end{align}

Therefore, combining (\ref{int_vmcos_2}), (\ref{intsum}) and (\ref{lastterm}), we get
\begin{align*}
\mathcal{I} &= (2\pi)^2  I_0(\kappa_1)I_0(\kappa_2)I_0(\kappa_3) + 8 \pi^2 \sum_{l=1}^\infty \sum_{m=1}^\infty \sum_{n=1}^\infty   I_l(\kappa_1)  I_n(\kappa_3) I_m(\kappa_2)  \one_{\{l = m = n\}} \\
&= (2\pi)^2 \left\lbrace I_0(\kappa_1)I_0(\kappa_2)I_0(\kappa_3) + 2 \sum_{n=1}^\infty  I_n(\kappa_1) I_n(\kappa_2)  I_n(\kappa_3)  \right\rbrace
\end{align*}
This completes the proof.	
\end{proof}

\section{Circular variance and correlation coefficients} \label{appen_vmsin_vmocs_varcor}

\subsection{von Mises sine model} \label{appen_vmsin_varcor}

	Let $(\psi_1, \psi_2) \sim \vms(\mu_1, \mu_2, \kappa_1, \kappa_2, \kappa_3)$. Then 
		\begin{enumerate}
			\item the Fisher-Lee circular correlation coefficient (\ref{rho_fl_defn}) between $\psi_1$ and $\psi_2$ is given by 
	        \begin{equation*}
			\rho_{\fl} (\psi_1, \psi_2) 
		     = \frac{\left(\frac{1}{\bar C_s} \frac{\partial \bar C_s}{\partial \kappa_3} \right) \left(\frac{1}{\bar C_s} \frac{\partial^2 \bar C_s}{\partial \kappa_1 \partial \kappa_2} \right)}{\sqrt{\left(\frac{1}{\bar C_s} \frac{\partial^2 \bar C_s}{\partial \kappa_1^2} \right) \left(1 - \frac{1}{\bar C_s} \frac{\partial^2 \bar C_s}{\partial \kappa_1^2} \right) \left(\frac{1}{\bar C_s} \frac{\partial^2 \bar C_s}{\partial \kappa_2^2} \right) \left(1 - \frac{1}{\bar C_s} \frac{\partial^2 \bar C_s}{\partial \kappa_2^2} \right)}}.   
			\end{equation*}
			
			\item the Jammalamadaka-Sarma circular correlation coefficient (\ref{rho_js_defn}) between $\Theta$ and $\Phi$ is given by 
			\begin{equation*}
			\rho_{\js} (\psi_1, \psi_2) 
			= \frac{\frac{1}{\bar C_s} \frac{\partial \bar C_s}{\partial \kappa_3} }{\sqrt{\left(1 - \frac{1}{\bar C_s} \frac{\partial^2 \bar C_s}{\partial \kappa_1^2} \right)  \left(1 - \frac{1}{\bar C_s} \frac{\partial^2 \bar C_s}{\partial \kappa_2^2} \right)}}.   
			\end{equation*}
			
			\item the circular variance for $\psi_i$, $i = 1, 2$ is given by
			\begin{equation*}
			\var(\psi_i) = 1 - \frac{1}{\bar C_s} \frac{\partial \bar C_s}{\partial \kappa_i}.
			\end{equation*}
		\end{enumerate}	
	
	   Here $\bar C_s = 1/C_s$, where $C_s$ is the normalizing constant of the von Mises sine distribution as defined in \eqref{c_vms}. Infinite series expressions for partial derivatives of $\bar C_s$ constant are provided as follows.
	       \begin{align*}
%			\bar C_s &= 4 \pi^2 \sum_{m=0}^{\infty} \binom{2m}{m} \left(\frac{\kappa_3^2}{4\kappa_1 \kappa_2}\right)^m I_m(\kappa_1) I_m(\kappa_2) \\
			\frac{\partial \bar C_s}{\partial \kappa_1} &= 4 \pi^2  \sum_{m=0}^{\infty} \binom{2m}{m} \left(\frac{\kappa_3^2}{4\kappa_1 \kappa_2}\right)^m I_{m+1}(\kappa_1) I_m(\kappa_2)  \\
			\frac{\partial \bar C_s}{\partial \kappa_2} &= 4 \pi^2  \sum_{m=0}^{\infty} \binom{2m}{m} \left(\frac{\kappa_3^2}{4\kappa_1 \kappa_2}\right)^m I_{m}(\kappa_1) I_{m+1}(\kappa_2)  \\
			\frac{\partial \bar C_s}{\partial \kappa_3} &=  8 \pi^2  \sum_{m=1}^{\infty} m \binom{2m}{m} \frac{\kappa_3^{2m-1}}{(4\kappa_1 \kappa_2)^m} I_{m}(\kappa_1) I_{m}(\kappa_2) \label{del_C_lambda_expr} \\
			\frac{\partial^2 \bar C_s}{\partial \kappa_1^2} &= 4 \pi^2  \sum_{m=0}^{\infty} \binom{2m}{m} \left(\frac{\kappa_3^2}{4\kappa_1 \kappa_2}\right)^m     \\
			&\qquad \qquad \left(\frac{I_{m+1}(\kappa_1)}{\kappa_1} + I_{m+2}(\kappa_1)\right) I_m(\kappa_2)  \\
			\frac{\partial^2 \bar C_s}{\partial \kappa_2^2} &= 4 \pi^2  \sum_{m=0}^{\infty} \binom{2m}{m} \left(\frac{\kappa_3^2}{4\kappa_1 \kappa_2}\right)^m   \\
			& \qquad \qquad  I_m(\kappa_1) \left(\frac{I_{m+1}(\kappa_2)}{\kappa_2} + I_{m+2}(\kappa_2)\right) \\
			\frac{\partial^2 \bar C_s}{\partial \kappa_1 \: \partial \kappa_2} &= 4 \pi^2  \sum_{m=0}^{\infty} \binom{2m}{m} \left(\frac{\kappa_3^2}{4\kappa_1 \kappa_2}\right)^m  I_{m+1}(\kappa_1) I_{m+1}(\kappa_2) 
			\end{align*}   
   
\subsection{von Mises cosine model} \label{appen_vmcos_varcor}

	Let $(\psi_1, \psi_2) \sim \vmc(\mu_1, \mu_2, \kappa_1, \kappa_2, \kappa_3)$. Then 
		\begin{enumerate}
			\item the Fisher-Lee circular correlation coefficient (\ref{rho_fl_defn}) between $\psi_1$ and $\psi_2$ is given by 
	        \begin{equation*}
			\rho_{\fl} (\psi_1, \psi_2) 
		     = \frac{\left(\frac{1}{\bar C_c} \left\{\frac{\partial \bar C_c}{\partial \kappa_3} -  \frac{\partial^2 \bar C_c}{\partial \kappa_1 \partial \kappa_2} \right\} \right) \left(\frac{1}{\bar C_c} \frac{\partial^2 \bar C_c}{\partial \kappa_1 \partial \kappa_2} \right)}{\sqrt{\left(\frac{1}{\bar C_c} \frac{\partial^2 \bar C_c}{\partial \kappa_1^2} \right) \left(1 - \frac{1}{\bar C_c} \frac{\partial^2 \bar C_c}{\partial \kappa_1^2} \right) \left(\frac{1}{\bar C_c} \frac{\partial^2 \bar C_c}{\partial \kappa_2^2} \right) \left(1 - \frac{1}{\bar C_c} \frac{\partial^2 \bar C_c}{\partial \kappa_2^2} \right)}}.   
			\end{equation*}
			
			\item the Jammalamadaka-Sarma circular correlation coefficient (\ref{rho_js_defn}) between $\Theta$ and $\Phi$ is given by 
			\begin{equation*}
			\rho_{\js} (\psi_1, \psi_2) 
			= \frac{\frac{1}{\bar C_c} \left\{\frac{\partial \bar C_c}{\partial \kappa_3} -  \frac{\partial^2 \bar C_c}{\partial \kappa_1 \partial \kappa_2} \right\} }{\sqrt{\left(1 - \frac{1}{\bar C_c} \frac{\partial^2 \bar C_c}{\partial \kappa_1^2} \right)  \left(1 - \frac{1}{\bar C_c} \frac{\partial^2 \bar C_c}{\partial \kappa_2^2} \right)}}.   
			\end{equation*}
			
			\item the circular variance for $\psi_i$, $i = 1, 2$ is given by
			\begin{equation*}
			\var(\psi_i) = 1 - \frac{1}{\bar C_c} \frac{\partial \bar C_c}{\partial \kappa_i} .
			\end{equation*}
		\end{enumerate}

	   Here $\bar C_c = 1/C_c$ is the reciprocal of the von Mises cosine normalizing constant, as given in \eqref{c_vmc}. Infinite series expressions for partial derivatives of $\bar C_c$ are given as follows.
	   
	   		\begin{align*}
%	   		\bar C_c = 4\pi^2 & \left\lbrace I_0(\kappa_1)I_0(\kappa_2)I_0(\kappa_3) + 2 \sum_{m=1}^\infty  I_m(\kappa_1) I_m(\kappa_2)  I_m(\kappa_3)  \right\rbrace  \\
	   		\frac{\partial \bar C_c}{\partial \kappa_1} =  4\pi^2  & \left\{ I_1(\kappa_1) I_0(\kappa_2) I_0(\kappa_3)  +  \right.   \\
	   		& \qquad  \sum_{m = 1} ^\infty  \left.    I_m(\kappa_2) I_m(\kappa_3) \left[ I_{m+1}(\kappa_1) + I_{m-1}(\kappa_1) \right] \right\}   \\
	   		\frac{\partial \bar C_c}{\partial \kappa_2} = 4\pi^2 & \left\{  I_0(\kappa_1) I_1(\kappa_2) I_0(\kappa_3)  +  \right.   \\
	   		& \qquad  \sum_{m = 1} ^\infty  \left.    I_m(\kappa_1) I_m(\kappa_3) \left[ I_{m+1}(\kappa_2) + I_{m-1}(\kappa_2) \right] \right\}  \\
	   		\frac{\partial \bar C_c}{\partial \kappa_3} = 4\pi^2 & \left\{ I_0(\kappa_1) I_0(\kappa_2) I_1(\kappa_3)  +  \right.   \\ 
	   		&   \qquad \sum_{m = 1} ^\infty \left. I_m(\kappa_1) I_m(\kappa_2) \left[ I_{m+1}(\kappa_3) + I_{m-1}(\kappa_3) \right] \right \rbrace. \\
	   		\frac{\partial^2 \bar C_c}{\partial \kappa_1^2} = 2\pi^2 & \left\{  I_0(\kappa_2) I_0(\kappa_3)[I_0(\kappa_1) + I_2(\kappa_1)] + \right.   \\ 
	   		& \qquad  \sum_{m = 1} ^\infty  \left.   I_{m}(\kappa_2) I_{m}(\kappa_3) [I_{m-2}(\kappa_1) + 2I_{m}(\kappa_1) + I_{m+2}(\kappa_1)]   \right\} \\ 
	   		\frac{\partial^2 \bar C_c}{\partial \kappa_2^2} = 2\pi^2 & \left\{  I_0(\kappa_1) I_0(\kappa_3)[I_0(\kappa_2) + I_2(\kappa_2)] + \right.   \\ 
	   		& \qquad  \sum_{m = 1} ^\infty  \left.   I_{m}(\kappa_1) I_{m}(\kappa_3) [I_{m-2}(\kappa_2) + 2I_{m}(\kappa_2) + I_{m+2}(\kappa_2)]   \right\} \\
	   		\frac{\partial^2 \bar C_c}{\partial \kappa_1 \partial \kappa_2} = 2\pi^2 & \left\{ 2 I_1(\kappa_1) I_1(\kappa_2) I_0(\kappa_3) + \right.   \\ 
	   		&  \sum_{m = 1} ^\infty  \left.  I_m(\kappa_3) \left[ I_{m+1}(\kappa_1) + I_{m-1}(\kappa_1) \right] \left[ I_{m+1}(\kappa_2) + I_{m-1}(\kappa_2) \right]\right\} 
	   		\end{align*}

\section{Gradients} \label{grads}
For notational simplicity we shall omit the subscripts $i$ and $j$. Note that, in the sequel, $\thetab$ stands for the parameter vector for one generic component and not the entire parameter vector of all components.

\subsection{Wrapped normal models}
\begin{enumerate}
\item \textit{Univariate case.} Here $\thetab^\top = (\kappa, \mu)$, and
\begin{align*}
\frac{\partial f_{\wn}(\psi| \thetab)}{\partial \kappa} &= \frac{1}{2 \kappa^{1/2} \sqrt{2\pi}} \sum_{\omega\in \Z} \exp \left[-\frac{\kappa}{2}(\psi - \mu - 2\pi \omega)^2 \right] \left[  { 1 - \kappa (\psi - \mu - 2\pi\omega)^2} \right] \\
\frac{\partial f_{\wn}(\psi| \thetab)}{\partial \mu} &= \frac{\kappa^{3/2}}{\sqrt{2\pi}} \sum_{\omega\in \Z} \exp \left[-\frac{\kappa}{2}(\psi - \mu - 2\pi \omega)^2 \right] (\psi - \mu - 2\pi\omega).
\end{align*}

\item \textit{Bivariate case.} Here $\thetab^\top = (\kappa_1, \kappa_2, \kappa_3, \mu_1, \mu_2)$, $\psib^\top = (\psi_1, \psi_2)$ and
\begin{align*}
\frac{\partial f_{\bwn}(\psib| \thetab)}{\partial \kappa_1} &= \frac{1}{4\pi \sqrt{\kappa_{12.3}}} \: \sum_{(\omega_1, \omega_2) \in \Z^2}  E_{\omega_1, \omega_2} \left[ \kappa_2 - \kappa_{12.3} (\psi_1 - \mu_1 - 2\pi\omega_1)^2 \right] \\
\frac{\partial f_{\bwn}(\psib| \thetab)}{\partial \kappa_2} &= \frac{1}{4\pi \sqrt{\kappa_{12.3}}} \: \sum_{(\omega_1, \omega_2) \in \Z^2}  E_{\omega_1, \omega_2} \left[ \kappa_1 - \kappa_{12.3} (\psi_2 - \mu_2 - 2\pi\omega_2)^2 \right]  \\
\frac{\partial f_{\bwn}(\psib| \thetab)}{\partial \kappa_3} &= \frac{1}{2\pi \sqrt{\kappa_{12.3}}} \: \sum_{(\omega_1, \omega_2) \in \Z^2}  E_{\omega_1, \omega_2} \left[ \kappa_3 - \kappa_{12.3} (\psi_1 - \mu_1 - 2\pi\omega_1)(\psi_2 - \mu_2 - 2\pi\omega_2) \right]  \\
\frac{\partial f_{\bwn}(\psib| \thetab)}{\partial \mu_1} &= \frac{\sqrt{\kappa_{12.3}}}{2\pi } \: \sum_{(\omega_1, \omega_2) \in \Z^2}  E_{\omega_1, \omega_2} \left[ \kappa_1(\psi_1 - \mu_1 - 2\pi\omega_1)  + \kappa_3(\psi_2 - \mu_2 - 2\pi\omega_2)\right]   \\
\frac{\partial f_{\bwn}(\psib| \thetab)}{\partial \mu_2} &= \frac{\sqrt{\kappa_{12.3}}}{2\pi } \: \sum_{(\omega_1, \omega_2) \in \Z^2}  E_{\omega_1, \omega_2} \left[ \kappa_3(\psi_1 - \mu_1 - 2\pi\omega_1)  + \kappa_2(\psi_2 - \mu_2 - 2\pi\omega_2)\right]
\end{align*}
where
\begin{align*}
E_{\omega_1, \omega_2}  =   \exp  & \left[-\frac{1}{2} \left\lbrace  \kappa_1 (\psi_1 - \mu_1 - 2\pi \omega_1)^2 + \kappa_2 (\psi_2 - \mu_2 - 2\pi \omega_2)^2  \right. \right.\\
& \qquad \quad \left. \left. + 2 \kappa_3 (\psi_1 - \mu_1 - 2\pi \omega_1) (\psi_2 - \mu_2 - 2\pi \omega_2) \right\rbrace \right]
\end{align*}
and $\kappa_{12.3} = \kappa_1\kappa_2 - \kappa_3^2$.
\end{enumerate}

\subsection{von Mises models}
\begin{enumerate}
\item \textit{Univariate case.} Here $\thetab^\top = (\kappa, \mu)$ and
\begin{align*}
\frac{\partial \log f_{\vm}(\psi| \thetab)}{\partial \kappa} &= \cos(\psi - \mu) - \frac{I_1(\kappa)}{I_0(\kappa)} \\
\frac{\partial \log f_{\vm}(\psi| \thetab)}{\partial \mu} &= \kappa \sin(\psi - \mu).
\end{align*}
\item \textit{Bivariate sine model.} Here $\thetab^\top = (\kappa_1, \kappa_2, \kappa_3, \mu_1, \mu_2)$, $\psib^\top = (\psi_1, \psi_2)$ and
\begin{align*}
\frac{\partial \log f_{\vms}(\psib| \thetab)}{\partial \kappa_1} &= \cos(\psi_1 - \mu_1) - \frac{\partial \bar{C}_{s}(\kappa_1, \kappa_1, \kappa_3)/\partial \kappa_1}{\bar{C}_s(\kappa_1, \kappa_1, \kappa_3)} \\
\frac{\partial \log f_{\vms}(\psib| \thetab)}{\partial \kappa_2} &= \cos(\psi_2 - \mu_2) - \frac{\partial\bar{C}_{s}(\kappa_1, \kappa_1, \kappa_3)/\partial \kappa_2}{\bar{C}_s(\kappa_1, \kappa_1, \kappa_3)} \\
\frac{\partial \log f_{\vms}(\psib| \thetab)}{\partial \kappa_3} &= \sin(\psi_1 - \mu_1) \sin(\psi_2 - \mu_2) - \frac{\partial\bar{C}_{s}(\kappa_1, \kappa_1, \kappa_3)/\partial\kappa_3}{\bar{C}_s(\kappa_1, \kappa_1, \kappa_3)} \\
\frac{\partial \log f_{\vms}(\psib| \thetab)}{\partial \mu_1} &= \kappa_1 \sin(\psi_1 - \mu_1) - \kappa_3 \cos(\psi_1 - \mu_1) \sin(\psi_2 - \mu_2) \\
\frac{\partial \log f_{\vms}(\psib| \thetab)}{\partial \mu_2} &= \kappa_2 \sin(\psi_2 - \mu_2) - \kappa_3 \sin(\psi_1 - \mu_1) \cos(\psi_2 - \mu_2)
\end{align*}
where $\bar{C}_s(\kappa_1, \kappa_1, \kappa_3) = 1/C_s(\kappa_1, \kappa_1, \kappa_3)$ and expressions for the partial derivatives are provided in Appendix~\ref{appen_vmsin_varcor}.
% $\bar{C}'_{s,\kappa_l}(\kappa_1, \kappa_1, \kappa_3) = \partial \bar{C}_s(\kappa_1, \kappa_1, \kappa_3) / \partial \kappa_l$, for $l=1,2,3$, with
%\begin{align*}
%\bar{C}'_{s,\kappa_1}(\kappa_1, \kappa_1, \kappa_3)  = 4 \pi^2  \sum_{m=0}^{\infty} \binom{2m}{m} \left(\frac{\kappa_3^2}{4\kappa_1 \kappa_2}\right)^m I_{m+1}(\kappa_1) I_m(\kappa_2) 
%\end{align*}
%\begin{align*}
%\quad \bar{C}'_{s,\kappa_2}(\kappa_1, \kappa_1, \kappa_3) = 4 \pi^2  \sum_{m=0}^{\infty} \binom{2m}{m} \left(\frac{\kappa_3^2}{4\kappa_1 \kappa_2}\right)^m I_{m}(\kappa_1) I_{m+1}(\kappa_2)
%\end{align*}
%and
%\begin{align*}
%\bar{C}'_{s,\kappa_3}(\kappa_1, \kappa_1, \kappa_3)  = \frac{8\pi^2}{\kappa_3}  \sum_{m = 1} ^\infty   m {{2m}\choose{m}} \left(\frac{\kappa_3^2}{4 \kappa_1 \kappa_2}\right)^m I_m(\kappa_1) I_m(\kappa_2)
%\end{align*}

\item \textit{Bivariate cosine model.} Here $\thetab^\top = (\kappa_1, \kappa_2, \kappa_3, \mu_1, \mu_2)$, $\psib^\top = (\psi_1, \psi_2)$ and
\begin{align*}
\frac{\partial \log f_{\vmc}(\psib| \thetab)}{\partial \kappa_1} &= \cos(\psi_1 - \mu_1) - \frac{\partial \bar{C}_{c}(\kappa_1, \kappa_1, \kappa_3)/\partial \kappa_1}{\bar{C}_c(\kappa_1, \kappa_1, \kappa_3)} \\
\frac{\partial \log f_{\vmc}(\psib| \thetab)}{\partial \kappa_2} &= \cos(\psi_2 - \mu_2) - \frac{\partial \bar{C}_{c}(\kappa_1, \kappa_1, \kappa_3)/\partial\kappa_2}{\bar{C}_c(\kappa_1, \kappa_1, \kappa_3)} \\
\frac{\partial \log f_{\vmc}(\psib| \thetab)}{\partial \kappa_3} &= \cos(\psi_1 - \mu_1 - \psi_2 + \mu_2) - \frac{\partial \bar{C}_{c}(\kappa_1, \kappa_1, \kappa_3)/\partial\kappa_3}{\bar{C}_c(\kappa_1, \kappa_1, \kappa_3)} \\
\frac{\partial \log f_{\vmc}(\psib| \thetab)}{\partial \mu_1} &= \kappa_1 \sin(\psi_1 - \mu_1) + \kappa_3 \sin(\psi_1 - \mu_1 - \psi_2 + \mu_2) \\
\frac{\partial \log f_{\vmc}(\psib| \thetab)}{\partial \mu_2} &= \kappa_2 \sin(\psi_2 - \mu_2) - \kappa_3 \sin(\psi_1 - \mu_1 - \psi_2 + \mu_2)
\end{align*}
where $\bar{C}_c(\kappa_1, \kappa_1, \kappa_3) = 1/C_c(\kappa_1, \kappa_1, \kappa_3)$ and infinite series expressions for the partial derviatives are provided in Appendix~\ref{appen_vmcos_varcor}.

%$\bar{C}'_{c,\kappa_l}(\kappa_1, \kappa_1, \kappa_3) = \partial \bar{C}_c(\kappa_1, \kappa_1, \kappa_3) / \partial \kappa_l$, for $l=1,2,3$, with
%\begin{align*}
%\quad \bar{C}'_{c,\kappa_1}(\kappa_1, \kappa_1, \kappa_3) &= 4\pi^2 \left \lbrace I_1(\kappa_1) I_0(\kappa_2) I_0(\kappa_3)  +   \sum_{m = 1} ^\infty  I_m(\kappa_2) I_m(\kappa_3) \left[ I_{m+1}(\kappa_1) + I_{m-1}(\kappa_1) \right] \right \rbrace \\
%\quad \bar{C}'_{c,\kappa_2}(\kappa_1, \kappa_1, \kappa_3) &= 4\pi^2 \left \lbrace I_0(\kappa_1) I_1(\kappa_2) I_0(\kappa_3)  +   \sum_{m = 1} ^\infty  I_m(\kappa_1) I_m(\kappa_3) \left[ I_{m+1}(\kappa_2) + I_{m-1}(\kappa_2) \right] \right \rbrace \\
%\quad \bar{C}'_{c,\kappa_3}(\kappa_1, \kappa_1, \kappa_3) &= 4\pi^2 \left \lbrace I_0(\kappa_1) I_0(\kappa_2) I_1(\kappa_3)  +   \sum_{m = 1} ^\infty  I_m(\kappa_1) I_m(\kappa_2) \left[ I_{m+1}(\kappa_3) + I_{m-1}(\kappa_3) \right] \right \rbrace.
%\end{align*}
\end{enumerate}

\section{Trace plots for 4 component vmsin}	

\label{vmsinparamtraces}

\begin{figure}[!htpb]
	\centering 
	\subcaptionbox{}%
	[.485\linewidth]{\includegraphics[height=2.6in, width = 2.6in]{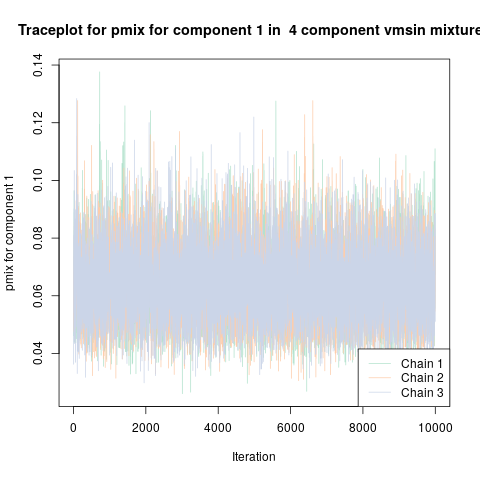}}
	\hfill
	\subcaptionbox{}%
	[.485\linewidth]{\includegraphics[height=2.6in, width = 2.6in]{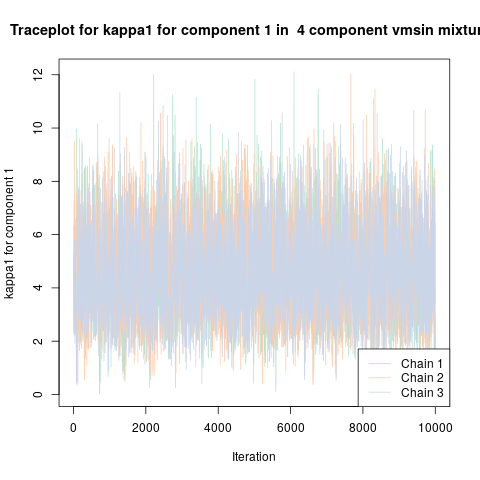}} \\
	\subcaptionbox{}%
	[.485\linewidth]{\includegraphics[height=2.6in, width = 2.6in]{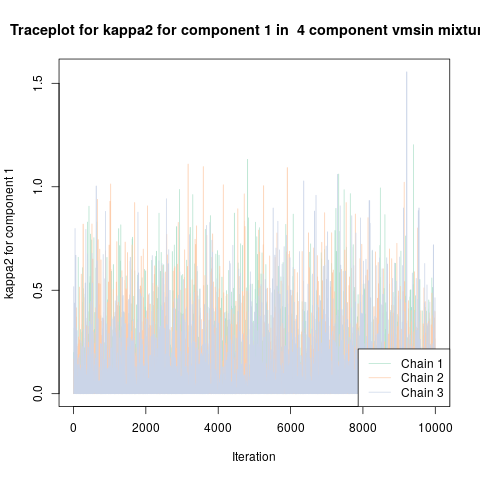}}
	\hfill
	\subcaptionbox{}%
	[.485\linewidth]{\includegraphics[height=2.6in, width = 2.6in]{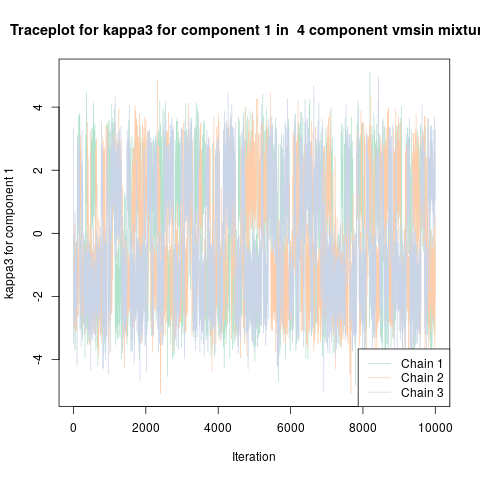}} \\
	\subcaptionbox{}
	[.485\linewidth]{\includegraphics[height=2.6in, width = 2.6in]{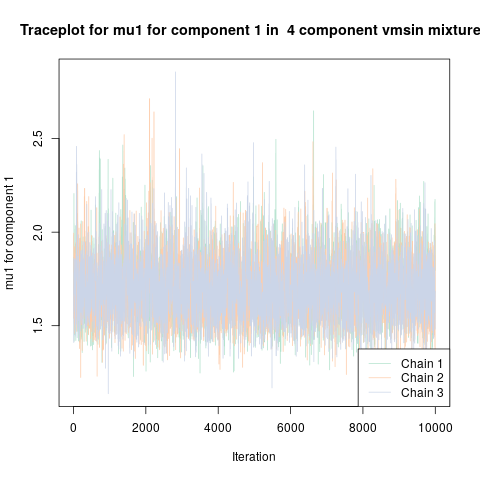}}
	\hfill
	\subcaptionbox{}%
	[.485\linewidth]{\includegraphics[height=2.6in, width = 2.6in]{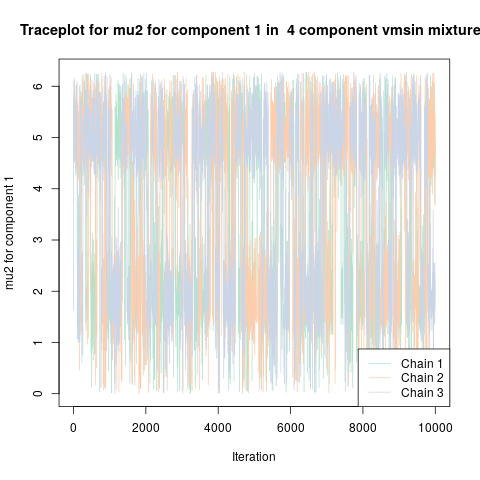}} 
	\caption{Trace plots for parameters in the first component for the Markov chain associated with the best fitted vmsin mixture model.}
	\label{vmsin_paramtrace_plots_comp1}
\end{figure}

\begin{figure}[!htpb]
	\centering 
	\subcaptionbox{}%
	[.485\linewidth]{\includegraphics[height=2.6in, width = 2.6in]{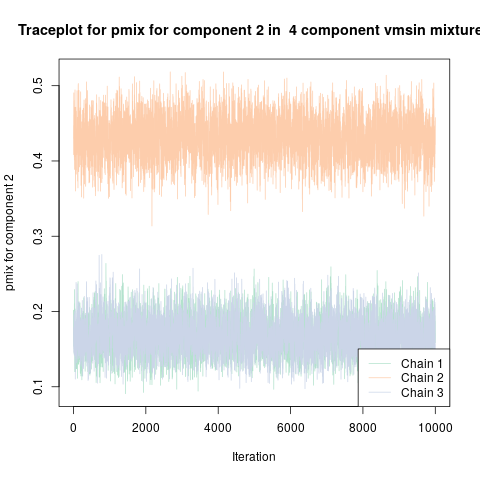}}
	\hfill
	\subcaptionbox{}%
	[.485\linewidth]{\includegraphics[height=2.6in, width = 2.6in]{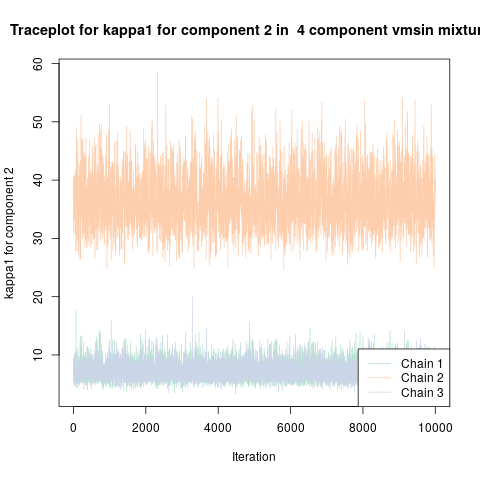}} \\
	\subcaptionbox{}%
	[.485\linewidth]{\includegraphics[height=2.6in, width = 2.6in]{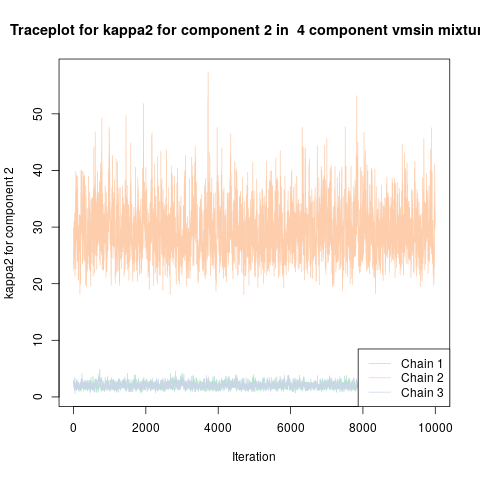}}
	\hfill
	\subcaptionbox{}%
	[.485\linewidth]{\includegraphics[height=2.6in, width = 2.6in]{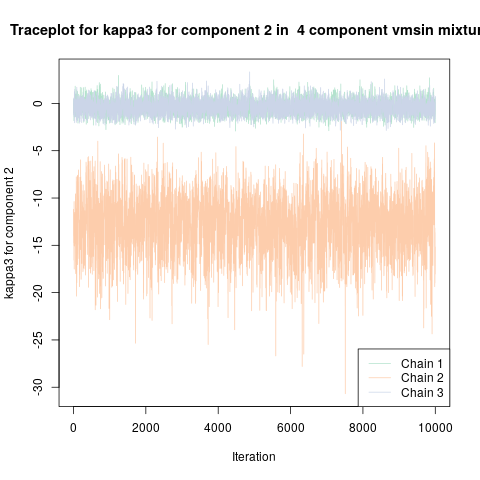}} \\
	\subcaptionbox{}
	[.485\linewidth]{\includegraphics[height=2.6in, width = 2.6in]{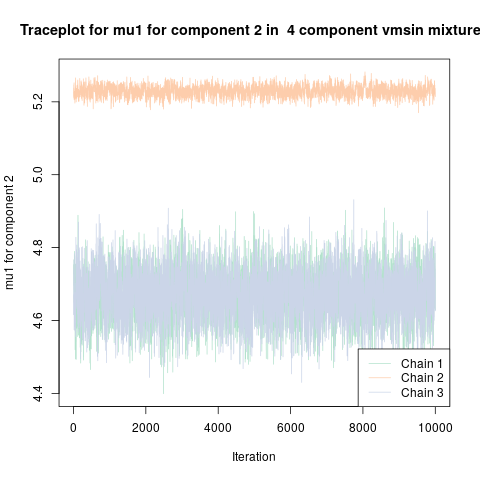}}
	\hfill
	\subcaptionbox{}%
	[.485\linewidth]{\includegraphics[height=2.6in, width = 2.6in]{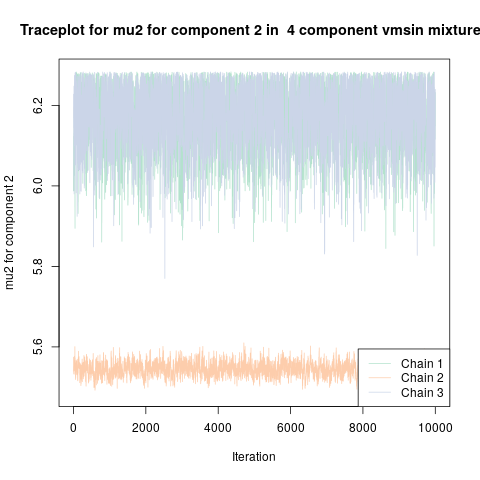}} 
	\caption{Trace plots for parameters in the second component for the Markov chain associated with the best fitted vmsin mixture model.}
	\label{vmsin_paramtrace_plots_comp2}
\end{figure}

\begin{figure}[!htpb]
	\centering 
	\subcaptionbox{}%
	[.485\linewidth]{\includegraphics[height=2.6in, width = 2.6in]{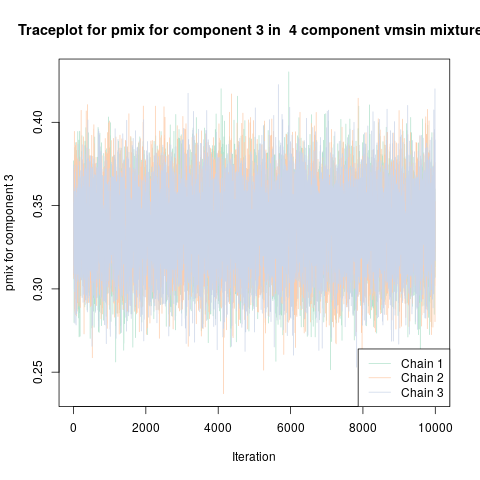}}
	\hfill
	\subcaptionbox{}%
	[.485\linewidth]{\includegraphics[height=2.6in, width = 2.6in]{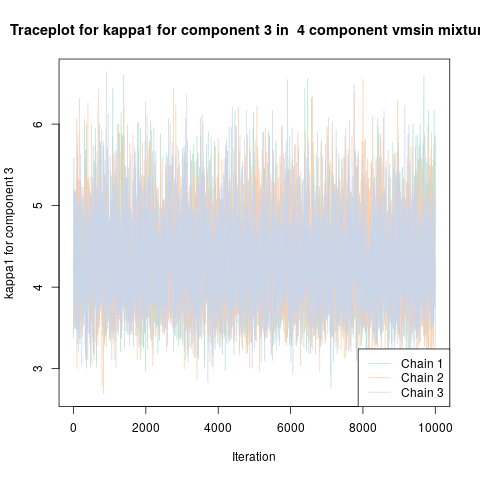}} \\
	\subcaptionbox{}%
	[.485\linewidth]{\includegraphics[height=2.6in, width = 2.6in]{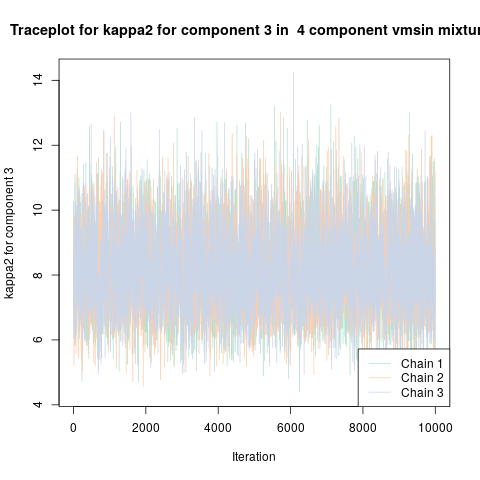}}
	\hfill
	\subcaptionbox{}%
	[.485\linewidth]{\includegraphics[height=2.6in, width = 2.6in]{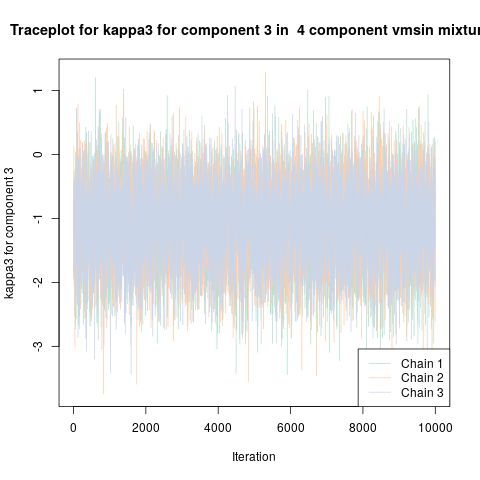}} \\
	\subcaptionbox{}
	[.485\linewidth]{\includegraphics[height=2.6in, width = 2.6in]{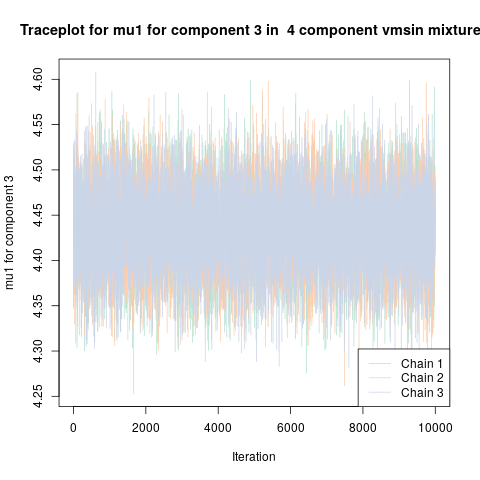}}
	\hfill
	\subcaptionbox{}%
	[.485\linewidth]{\includegraphics[height=2.6in, width = 2.6in]{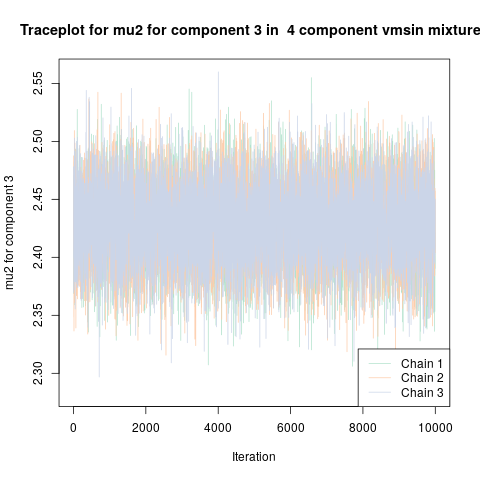}} 
	\caption{Trace plots for parameters in the third component for the Markov chain associated with the best fitted vmsin mixture model.}
	\label{vmsin_paramtrace_plots_comp3}
\end{figure}

\begin{figure}[!htpb]
	\centering 
	\subcaptionbox{}%
	[.485\linewidth]{\includegraphics[height=2.6in, width = 2.6in]{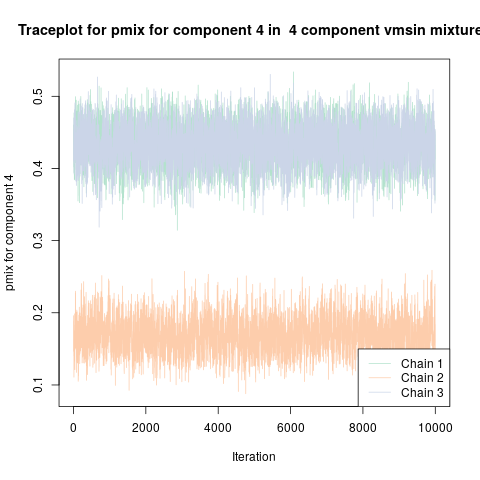}}
	\hfill
	\subcaptionbox{}%
	[.485\linewidth]{\includegraphics[height=2.6in, width = 2.6in]{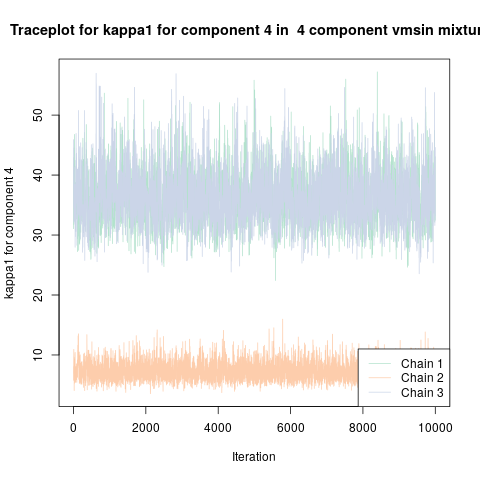}} \\
	\subcaptionbox{}%
	[.485\linewidth]{\includegraphics[height=2.6in, width = 2.6in]{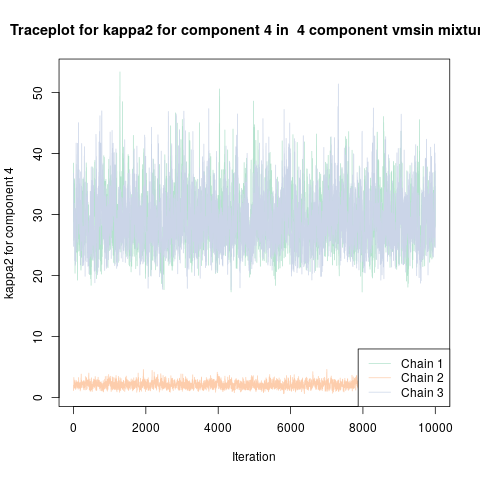}}
	\hfill
	\subcaptionbox{}%
	[.485\linewidth]{\includegraphics[height=2.6in, width = 2.6in]{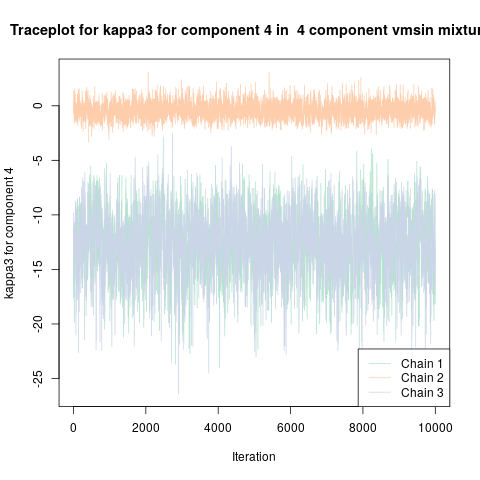}} \\
	\subcaptionbox{}
	[.485\linewidth]{\includegraphics[height=2.6in, width = 2.6in]{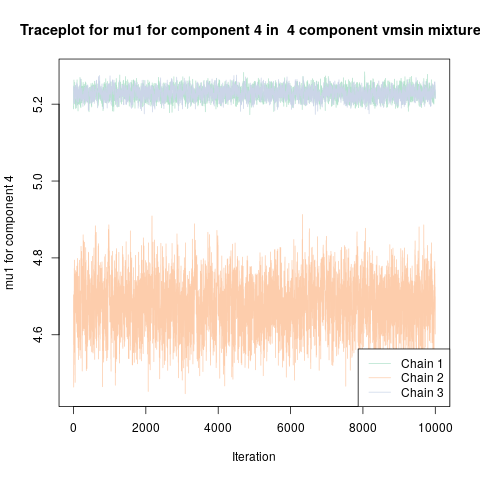}}
	\hfill
	\subcaptionbox{}%
	[.485\linewidth]{\includegraphics[height=2.6in, width = 2.6in]{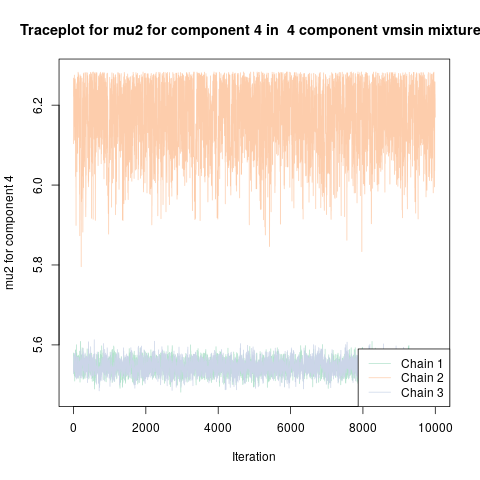}} 
	\caption{Trace plots for parameters in the fourth component for the Markov chain associated with the best fitted vmsin mixture model.}
	\label{vmsin_paramtrace_plots_comp4}
\end{figure}

\FloatBarrier

\section{Trace plots for 4 component vmsin with label switchings fixed}	

\label{vmsinparamtraces_fix}

\begin{figure}[!htpb]
	\centering 
	\subcaptionbox{}%
	[.485\linewidth]{\includegraphics[height=2.6in, width = 2.6in]{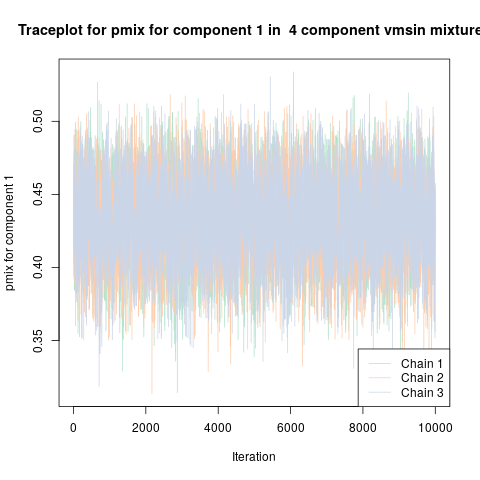}}
	\hfill
	\subcaptionbox{}%
	[.485\linewidth]{\includegraphics[height=2.6in, width = 2.6in]{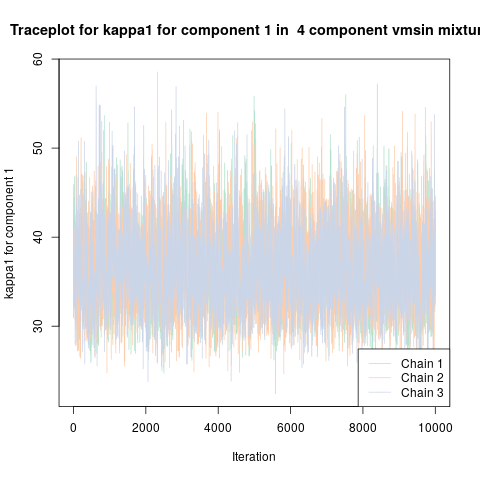}} \\
	\subcaptionbox{}%
	[.485\linewidth]{\includegraphics[height=2.6in, width = 2.6in]{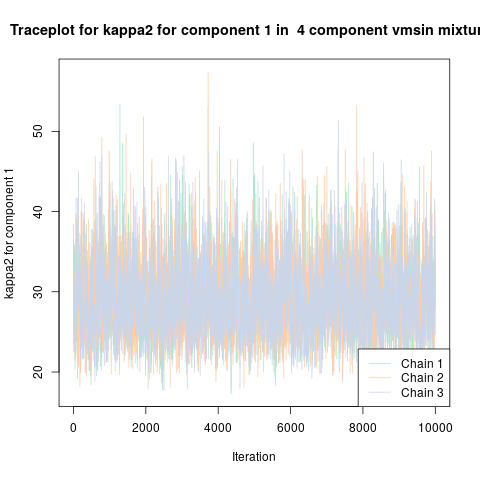}}
	\hfill
	\subcaptionbox{}%
	[.485\linewidth]{\includegraphics[height=2.6in, width = 2.6in]{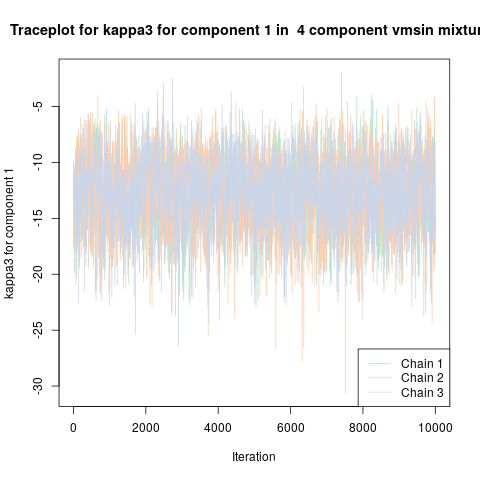}} \\
	\subcaptionbox{}
	[.485\linewidth]{\includegraphics[height=2.6in, width = 2.6in]{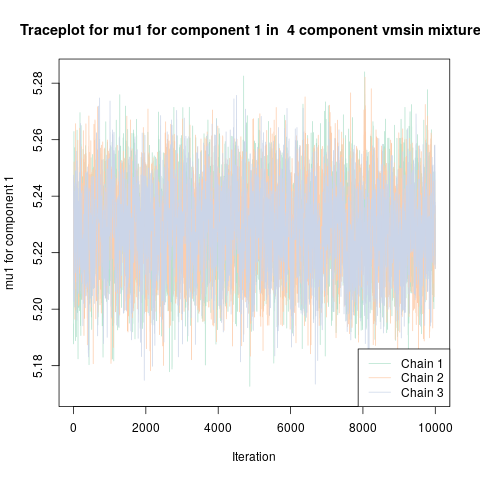}}
	\hfill
	\subcaptionbox{}%
	[.485\linewidth]{\includegraphics[height=2.6in, width = 2.6in]{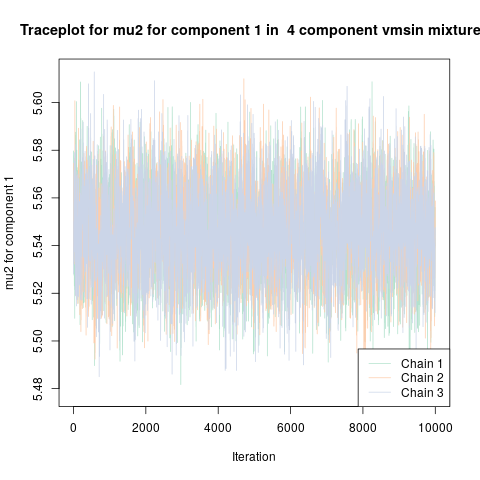}} 
	\caption{Trace plots for parameters in the first component for the Markov chain associated with the best fitted vmsin mixture model, after undoing label switching.}
	\label{vmsin_paramtrace_fix_plots_comp1}
\end{figure}

\begin{figure}[!htpb]
	\centering 
	\subcaptionbox{}%
	[.485\linewidth]{\includegraphics[height=2.6in, width = 2.6in]{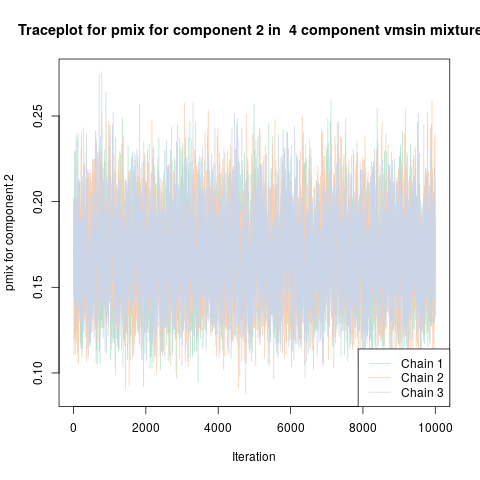}}
	\hfill
	\subcaptionbox{}%
	[.485\linewidth]{\includegraphics[height=2.6in, width = 2.6in]{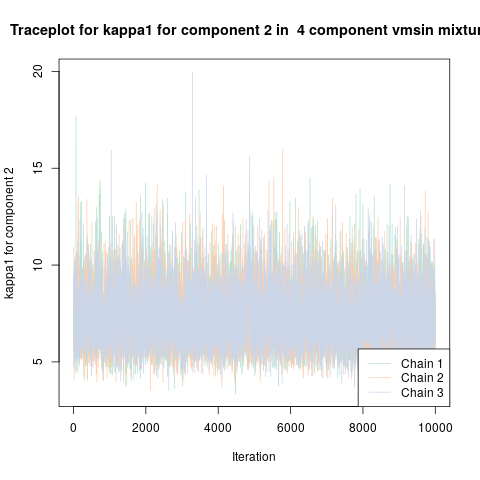}} \\
	\subcaptionbox{}%
	[.485\linewidth]{\includegraphics[height=2.6in, width = 2.6in]{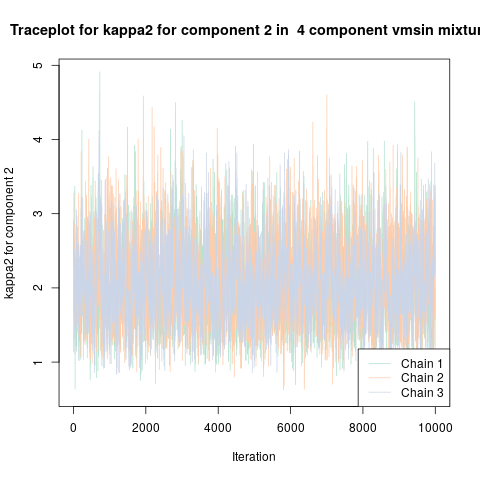}}
	\hfill
	\subcaptionbox{}%
	[.485\linewidth]{\includegraphics[height=2.6in, width = 2.6in]{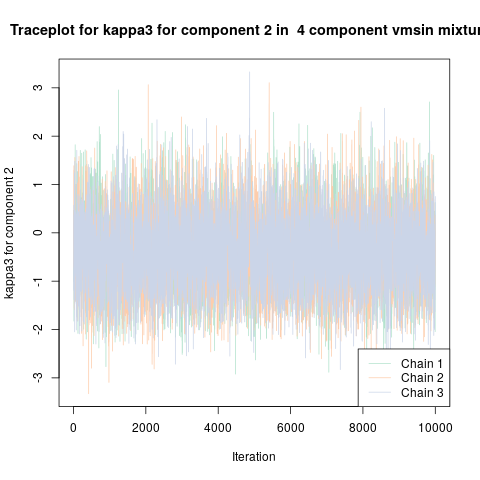}} \\
	\subcaptionbox{}
	[.485\linewidth]{\includegraphics[height=2.6in, width = 2.6in]{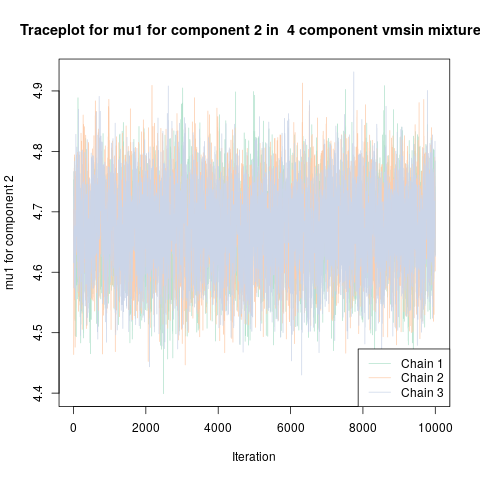}}
	\hfill
	\subcaptionbox{}%
	[.485\linewidth]{\includegraphics[height=2.6in, width = 2.6in]{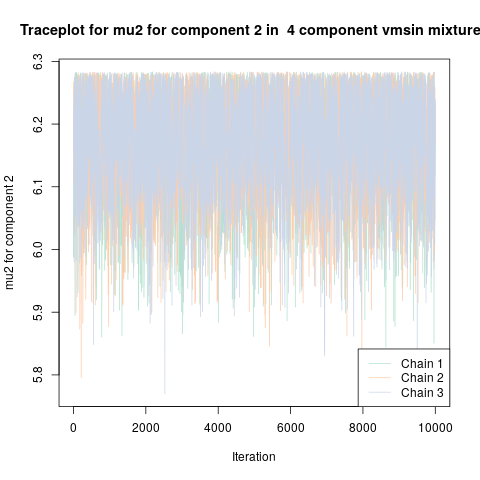}} 
	\caption{Trace plots for parameters in the second component for the Markov chain associated with the best fitted vmsin mixture model, after undoing label switching.}
	\label{vmsin_paramtrace_fix_plots_comp2}
\end{figure}

\begin{figure}[!htpb]
	\centering 
	\subcaptionbox{}%
	[.485\linewidth]{\includegraphics[height=2.6in, width = 2.6in]{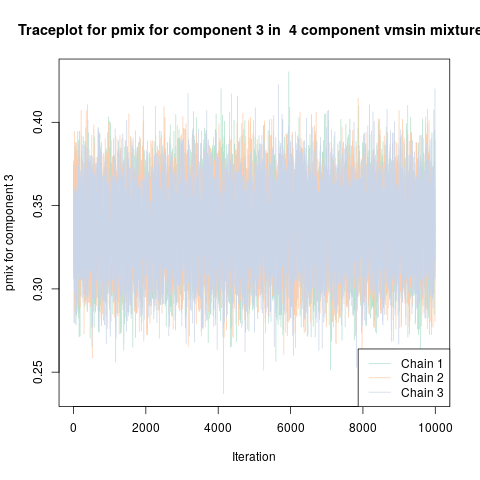}}
	\hfill
	\subcaptionbox{}%
	[.485\linewidth]{\includegraphics[height=2.6in, width = 2.6in]{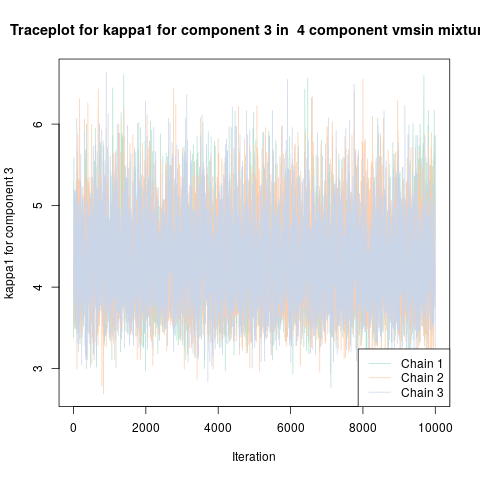}} \\
	\subcaptionbox{}%
	[.485\linewidth]{\includegraphics[height=2.6in, width = 2.6in]{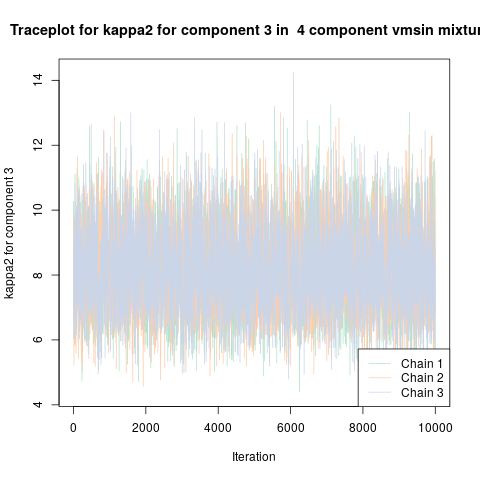}}
	\hfill
	\subcaptionbox{}%
	[.485\linewidth]{\includegraphics[height=2.6in, width = 2.6in]{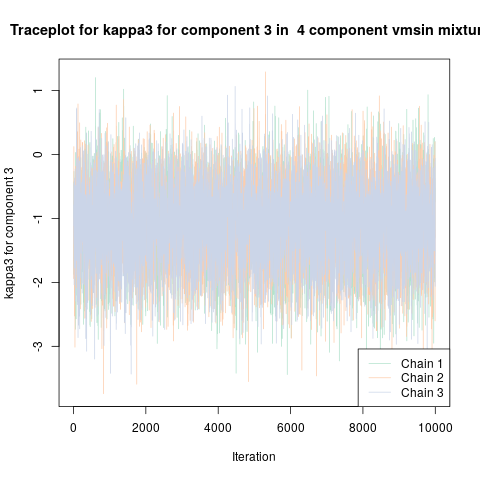}} \\
	\subcaptionbox{}
	[.485\linewidth]{\includegraphics[height=2.6in, width = 2.6in]{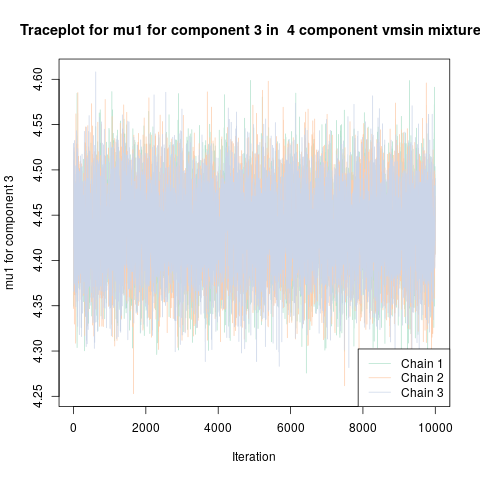}}
	\hfill
	\subcaptionbox{}%
	[.485\linewidth]{\includegraphics[height=2.6in, width = 2.6in]{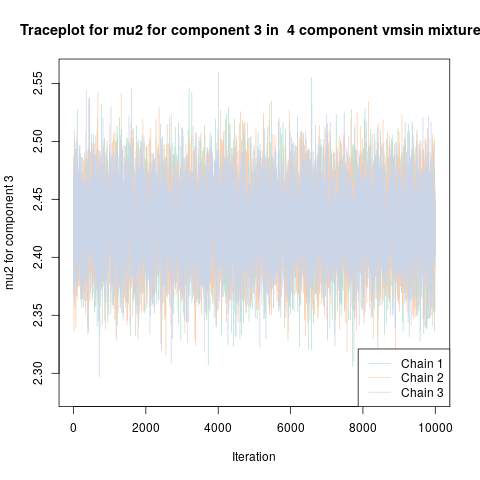}} 
	\caption{Trace plots for parameters in the third component for the Markov chain associated with the best fitted vmsin mixture model, after undoing label switching.}
	\label{vmsin_paramtrace_fix_plots_comp3}
\end{figure}

\begin{figure}[!htpb]
	\centering 
	\subcaptionbox{}%
	[.485\linewidth]{\includegraphics[height=2.6in, width = 2.6in]{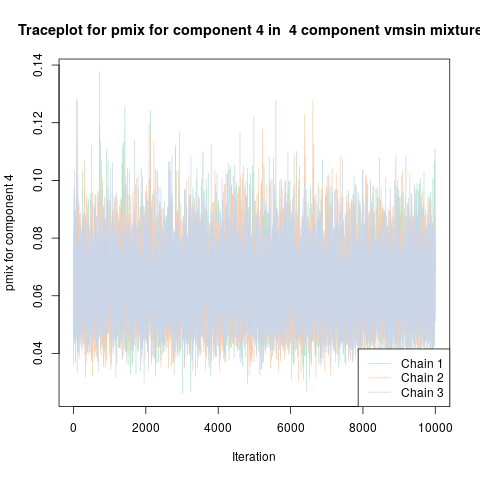}}
	\hfill
	\subcaptionbox{}%
	[.485\linewidth]{\includegraphics[height=2.6in, width = 2.6in]{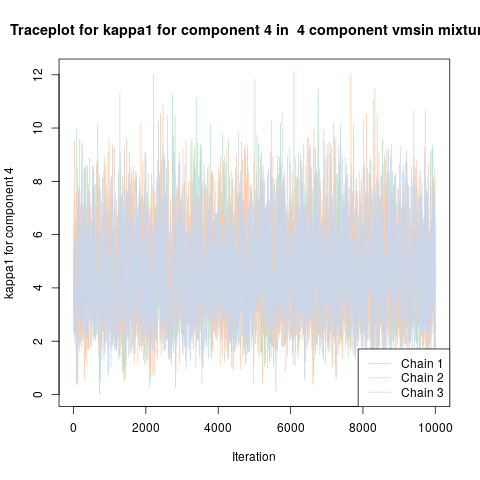}} \\
	\subcaptionbox{}%
	[.485\linewidth]{\includegraphics[height=2.6in, width = 2.6in]{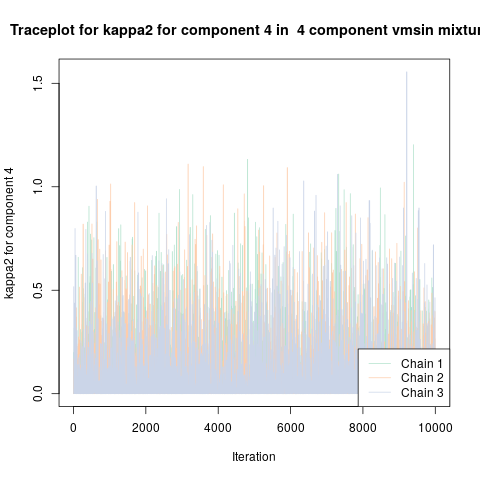}}
	\hfill
	\subcaptionbox{}%
	[.485\linewidth]{\includegraphics[height=2.6in, width = 2.6in]{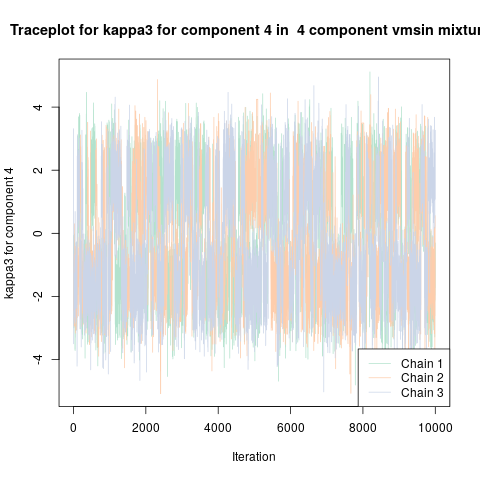}} \\
	\subcaptionbox{}
	[.485\linewidth]{\includegraphics[height=2.6in, width = 2.6in]{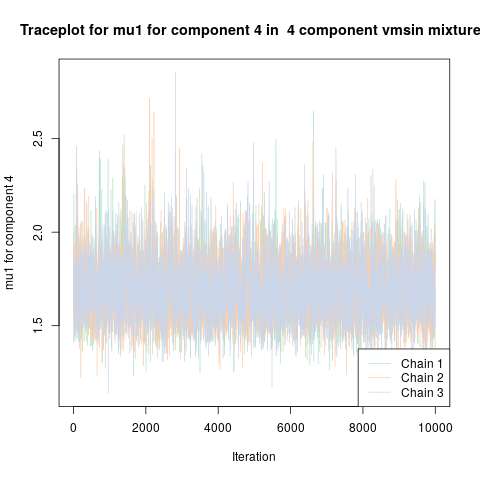}}
	\hfill
	\subcaptionbox{}%
	[.485\linewidth]{\includegraphics[height=2.6in, width = 2.6in]{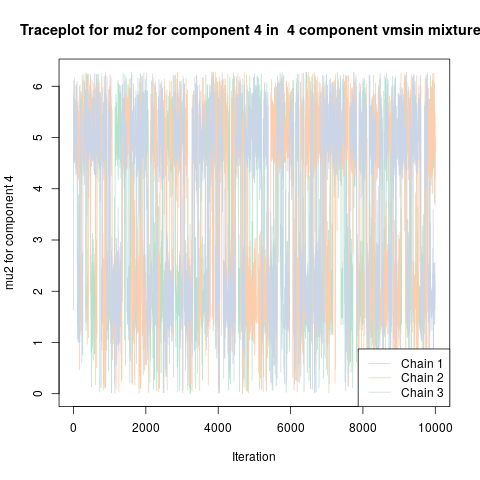}} 
	\caption{Trace plots parameters in the fourth component for the Markov chain associated with the best fitted vmsin mixture model, after undoing label switching.}
	\label{vmsin_paramtrace_fix_plots_comp4}
\end{figure}

\FloatBarrier

\end{appendices}

\end{document}